\def\chg{\empty}
\def\endchg{\empty}
\newcommand{\myvc}[1]{\vcenter{\hbox{\ensuremath{#1}}}}
\newcommand{\comment}[1]{\empty}
\def\subto{\hookrightarrow}
\def\refeq#1{(\ref{#1})}
\def\ol{\overline}
\newcommand{\Set}{\ensuremath{\St}}
\newcommand{\FF}{\ensuremath{\mathscr{F}}}
\newcommand{\KK}{\ensuremath{\mathscr{K}}}
\newcommand{\WW}{\ensuremath{\mathscr{W}}}
\newcommand{\AAA}{\ensuremath{\mathscr{A}}}
\newcommand{\SetF}{\ensuremath{\St^{\mathscr{F}}}}
\newcommand{\DD}{\ensuremath{\mathscr{D}}}
\newcommand{\fr}{\text{fr}}
\newcommand{\Gfr}{\ensuremath{\Gamma_{\kern-.5pt\fr}}}
\newcommand{\rr}{\ensuremath{\mathbb{R}}}
\newcommand{\nn}{\ensuremath{\mathbb{N}}}
\newcommand{\MM}{\ensuremath{\mathbb{M}}}
\newcommand{\TT}{\ensuremath{\mathbb{T}}}
\newcommand{\Sig}{\ensuremath{\Sigma}}
\newcommand{\tecd}{{.}}
\newcommand{\tec}{{\cdot}}
\newcommand{\nsi}[1]{\xrightarrow{\rule{1mm}{0mm}#1\rule{1mm}{0mm}}}
\newcommand{\lnsi}[1]{\xleftarrow{\rule{1mm}{0mm}#1\rule{1mm}{0mm}}}
\newcommand{\zav}{\mathbin{@}}
\newcommand{\ctv}{\mathbin{\raisebox{.5pt}{%
$\scriptstyle\Box$}}}
\DeclareMathOperator{\op}{op}
\DeclareMathOperator{\card}{card}
\DeclareMathOperator{\inr}{\mathsf{inr}}
\DeclareMathOperator{\inl}{\mathsf{inl}}
\DeclareMathOperator{\injection}{\mathsf{in}}
\DeclareMathOperator{\fold}{\mathsf{fold}}
\DeclareMathOperator{\unfold}{\mathsf{unfold}}
\DeclareMathOperator{\curry}{\mathsf{curry}}
\DeclareMathOperator{\obj}{obj}
\DeclareMathOperator{\can}{\mathsf{can}}
\DeclareMathOperator{\Fin}{\textnormal{Fin}}
\DeclareMathOperator{\St}{\textnormal{\textbf{Set}}}
\DeclareMathOperator{\id}{id}
\DeclareMathOperator{\Id}{Id}
\DeclareMathOperator{\Alg}{\textnormal{\textbf{Alg}}}
\DeclareMathOperator{\cpo}{\textnormal{CPO}}
\newcommand{\CPO}{\ensuremath{\cpo}}
\DeclareMathOperator{\cpob}{\textnormal{\textbf{CPO}}}
\DeclareMathOperator*{\colim}{colim}
\DeclareMathOperator{\Kan}{Kan}
\DeclareMathOperator{\Var}{\mathsf{Var}}
\def\spitze#1{\langle #1,#1\rangle}
\def\sem#1{[\! [ #1]\!]}
\theoremstyle{definition}
\newtheorem{notation}[thm]{Notation}
\qed\end{trivlist}}
\qed\end{trivlist}}
\edef\pst@arrowtable{\pst@arrowtable,H-H}
\def\tx@RHook{RHook }
\def\doi{7 (1:15) 2011}
\begin{document}

\title{Semantics of Higher-Order Recursion Schemes}

\author[J.~Ad\'amek]{Ji\v r\'\i\ Ad\'amek\rsuper a}
\address{{\lsuper{a,b}}Institut f\"ur Theoretische Informatik, Technische
Universit\"at Braunschweig, Germany}
\email{adamek@iti.cs.tu-bs.de, mail@stefan-milius.eu}

\author[S.~Milius]{Stefan Milius\rsuper b}
\address{\vskip-6 pt}
%\email{milius@tu-bs.de}

\author[J.~Velebil]{Ji\v r\'\i\ Velebil\rsuper c}
\address{{\lsuper c}Faculty of Electrical Engineering, Czech
Technical University of Prague, Prague, Czech Republic}
\email{velebil@math.feld.cvut.cz}
\thanks{{\lsuper c}Supported
by the grant MSM 6840770014 of the Ministry of Education of
the Czech Republic.}

\keywords{Higher-order recursion schemes, infinite $\lambda$-terms,
sets in context, rational tree}

\begin{abstract}
\noindent
Higher-order recursion schemes are recursive equations defining
new operations from given ones called ``terminals''. Every such recursion
scheme is proved to have a least interpreted semantics in
every Scott's model of $\lambda$-cal\-cu\-lus in which the terminals
are interpreted as continuous operations. For the uninterpreted
semantics based on infinite $\lambda$-terms we follow the idea of
Fiore, Plotkin and Turi and work in the category of sets in context,
which are presheaves on the category of finite sets.
\chg
Fiore \textit{et al} showed how to capture the type of variable
binding in $\lambda$-calculus by an endofunctor $H_\lambda$
and they explained simultaneous substitution of
$\lambda$-terms by proving that the
presheaf of $\lambda$-terms is an initial
$H_\lambda$-monoid. Here we work with the presheaf of
\endchg
rational infinite $\lambda$-terms and prove that this is an initial
iterative $H_\lambda$-monoid.
We conclude that every guarded higher-order recursion
scheme has a unique uninterpreted solution in this monoid.
\end{abstract}

\maketitle

\section{Introduction}

\noindent The present paper is a contribution to the study of the
semantics of recursive definitions using category-theoretic tools and
methods. Our goal is to present a category-theoretic semantics of
higher-order recursion schemes in the sense of W.~Damm~\cite{D}. To
reach this goal we apply the theory of rational monads on a
category~\KK, developed in our previous work~\cite{AMV1} in order to
formalize iteration in algebra, to the category
\[\KK=\SetF\qquad(\FF=\text{finite sets and functions})\]
of sets in context. We use the approach to $\lambda$-calculus based
on $H$-monoids in the category of sets in context due to M.~Fiore,
G.~Plotkin and D.~Turi~\cite{FPT}. Our main result is a description
of the initial
iterative $H$-monoid as the monoid of rational
$\lambda$-terms, and the fact that in this monoid every higher-order
recursion scheme has a unique uninterpreted solution.

We now explain the motivation of our paper in more detail. In the
higher-order semantics we assume a given collection~\Sig\
of existing programs of given types (that is, a
many-sorted signature of ``terminals''). One recursively
defines new typed
programs $p_1,\dots, p_n$ (forming a many-sorted signature of
``nonterminals'') using symbols from $\Sig$ and $\{p_1,\dots,p_n\}$.
If the
recursion only concerns application, we can formalize this
as a collection of equations
\begin{equation}
\label{hrjj}
p_i=f_i\qquad(i=1,\dots,n)
\end{equation}
whose right-hand sides~$f_i$ are terms in the signature of all
terminals and all non-terminals. Such collections are called
(first-order) \textit{recursion schemes} and were studied in 1970's by
various authors, e.g. B.~Courcelle, M.~Nivat and I.~Guessarian (see
the monograph~\cite{G} and references there) or S.~J.~Garland and
D.~C.~Luckham~\cite{GL}. Recently, a categorical approach to semantics
of first-order recursion schemes was presented by S.~Milius and
L.~Moss~\cite{MM}. In the present paper we take a first step in an
analogous approach to the semantics of \textit{higher-order recursion
  schemes} in which $\lambda$-abstraction is also used as one of the
operations. That is, a higher-order recursion scheme, as introduced by
W.~Damm~\cite{D} (see also the recent contributions \cite{Ae}
and~\cite{M}) is a collection of equations $p_i=f_i$ where $f_i$~are
terms using application and $\lambda$-abstraction on symbols
from~$\Sig$
and $\{p_1,\dots,p_n\}$. As in~\cite{MM}, we first study the
uninterpreted semantics, where the given system is regarded as a
purely syntactic construct. At this stage the operation symbols in
$\Sigma$ as well as $\lambda$-abstraction and application have no
interpretation on actual data. So the semantics is provided by formal
(infinite) terms. These terms can be represented by rational trees,
i.\,e., infinite trees having finitely many subtrees. Thus the
uninterpreted solution assigns to each of the recursive variables
$p_i$ in~(\ref{hrjj}) a rational tree $p_i^\dag$ such that the formal
equations become identities if we substitute~$p_i^\dag$ for~$p_i$
($i=1,\dots,n$). We assume $\alpha$-conversion (renaming of bound
variables) but no other rules in the uninterpreted semantics. We next
turn to an interpreted semantics. Here a recursion scheme is given
together with an interpretation of all symbols from $\Sigma$ as well
as $\lambda$-abstraction and application.
\chg
Following D.~Scott,
we interpret the $\lambda$-calculus on a CPO, say
$D$.
\endchg
The
symbols of $\Sigma$ are interpreted as continuous operations on~$D$,
and formal $\lambda$-abstraction and application are the actual
$\lambda$-abstraction and application in the model $D$. An interpreted
solution in~$D$ then assigns to each~$p_i$ in the context~$\Gamma$ of
all free variables in~\eqref{hrjj} an element of $\cpob(D^\Gamma,D)$
(continuously giving to each assignment of free variables in
$D^\Gamma$ an element of $D$) such that the formal equations in the
recursion scheme become identities in~$D$ when the right-hand sides
are interpreted in~$D$.

\begin{exa}
\label{jj}
The fixed-point operator~$Y$ is specified by
\[Y=\lambda f{.}f(Yf)\]
and the uninterpreted semantics is the rational tree
\begin{equation}
\label{rnjj}
Y^\dag=\quad\raisebox{25mm}{$
\pstree[nodesep=3pt,levelsep=10mm,treesep=18mm]{\TR{\lambda f}}{%
\pstree{\TR{\zav}}{%
\TR{f}\pstree{\TR{\zav}}{%
\pstree{\TR{\lambda f}}{%
\pstree{\TR{\zav}}{\TR{f}\TR{{\ddots}}}}\TR{f}}}}$}
\end{equation}
(The symbol~$\zav$ makes application explicit.)
The interpreted solution in $D$ is the least fixed point operator
(considered as an element of $D$).
\end{exa}

The above example is untyped, and indeed we are only treating the
untyped
case in the present paper since its uninterpreted semantics is
technically simpler than the
typed case; however, the basic ideas of uninterpreted semantics
are similar.
In contrast, the interpreted semantics (based on a
specified model of $\lambda$-calculus with ``terminal'' symbols
interpreted as operations) is more subtle in the untyped case.

Our main result is that every guarded higher-order recursion scheme
has a unique uninterpreted solution,
and a least interpreted one. This demonstrates that the methods for
iteration in locally finitely presentable categories developed
in~\cite{AMV1} can serve not only for first-order iteration,
when applied to endofunctors of $\Set$, but also for higher-order
iteration: it is possible to apply these methods to other
categories, here the category of sets in context.
\smallskip

\noindent
{\bf Related Work.}
This is an extended and revised version of the conference
paper~\cite{AMV2}. In addition to the material in that extended
abstract we include here the theory of
iterative monoids in a monoidal category, see Section~\ref{snc} below,
and we provide detailed proofs.

\section{Presheaves as Algebras}
\label{sd}
\setcounter{equation}{0}
\begin{notation}
\label{dj}\hfill
\begin{enumerate}[(1)]
\item
Throughout the paper a given countably infinite set~$\Var$ of
variables is assumed. Finite subsets $\Gamma\subseteq\Var$ are called
\textit{contexts} and form a full subcategory~\FF\ of~\Set. We also
assume that a (possibly empty) finitary signature~\Sig\ is given.

When speaking about formulas in context~$\Gamma$ we mean those that
have all free variables in~$\Gamma$. For example, $\lambda x\tecd yx$
is a formula in context $\Gamma=\{y,y'\}$.

\item
The category~\SetF\ of ``covariant presheaves'' on~\FF\
is well known to be equivalent to the
category
of finitary endofunctors of~\Set. Indeed, every endofunctor~$X$
yields the
presheaf~$X\restriction\FF$,
and conversely,
every presheaf~$X$ in~\SetF\ has
a left $\Kan$~extension to a finitary endofunctor of~\Set: for every
set~$M$ we have
\[X(M)=\bigcup Xi_\Gamma\bigl[X(\Gamma)\bigr]\]
where the union ranges over embeddings
$i_\Gamma\colon\Gamma\hookrightarrow M$ of contexts $\Gamma$ into~$M$,
and $Xi_\Gamma [X(\Gamma)]$ denotes the image of $Xi_\Gamma$.

\item
From now on we speak about presheaves when objects of~\SetF\ are
meant. The word
\textit{endofunctor} is
reserved for endofunctors on~\SetF\ throughout our
paper.
\end{enumerate}
\end{notation}

\begin{exa}
\label{dd}\hfill
\begin{enumerate}[(i)]
\item
The \textit{presheaf of variables}, $V$, is our name for the
embedding $\FF\hookrightarrow\Set$: $V(\Gamma)=\Gamma$. As we will
see in Section~\ref{ht}, $V$~is the unit of the monoidal operation of
substitution.

\item
\textit{Free presheaf} on one generator of context~$\Gamma$ is
our name for the representable presheaf
\[\FF(\Gamma,{-}).\]
Indeed, the Yoneda lemma states that this presheaf is freely
generated by the
element~$\id_\Gamma$ of context~$\Gamma$: for every presheaf~$X$ and
every $x\in X(\Gamma)$ there exists a unique morphism
$f\colon\FF(\Gamma, {-})\to
X$ with $f_\Gamma(\id_\Gamma)=x$. Observe that
$\FF(\Gamma,{-})$~is
naturally isomorphic to the functor $X\mapsto X^n$, where
$n=\card\Gamma$ is the power of~$\Gamma$. Consequently a free presheaf
on $k$~generators in contexts $\Gamma_1,\dots,\Gamma_k$ has the form
\[\Gamma\mapsto \Gamma^{n_1}+\dots+\Gamma^{n_k},\qquad
\text{where
$n_i=\card\Gamma_i$.}\]
This is the ``polynomial presheaf''~$X_{\Sig}$ of a signature~\Sig\
of
$k$~operation
symbols of the given arities~$n_i$.

\item

The \textit{presheaf~$F_\lambda$ of (finite)
$\lambda$-terms} is defined
via a quotient since we want to treat $\lambda$-terms always
modulo $\alpha$-conversion.
\chg
We first consider the set of all
$\lambda$-trees~$\tau$ given by the grammar
\begin{equation}
\label{add}
\tau\mathbin{{:}{:}{=}}x \mid\tau\zav\tau\mid\lambda
y\tecd\tau\qquad (x, y\in\Var).
\end{equation}
In the graphic form:\smallskip

\begin{equation}
\label{rdj}
\myvc{
\begin{pspicture}(0,0)
\pscirclebox[linewidth=.5pt,framesep=1.5pt]{$x$}
\end{pspicture}
}\qquad\text{or}\quad
\myvc{
\pstree[levelsep=10mm,treesep=10mm]{%
\TR{\raisebox{-3pt}{\rule{0pt}{12pt}}\zav}}{%
\pstree{\Tp~[tnpos=b,tnsep=5mm]{\tau}}{\Tfan}
\pstree{\Tp~[tnpos=b,tnsep=5mm]{\tau'}}{\Tfan}}
}\qquad\text{or}\qquad
\myvc{
\pstree[levelsep=10mm,treesep=10mm]{%
\TR{\raisebox{-3pt}{\rule{0pt}{12pt}}\lambda y}}{%
\pstree{\Tp~[tnpos=b,tnsep=5mm]{\tau}}{\Tfan}}
}
\end{equation}\smallskip

\noindent The notions of a free and bound variable of a $\lambda$-tree $\tau$
are defined as usual.

As explained in~\cite{FPT}, the following
approach is equivalent to defining $\lambda$-terms up to
$\alpha$-equivalence by de Bruijn levels:
We first denote by~$F'_\lambda(\Gamma)$ the set of all finite
$\lambda$-trees
with free variables in the context $\Gamma=\{x_1,\dots,x_n\}$.
\endchg
We then define the presheaf $F_\lambda$ in context $\Gamma$ by
\[F_\lambda(\Gamma)=F'_\lambda(\Gamma)/{\sim_\alpha}\]
where $\sim_\alpha$~represents
the $\alpha$-conversion: this is the least congruence with
$\lambda
y\tecd\tau\sim_\alpha\lambda
z\tecd\tau\bigl[\raisebox{2pt}{$z$}\big/
\raisebox{-2pt}{$y$}\bigr]$,
\chg where $z$ is not a free variable of $\tau$. And we
define $F_\lambda$ on morphisms $\gamma:\Gamma \to \Gamma'$ by
choosing a
term $t \in F_\lambda(\Gamma)$, relabelling all bound variables so
that they do not lie in $\Gamma'$, and denoting by
$F_\lambda\gamma(t)$ the
term obtained by relabelling every free variable $x \in \Gamma$ to
$\gamma(x) \in \Gamma'$.
\endchg

We call the congruence classes of finite $\lambda$-trees modulo
$\alpha$-conversion
\emph{finite
  $\lambda$-terms}.
\chg
(Finite $\lambda$-trees do not form a presheaf, due to possible
clashes
of
bound and free variables. For example consider the $\lambda$-tree
$$
\xy
\POS   (000,000) *+{\lambda x} = "r"
,      (000,-10) *+{\zav} = "a"
,      (-05,-20) *+{x}  = "x"
,      (005,-20) *+{y}  = "y"
%%%
\ar@{-} "r";"a"
\ar@{-} "a";"x"
\ar@{-} "a";"y"
\endxy
$$
in $F_\lambda' \{\,y\,\}$ and the function $j: \{\,y\,\} \to \Gamma$
with $x \in \Gamma$ and $j(y) = x$. Then to define the action of
$F_\lambda'$ on $j$ we must rename the bound variable $x$ to some $z
\not\in \Gamma$. But in fact, any other renaming to $z' \not\in
\Gamma$ is fine, too. So trying to define the action of $F_\lambda'$
on functions naturally forces us to consider equivalence classes
modulo $\alpha$-conversion.)
\endchg

\item
The \emph{presheaf $F_{\lambda,\Sig}$ of finite
$\lambda$-$\Sigma$-terms} is
defined analogously: in \eqref{add} we just add the term
$\sigma(\tau_1,\dots,\tau_n)$ for every $n$-ary operation symbol
$\sigma\in\Sig$, and
in~\eqref{rdj} the corresponding tree.

\item

The presheaf $T_\lambda$ of all (finite and infinite)
\emph{$\lambda$-terms} is defined analogously to $F_\lambda$. We first
denote by $T'_\lambda(\Gamma)$ the set of all trees~\eqref{rdj}
dropping the assumption of finiteness. Then we use
$\alpha$-conversion:
for infinite trees $t$ and $t'$ we write
\[t \sim_\alpha t'\]
if their
(finite) cuttings at level $k$ (with label $\bot$ for all
leaves at level $k$) are $\alpha$-equivalent in the above sense for
all $k \in \nn$.
(We can formalize this by using
$\Sigma_\bot =\Sigma\cup \{\,\bot\,\}$ with $\bot$ a
constant symbol outside of
$\Sig\cup\Var$).
The
presheaf $T_\lambda$ is defined on objects $\Gamma$ by
$T_\lambda(\Gamma) = T_\lambda'(\Gamma)/\mathord{\sim_\alpha}$ and on
morphisms $\gamma: \Gamma \to \Gamma'$ by relabellings of variables as
in~(iii). Observe that since $\Var\setminus\Gamma$ is infinite, the
relabelling of bound variables needed here causes no problem.

\item
The presheaf~$R_\lambda$ of \emph{rational $\lambda$-terms} is
also defined analogously. Recall that a tree is called
\textit{rational} if it
has up to isomorphism only finitely many subtrees. We denote by
$R_\lambda'(\Gamma)$ the set of all rational trees in
$T_\lambda'(\Gamma)$ and define a presheaf $R_\lambda$ by
$R_\lambda(\Gamma) = R_\lambda'(\Gamma)/\mathord{\sim_\alpha}$ on
objects, and by relabellings of variables (as in~(iii)) on morphisms.
Observe that, by definition, every rational $\lambda$-term $t$ is
represented by a rational $\lambda$-tree. However, $t$ can also be
represented by non-rational $\lambda$-trees---for example, if it
contains infinitely many $\lambda$'s, the $\alpha$-conversion can
introduce an infinite number of bound variables.

\item
The presheaves $T_{\lambda,\Sigma}$ (of all $\lambda$-\Sig-terms) and
$R_{\lambda,\Sig}$ (of rational $\lambda$-\Sig-terms) are obvious
modifications of~(iv) and~(v): one adds to
\eqref{add} and~\eqref{rdj} the case
$\sigma(\tau_1, \ldots, \tau_n)$ for all $n$-ary symbols $\sigma \in
\Sig$ and all (rational) $\lambda$-\Sig-trees $\tau_1, \ldots,
\tau_n$.
\end{enumerate}
\end{exa}

\begin{notation}
\label{dt}
We denote by $\delta\colon\SetF\to\SetF$ the endofunctor defined by
\[\delta X(\Gamma)=X(\Gamma+1).\]
Observe that $\delta$~preserves limits and colimits.

Note that an algebra for~$\delta$
is a
presheaf~$Y$ together with an operation $Y(\Gamma+1)\to Y(\Gamma)$
for all contexts~$\Gamma$---this is precisely the form of
$\lambda$-abstraction, where to a formula $f$ in~$Y(\Gamma+\{y\})$ we
assign $\lambda y\tecd f$ in~$Y(\Gamma)$. The other
$\lambda$-operation, application, is simply a presheaf
morphism $X\times X\to
X$, that is, a binary operation on~$X$. We put
these two together:
\end{notation}

\begin{notation}
\label{dc}
Let
$H_\lambda$~denote the endofunctor of~\SetF\ given by
\[H_\lambda X=X\times X+\delta X.\]
Thus, an algebra for~$H_\lambda$ is a presheaf~$X$ together with
operations of application $X(\Gamma)\times X(\Gamma)\to
X(\Gamma)$ and abstraction $X(\Gamma+1)\to X(\Gamma)$ for all
contexts~$\Gamma$; these operations are
compatible with the renaming of free variables.
\end{notation}

\begin{exa}
\label{dp}
The presheaves
$F_\lambda$, $T_\lambda$ and~$R_\lambda$ are algebras for~$H_\lambda$
in the obvious sense.
\end{exa}

\begin{rem}
\label{ds}\hfill
\begin{enumerate}[(i)]
\item
The slice category~$V/\SetF$ of presheaves~$X$ together with a
morphism
$i\colon V\to X$ is called the category of \textit{pointed
presheaves}. For example $F_\lambda$ is a
pointed presheaf in a canonical sense: $i^F\colon V\to F_\lambda$
takes a variable~$x$ to the term~$x$. Analogously
$i^T\colon V\to T_\lambda$
and $i^R\colon V\to R_\lambda$ are pointed presheaves, and so are
$F_{\lambda,\Sig}$, $R_{\lambda,\Sig}$ and $T_{\lambda,\Sig}$.

\item
Recall that the category~$\Alg H_\lambda$ of algebras for~$H_\lambda$
has as morphisms the usual $H_\lambda$-homomorphisms, i.e., a morphism
from $a\colon H_\lambda X\to X$ to $b\colon H_\lambda Y\to Y$ is a
natural transformation
$f\colon X\to Y$ such that $f\tec a=b\tec H_\lambda f$. Then
$\Alg H_\lambda$~is a concrete category over~\SetF\ with the forgetful
functor $(H_\lambda
X\to
X)\mapsto X$.
\end{enumerate}
\end{rem}

\begin{thm}[see \cite{FPT}]
\label{dss}
The presheaf~$F_\lambda$ of finite $\lambda$-terms is the  free
$H_\lambda$-algebra on~$V$.
\end{thm}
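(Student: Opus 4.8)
The plan is to realise $F_\lambda$ as the \emph{initial algebra} for the endofunctor $Y\mapsto V+H_\lambda Y$ on $\SetF$. This is exactly what the statement demands: an algebra for $V+H_\lambda(-)$ is an $H_\lambda$-algebra equipped with a point $V\to Y$, a morphism of such algebras is a point-preserving $H_\lambda$-homomorphism, and initiality of this algebra is precisely the universal property of the free $H_\lambda$-algebra on $V$, namely a unique point-preserving homomorphism into every $H_\lambda$-algebra under $V$.

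First I would secure existence. The category $\SetF$ is locally finitely presentable, and $H_\lambda$ is finitary: the assignment $Y\mapsto Y\times Y$ preserves filtered colimits because finite products commute with filtered colimits in $\Set$ and both are formed pointwise in $\SetF$; $\delta$ preserves all colimits by Notation~\ref{dt}; and coproducts are finitary. Hence $V+H_\lambda(-)$ is finitary, so by the standard theory its initial algebra exists and is the colimit $\colim_{n<\omega}W_n$ of the initial chain $W_0=0$, $W_{n+1}=V+H_\lambda W_n$. Here $W_1=V$, and in general $W_{n+1}(\Gamma)=\Gamma+W_n(\Gamma)\times W_n(\Gamma)+W_n(\Gamma+1)$, which I read as ``a variable, or an application $\tau\zav\tau'$ of two terms of smaller height, or an abstraction of a term of smaller height taken in the enlarged context $\Gamma+1$''.

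The core of the argument is to identify this colimit with $F_\lambda$. I would prove by induction on $n$ that $W_n(\Gamma)$ is in natural bijection with the set of $\alpha$-equivalence classes of finite $\lambda$-trees of height below $n$ with free variables in $\Gamma$, the bijection commuting with the relabelling maps $F_\lambda\gamma$. Passing to the colimit then yields $\colim_n W_n(\Gamma)=F'_\lambda(\Gamma)/{\sim_\alpha}=F_\lambda(\Gamma)$, and the three coprojections of $V+H_\lambda(-)$ are carried to the point $i^F$, the application, and the abstraction of the $H_\lambda$-algebra $F_\lambda$ of Example~\ref{dp} and Remark~\ref{ds}(i). Since an initial algebra admits a unique morphism into every other algebra, the required existence and uniqueness of point-preserving homomorphisms follow at once.

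The one genuinely delicate point, which I expect to be the main obstacle, is \emph{why the $\alpha$-quotient appears automatically}, with no explicit quotient taken at any finite stage. The reason is that binding is modelled by $\delta Y(\Gamma)=Y(\Gamma+1)$, which adjoins a single \emph{canonical} fresh variable rather than a named one; an abstraction is thus recorded as a term of the enlarged context with the bound variable always occupying this canonical fresh slot. Consequently two $\alpha$-variants $\lambda y\tecd\tau$ and $\lambda z\tecd\tau[z/y]$ are represented by the \emph{same} element of $Y(\Gamma+1)$, so the chain already lands in $\alpha$-classes. Making this watertight requires verifying that the inductive bijection is natural in $\Gamma$: one checks that relabelling a free variable along $\gamma\colon\Gamma\to\Gamma'$ commutes with the de Bruijn-level action of $\delta$ on $\gamma$, using that $\Var\setminus\Gamma$ is infinite so that bound variables may always be renamed away from the target context, exactly as in the definition of $F_\lambda$ on morphisms. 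Once this naturality is in hand, the identification of the initial algebra with $F_\lambda$ is complete.
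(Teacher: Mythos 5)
Your proposal is correct: reducing the free $H_\lambda$-algebra on $V$ to the initial algebra of $V+H_\lambda(-)$, building it as the colimit of the initial $\omega$-chain, and identifying $W_n(\Gamma)$ with $\alpha$-classes of $\lambda$-trees of bounded height (with the $\alpha$-quotient arising automatically from the de Bruijn-level reading of $\delta$) is exactly the right argument, including the one delicate point about naturality of the identification. The paper itself gives no proof of this statement (it is quoted from~\cite{FPT}), but your argument is precisely the order-dual of the paper's own proof of Theorem~\ref{ddv} for $T_\lambda$ (initial chain and colimit in place of terminal chain and limit), so it is essentially the same approach.
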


\begin{defi}[see \cite{AMV1}]
\label{do}
Given an endofunctor~$H$, an algebra $a\colon HA\to A$ is called

\begin{enumerate}[(1)]
\item
\textit{completely iterative} (cia for short)
if for every object~$X$ (of variables)
and
every (flat equation) morphism
$e\colon X\to HX+A$ there exists a unique \textit{solution} which
means a unique morphism
$e^\dag\colon X\to A$ such that the square below commutes
\begin{equation}
\label{rdd}
\vcenter{
\xymatrix@C+1pc@R-12pt{
  X
  \ar[r]^{e^\dag}
  \ar[d]_e
  &
  A
  \\
  HX + A
  \ar[r]_{He^\dag + \id}
  &
  HA + A
  \ar[u]_{[a, \id]}
}}
\end{equation}

\item
\textit{iterative} if every equation morphism $e\colon X\to HX+A$
with $X$~finitely presentable has a unique solution $e^\dag\colon
X\to A$.
\end{enumerate}
\end{defi}\smallskip

\noindent We are going to characterize finitely presentable presheaves in
Theorem~\ref{hdjj}. In practice, we are interested only in equations
using free presheaves (on polynomial endofunctors of~\Set) as~$X$, but
including the more general concept does not ``disturb'' anything as
we explain in Remark~\ref{td}.

\begin{exa}
\label{ndoa}
As proved in~\cite{nn}, Corollary~6.3, the free completely iterative
algebra for an arbitrary finitary
endofunctor~$H$ on an object~$X$ is precisely the
terminal coalgebra for $H({-})+X$. More detailed, suppose $TX$~is the
terminal coalgebra for $H({-})+X$, then its structure morphism is an
isomorphism by Lambek's Lemma and the inverse of this morphism has
the components
\[\tau^{\phantom{T}}_X\colon HTX\to TX\qquad\text{and}\qquad
\eta^T_X\colon X\to
TX\]
making~$TX$ a free cia on~$X$.

Conversely, let $\tau^{\phantom{T}}_X\colon HTX\to TX$ be a cia which
is free
on~$X$ w.r.t. the universal arrow~$\eta^T_X$. Then
$[\tau^{\phantom{T}}_X,\eta^T_X]\colon HTX+X\to TX$ is an isomorphism,
and its
inverse is the structure of the terminal coalgebra for $H({-})+X$.
\end{exa}

\begin{thm}
\label{ddv}
The presheaf~$T_\lambda$ of infinite $\lambda$-terms is the free
completely iterative $H_\lambda$-algebra on~$V$.
\end{thm}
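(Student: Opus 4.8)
The plan is to reduce the statement to the characterization of free completely iterative algebras given in Example~\ref{ndoa}: a free cia on $V$ for the functor $H_\lambda$ is precisely the terminal coalgebra for the endofunctor $H_\lambda({-})+V$. So it suffices to exhibit $T_\lambda$ together with its canonical point $i^T\colon V\to T_\lambda$ and its algebra structure as the terminal coalgebra for
\[
G({-}) = H_\lambda({-})+V = ({-})\times({-})+\delta({-})+V.
\]
First I would unfold what a $G$-coalgebra structure on $T_\lambda$ amounts to: since $H_\lambda X=X\times X+\delta X$, the inverse of the (iso) structure map should decompose an infinite $\lambda$-term, in each context $\Gamma$, into exactly one of three cases---an application $\tau\zav\tau'$ (the $X\times X$ summand), an abstraction $\lambda y\tecd\tau$ (the $\delta X$ summand, using $\delta X(\Gamma)=X(\Gamma+1)$), or a variable $x\in\Gamma$ (the $V$ summand). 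This is exactly the root-node case distinction of the grammar~\eqref{add}, so the ``unfold one layer'' map $c_\Gamma\colon T_\lambda(\Gamma)\to G(T_\lambda)(\Gamma)$ is well defined and natural, and it is manifestly a bijection with inverse $[\tau^T_\Gamma,\,i^T_\Gamma]$ assembling a term from its root. Thus $T_\lambda$ carries a $G$-coalgebra structure which is an isomorphism; by Example~\ref{ndoa} it only remains to prove \emph{terminality}.

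For terminality I would take an arbitrary $G$-coalgebra $e\colon X\to G(X)$ in $\SetF$ and construct the unique coalgebra homomorphism $h\colon X\to T_\lambda$. The natural candidate is definition by corecursion: for $x\in X(\Gamma)$ one reads off from $e_\Gamma(x)$ whether $x$ is an ``application node'' (with two successors in $X(\Gamma)$), an ``abstraction node'' (with one successor in $X(\Gamma+1)$), or a ``variable node'' (an element of $\Gamma$), and one builds the corresponding $\lambda$-tree level by level. Concretely I would define, for each depth $k$, the finite cutting of $h_\Gamma(x)$ at level $k$ by iterating $e$ a total of $k$ times, and then let $h_\Gamma(x)$ be the limit tree; naturality in $\Gamma$ follows because $e$ is a natural transformation and the tree-forming operations commute with the relabelling functions that define $T_\lambda$ on morphisms. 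That $h$ is a coalgebra homomorphism is immediate from the construction (one step of $e$ corresponds to exposing the root of $h_\Gamma(x)$), and uniqueness follows by a standard argument: any homomorphism must agree with $h$ on every finite cutting, hence on the whole infinite term.

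The main obstacle I expect is \emph{bookkeeping of variables under $\alpha$-conversion}, i.e.\ checking that the corecursive definition descends to $\alpha$-equivalence classes and really lands in the presheaf $T_\lambda$ rather than in raw $\lambda$-trees. The delicate point is the abstraction case: expanding the $\delta X$ summand produces a successor living in context $\Gamma+1$, and I must fix a fresh bound variable and verify that a different choice yields an $\alpha$-equivalent tree, so that $h_\Gamma$ is independent of these choices. This is exactly the subtlety already flagged in the text for the action of $F'_\lambda$ on morphisms, and the clean way to handle it is to work throughout with the level-$k$ cuttings and to invoke the definition of $\sim_\alpha$ on infinite trees as ``$\alpha$-equivalence of all finite cuttings.'' Since $\Var\setminus\Gamma$ is infinite, fresh variables are always available, and---as noted in the construction of $T_\lambda$---the relabelling of bound variables causes no problem; once one observes that two choices of fresh variables differ by a bijection of $\Var$ fixing $\Gamma$, naturality and well-definedness both reduce to the already-established functoriality of $T_\lambda$.
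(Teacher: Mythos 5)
Your proposal is correct and follows essentially the same route as the paper: both reduce the statement via Example~\ref{ndoa} to identifying $T_\lambda$ as the terminal coalgebra for $H_\lambda({-})+V$, and both rest on the fact that an infinite $\lambda$-term (as an $\alpha$-class) is determined by its depth-$k$ cuttings. The paper packages this by observing that the functor preserves limits of $\omega^{\op}$-chains, so its terminal coalgebra is the limit of the chain $W_{n+1}=H_\lambda W_n+V$ of depth-$n$ truncated terms; your direct corecursive construction of the unique homomorphism (iterating $e$ to produce the level-$k$ cuttings and passing to the limit tree) is precisely the induced cone into that same chain, so the two arguments coincide in substance.
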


\begin{proof}
As explained in Example~\ref{ndoa} above,
the free completely iterative algebra
for~$H_\lambda$ on~$V$ is precisely the terminal coalgebra for
$H_\lambda({-})+V$. The latter functor clearly preserves limits of
$\omega^{\op}$-chains. Consequently, its terminal coalgebra is a limit
of
the chain~$W$ with $W_0=1$ (the terminal presheaf) and
$W_{n+1}=H_{\lambda}W_n+V$, where the connecting maps are the unique
$w_0\colon
W_1\to W_0$ and $w_{n+1}=H_{\lambda}w_n+\id_V$.

\chg Observe first that the limit of $W$ is computed objectwise. So
for every context~$\Gamma$ we can identify~$W_0(\Gamma)$ with the
set~$\{\perp\}$ where ${\perp}\notin\Var$, and we have
$$
W_{n+1}(\Gamma) = W_n(\Gamma) \times W_n(\Gamma) + W_n(\Gamma + 1) +
\Gamma.
$$
An easy induction proof now shows that \endchg $W_n(\Gamma)$~can be
identified with the set of all $\lambda$-terms in context $\Gamma$ of
depth at most~$n$
having all leaves of depth~$n$ labelled by~$\perp$. And
$w_{n+1}\colon W_{n+1}\to W_n$
cuts
away the level~${n+1}$ in the trees of~$W_{n+1}(\Gamma)$, relabelling
level-$n$ leaves by~$\perp$.  With this identification we
obtain~$T_\lambda$ as a limit of~$W_n$ where the limit maps
$T_\lambda\to W_n$ cut the trees in~$T_\lambda(\Gamma)$ at level~$n$
and
relabel level-$n$ leaves by~$\perp$.
\end{proof}

\begin{exa}
\label{nedv}
The complete iterativity of the algebra~$T_\lambda$ means that we are
able to solve systems of recursive equations such as
\begin{equation}
  \label{eq:system}
  \begin{array}{rcl}
    p_1&=&p_1\zav(\lambda x\tecd p_2)\\
    p_2&=&y\zav p_1.
  \end{array}
\end{equation}
Indeed, the solution in~$T_\lambda(\{y\})$ is formed by the
$\lambda$-terms
represented by the following trees $\hat{t}_1$ and~$\hat{t}_2$:
\[
\hat{t}_1=\raisebox{25mm}{$
\pstree[nodesep=3pt,levelsep=10mm,treesep=12mm]{\TR{\zav}}{%
\pstree{\TR{\zav}}{%
\pstree{\TR{\zav}}{%
\TR{.\raisebox{4pt}{.}\raisebox{8pt}{.}}
\pstree{\TR{\lambda x}}{%
\pstree{\TR{\zav}}{%
\TR{y}{%
\pstree{\TR{\zav}}{%
\pstree{\TR{\zav}}{%
\pstree{\TR{\zav}}{
\pstree{\TR{\raisebox{-5pt}{\vdots}}}{}
}
}
\TR{\raisebox{8pt}{.}\raisebox{4pt}{.}.}
}
}
}
}
}
\pstree{\TR{\lambda x}}{%
\pstree{\TR{\zav}}{%
\TR{y}{%
\pstree{\TR{\zav}}{%
\pstree{\TR{\zav}}{%
\pstree{\TR{\zav}}{
\pstree{\TR{\raisebox{-5pt}{\vdots}}}{}
}
}
\TR{\raisebox{8pt}{.}\raisebox{4pt}{.}.}
}
}
}
}
}
\pstree{\TR{\lambda x}}{%
\pstree{\TR{\zav}}{%
\TR{y}{%
\pstree{\TR{\zav}}{%
\pstree{\TR{\zav}}{%
\pstree{\TR{\zav}}{
\pstree{\TR{\raisebox{-5pt}{\vdots}}}{}
}
}
\TR{\raisebox{8pt}{.}\raisebox{4pt}{.}.}
}
}
}
}
}
$}
\quad
\hat{t}_2=\quad
\raisebox{10mm}{$
\pstree[levelsep=10mm,treesep=10mm]{%
\TR{\raisebox{-3pt}{\rule{0pt}{12pt}}\zav}}{%
\TR{\raisebox{-8pt}{$y$}\rule{8pt}{0pt}}
\pstree{\Tp~[tnpos=b,tnsep=5mm]{\hat{t}_1}}{\Tfan}}$}
\]\vspace{-12 pt}

\noindent How is this related to the above concept of Definition~\ref{do}?
Firstly, every system of recursive equations can be flattened: a flat
system has in context~$\Gamma$ the right-hand sides of only three
types:
$p_i\zav p_j$ or $\lambda x\tecd p_i$ or a term
in~$T_\lambda(\Gamma)$. For example, we flatten the
system~\refeq{eq:system} to
\begin{equation}
  \label{eq:sysflat}
  \begin{array}{rcl}
    p_1& = & p_1\zav p_3\\
    p_2& = & p_4\zav p_1\\
    p_3& = & \lambda x\tecd p_2\\
    p_4& = & y
  \end{array}
\end{equation}
Let $\Gamma=\{y\}$ be the context of all free variables and let
$X$~be the free presheaf on generators $p_1,\dots,p_4$ of
context~$\Gamma$, see Example~\ref{dd}. \chg
Notice that even though
the recursion variables
$p_1, \ldots, p_4$ appear as constants in the
system~\refeq{eq:sysflat}, the associated  presheaf $X$ is not a
constant presheaf. Using the Yoneda lemma, the above
system~\refeq{eq:sysflat} \endchg
defines an obvious morphism
\[e\colon X\to H_\lambda X+T_\lambda\]
viz, the unique one such that $e_\Gamma(p_i)$~is the right-hand side
of the equation above. The solution
\[e^\dag\colon X\to T_\lambda\]
is the unique morphism such that $e^\dag_\Gamma$ takes~$p_i$ to the
solution in~$T_\lambda$; for example $e^\dag_\Gamma(p_1)=[\hat{t}_1]$
for the above tree~$\hat{t}_1$. We will see in Theorem~\ref{hncss}
below that equations such as~\eqref{eq:sysflat} have a unique
solution yielding rational trees.
\end{exa}

\begin{rem}
\label{nrmdv}
Given an equation morphism
\[e\colon X\to H_\lambda X+T_\lambda\]
then the solution $e^\dag\colon X\to T_\lambda$ allows us to choose,
for every element~$p$ of~$X(\Gamma)$, a tree~$\hat{t}_p$ in
${T}_\lambda(\Gamma)$ with
\[e^\dag_\Gamma(p)=[\hat{t}_p].\]
Due to the commutativity of~\eqref{rdd} for $H_\lambda X=X\times
X+\delta X$ we have three possible cases for every~$p$:

\begin{enumerate}[(a)]
\item
$e_\Gamma(p)=(p_1,p_2)$ in $X(\Gamma)\times X(\Gamma)$, then for the
operation $\tau\colon H_\lambda T_\lambda\to T_\lambda$ we have
\[[\hat{t}_p]=\tau\bigl([\hat{t}_{p_1}],[\hat{t}_{p_2}]\bigr)\]
in other words,
\[\hat{t}_p\sim_\alpha
\raisebox{7mm}{$
\pstree[nodesep=3pt,levelsep=15mm]{%
\TR{\zav}}{%
\TR{\hat{t}_{p_1}}
\TR{\hat{t}_{p_2}}}$}
\]

\item
$e_\Gamma(p)=q$ in $X(\Gamma+\{x\})$, then
\[[\hat{t}_p]=\tau\bigl([\hat{t}_q]\bigr)\]
in other words
\[\hat{t}_q\sim_\alpha
\raisebox{7mm}{$
\pstree[nodesep=3pt,levelsep=15mm]{%
\TR{\lambda x}}{%
\TR{\hat{t}_{p}}}$}
\]
or

\item
$e_\Gamma(p)$~lies in~$T_\lambda(\Gamma)$ and is represented
by~$\hat{t}_p$:
\[e^\dag_\Gamma(p)=[\hat{t}_p]=e_\Gamma(p).\]
\end{enumerate}
\end{rem}

\begin{rem}
\label{djn}
We are going to characterize the presheaf~$R_\lambda$ as a free
iterative algebra for~$H_\lambda$.
That is, in equations we admit
only presheaves~$X$ of variables
that are \textit{finitely presentable}. Recall that an object~$X$ of
a category~$\WW$ is \textit{finitely presentable} provided that its
$\hom$-functor $\WW(X,{-})$
preserves filtered colimits.
We are first going to
characterize the finitely presentable presheaves by
using the following concept:
\end{rem}

\begin{defi}[see \cite{AMV2}]
\label{ndfdjj}
A presheaf~$X$ is called
\textit{super-finitary} provided that each~$X(\Gamma)$ is finite and
there exists a nonempty context~$\Gamma_0$
\textit{generating~$X$} in the sense
that for every nonempty context~$\Gamma$ we have
\begin{equation}
\label{rdt}
X(\Gamma)=\bigcup_{\gamma\colon\Gamma_0\to\Gamma}
X\gamma\bigl[X(\Gamma_0)\bigr].
\end{equation}
\end{defi}

\begin{exa}
\label{plus}
A signature~\Sig\ defines the polynomial presheaf~$X_{\Sig}$,
see Example~\ref{dd}(ii), by
$X_{\Sig}(\Gamma)=\coprod_{\sigma\in\Sig}
\Gamma^{\mathrm{ar}(\sigma)}$. This
is a super-finitary presheaf iff \Sig~is a finite signature. Other
super-finitary presheaves are precisely the quotients
of~$X_{\Sig}$ with \Sig~finite.
\end{exa}

\begin{thm}
\label{hdjj}
A presheaf in~\SetF\ is finitely presentable iff it is
super-fi\-ni\-ta\-ry.
\end{thm}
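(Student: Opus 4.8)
The plan is to prove both implications, relying on three standard facts about the locally finitely presentable category \SetF: filtered colimits are computed objectwise, every representable $\FF(\Gamma,-)$ is finitely presentable (by Yoneda its hom-functor is the objectwise evaluation at $\Gamma$, which preserves filtered colimits), and finite colimits of finitely presentable presheaves are again finitely presentable. I will also use that \SetF, being a topos, is regular, so every epimorphism is the coequalizer of its kernel pair.

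\emph{From finitely presentable to super-finitary.} First I would write $X$ as the filtered colimit of the directed family of its finitely generated subpresheaves (the images of maps from finite coproducts of representables); this family is directed and its union is $X$, since each element of each $X(\Gamma)$ already lies in the subpresheaf it generates. Finite presentability then lets me factor $\id_X$ through one colimit injection, and as that injection is a monomorphism which has now become a split epimorphism, it is an isomorphism; hence $X$ is finitely generated, say by elements $x_1,\dots,x_k$ with $x_l\in X(\Delta_l)$. Finite values follow at once: there is an epimorphism $\coprod_l\FF(\Delta_l,-)\to X$ whose domain has value $\coprod_l\FF(\Delta_l,\Gamma)$ at $\Gamma$, which is finite because hom-sets in \FF\ are finite, and epimorphisms are objectwise surjective. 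To obtain a single generating context I would pick a nonempty $\Gamma_0$ with $\card\Gamma_0\ge\max_l\card\Delta_l$ together with injections $\iota_l\colon\Delta_l\hookrightarrow\Gamma_0$; any $z\in X(\Gamma)$ with $\Gamma$ nonempty has the form $X\gamma(x_l)$, and since $\iota_l$ is injective and $\Gamma$ nonempty, $\gamma$ extends along $\iota_l$ to some $\gamma'\colon\Gamma_0\to\Gamma$, giving $z=X\gamma'\bigl(X\iota_l(x_l)\bigr)$. This is exactly the generating condition~\eqref{rdt}, so $X$ is super-finitary.

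\emph{From super-finitary to finitely presentable.} By Example~\ref{plus} a super-finitary $X$ is a quotient of a polynomial presheaf of a finite signature; concretely I would build an epimorphism $q\colon P\to X$ whose domain is a \emph{finite} coproduct of representables by taking one copy of $\FF(\Gamma_0,-)$ for each element of the finite set $X(\Gamma_0)$ and---since the generating condition only concerns nonempty contexts---one copy of $\FF(\emptyset,-)$ for each of the finitely many elements of $X(\emptyset)$. Then $P$ is finitely presentable, and by regularity of \SetF\ the presheaf $X$ is the coequalizer of the kernel pair $K\rightrightarrows P$ of $q$. Since finite coproducts of representables are finitely presentable, it suffices to show that $K$ is finitely generated: any epimorphism $B\to K$ with $B$ a finite coproduct of representables then exhibits $X$ as the coequalizer of a pair $B\rightrightarrows P$ between finitely presentable presheaves, whence $X$ is finitely presentable.

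\emph{The main obstacle} is precisely this finite generation of $K$. Here $K(\Gamma)$ consists of pairs of elements of $P(\Gamma)$ identified by $q$, that is, pairs $(\alpha,\beta)$ of maps out of generating contexts with $X\alpha(x_l)=X\beta(x_m)$; it has finite values, being a subpresheaf of $P\times P$. The difficulty is that such an identification holds in $X(\Gamma)$ yet need not hold in any small context, because $X$ applied to an inclusion of contexts need not be injective. The observation that removes the obstacle is that the inclusion $\iota\colon\Delta\hookrightarrow\Gamma$ of a \emph{nonempty} subcontext is a split monomorphism (pick any retraction $\Gamma\to\Delta$), so $X\iota$ is injective. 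Applying this to $\Delta$ the union of the images of $\alpha$ and $\beta$, whose cardinality is at most $2\max_l\card\Delta_l$, I can corestrict $\alpha,\beta$ to $\Delta$ and conclude that the identification already holds there; the degenerate case $\Delta=\emptyset$ occurs only for generators sitting in context $\emptyset$, where the relation is immediate. Thus every element of $K$ arises, via the inclusion of its image, from an element of $K(\Delta)$ with $\card\Delta$ bounded, and using bijections between contexts of equal size to reduce to finitely many representative contexts, together with the finiteness of each $K(\Delta)$, we find that $K$ is generated by finitely many elements. This completes the harder implication.
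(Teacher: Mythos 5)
Your proof of the easier direction (finitely presentable $\Rightarrow$ super-finitary) is essentially the paper's: both express $X$ as a directed union of subpresheaves generated by finitely many elements, factor $\id_X$ through one stage of the union, and then absorb the generators' contexts into a single nonempty $\Gamma_0$ using that an injection into a nonempty context extends along any map. For the harder direction your route is genuinely different. The paper exhibits $X$ directly as a finite colimit of the representables $\FF(\Gamma,-)$ with $\Gamma\subseteq\Gamma_0+\Gamma_0$ and verifies the colimit property by hand: collective epimorphicity plus a three-case zig-zag analysis. You instead present $X$ by generators and relations: a finite coproduct of representables $P\twoheadrightarrow X$, regularity of the topos \SetF, and finite generation of the kernel pair $K$. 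The two arguments share their one nontrivial insight --- an identification $Xf(a)=Xf'(a')$ holding in an arbitrary context already holds in a subcontext of at most $2\card\Gamma_0$ variables, because the inclusion of a \emph{nonempty} subcontext is split monic and $X$ applied to a split mono is injective; this is exactly the paper's factorization through $\Gamma_1$ in~\eqref{nnrdss}. Your version buys a cleaner conceptual statement (finitely presentable $=$ coequalizer of a finitely generated kernel pair of a finitely generated cover) at the price of invoking regularity; the paper's is more elementary but requires explicit zig-zag bookkeeping.

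One step as written is wrong, though easily repaired. In your degenerate case $\Delta=\emptyset$ --- both legs of the pair coming from copies of $\FF(\emptyset,-)$, indexed by $a,b\in X(\emptyset)$ --- the relation is \emph{not} immediate: for the empty map $t_\Gamma\colon\emptyset\to\Gamma$ with $\Gamma\neq\emptyset$, the identity $Xt_\Gamma(a)=Xt_\Gamma(b)$ does not force $a=b$, while $K(\emptyset)$ contains only diagonal pairs (since $\FF(\Gamma_0,\emptyset)=\emptyset$), so such an element of $K(\Gamma)$ need not lie in the image of $K(\emptyset)$. This is exactly the configuration the paper isolates as its Case~1. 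The fix is your own device: when $\Gamma\neq\emptyset$, enlarge $\Delta$ by one point of $\Gamma$ so that it is nonempty; the inclusion $\Delta\hookrightarrow\Gamma$ is then split monic and the identification descends to $K(\Delta)$ with $\card\Delta$ still bounded by $2\card\Gamma_0$. With that adjustment the argument is complete.
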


\proof
(1)\ Let $X$~be a super-finitary presheaf and let $\Gamma_0$~be a context
of $n$~variables generating~$X$. We prove that $X$~is a finite
colimit of representables. Since representables are (due to Yoneda
lemma) clearly finitely presentable, this proves finite presentability
of~$X$.

Form the finite diagram of all presheaves
\[Z_a=\FF(\Gamma,{-})\]
where $\Gamma\subseteq\Gamma_0+\Gamma_0$
is a context of at most $2n$~variables\footnote{The reason why we
need $2n$ variables will become clear in~\eqref{nnrdss} below.}
and $a\in
X(\Gamma)$. The connecting morphisms are the Yoneda transformations
\[Yf\colon Z_a\to Z_{a'}\qquad\text{for $a\in X(\Gamma)$ and $a'\in
X(\Gamma')$}\]
where $f\colon\Gamma'\to\Gamma$ is a function that
fulfils $Xf(a')=a$. The
Yoneda transformations
\[z_a\colon Z_a\to X,\quad\text{with the components defined by}\quad
f\mapsto Xf(a),\]
clearly form a compatible cocone of this finite diagram. We prove
that this is a colimit cocone. In other words, for every
context~$\bar{\Gamma}$ we must prove that the cocone of all
$\bar{\Gamma}$-components~$z_a^{\bar{\Gamma}}$ (sending elements
$f\colon\Gamma\to\bar{\Gamma}$ of $Z_a=\FF(\Gamma,{-})$ to~$Xf(a)$)
is a colimit in~$\Set$. For that we only need to verify that
in every
context~$\bar{\Gamma}$
\begin{enumerate}[(i)]
\item
the cocone~$z_a^{\bar{\Gamma}}$
is collectively epimorphic, and
\item
whenever two elements $f\colon\Gamma\to\bar{\Gamma}$ of~$Z_a$ and
$f'\colon\Gamma'\to\bar{\Gamma}$ of~$Z_{a'}$ fulfil
$z_a^{\bar{\Gamma}}(f)=z_{a'}^{\bar{\Gamma}}(f')$, then there exists a
zig-zag connecting $f$ and~$f'$ in the $\bar{\Gamma}$-component of
our diagram.
\end{enumerate}
The proof of~(i) is trivial: given an element $a\in X(\bar{\Gamma})$,
either $\bar{\Gamma}=\emptyset$ or by Equation~\eqref{rdt}
there exists $f\colon\Gamma_0\to\bar{\Gamma}$ and an element $b\in
X(\Gamma_0)$ with $a=Xf(b)$, in other words,
\[a=z_b^{\bar{\Gamma}}(f).\]
In case $\bar{\Gamma}=\emptyset$ we have
$a=z_a^{\bar{\Gamma}}(\id_\emptyset)$.

To prove~(ii), observe that the given equation states
\[Xf(a)=Xf'(a').\]
In case $\bar{\Gamma}$~has at most $2n$~variables, we can assume
$\bar{\Gamma}\subseteq\Gamma_0+\Gamma_0$ and
the desired zig-zag
is
\[Z_a\lnsi{Yf}Z_b\nsi{Yf'}Z_{a'},\]
where $b=Xf(a)$.
Thus, we can assume that
$\bar{\Gamma}$~has more than $2n$~elements.

\begin{enumerate}[\hbox to8 pt{\hfill}]
\item\noindent{\hskip-12 pt\bf Case~1:}\
$\Gamma=\emptyset=\Gamma'$. Here $f=f'$ and we have
$Xf(a)=Xf(a')$. Choose a monomorphism
$m\colon\Gamma_0\to\bar{\Gamma}$ and observe that
$f=m\tec g$ for the unique
$g\colon\emptyset\to\Gamma_0$.
Thus $Xm(Xg(a))=Xm(Xg(a'))$ and since $m$~is a split monomorphism, we
conclude $Xg(a)=Xg(a')=c$. The desired zig-zag is
\[Z_a\lnsi{Yg}Z_c\nsi{Yg'}Z_{a'}.\]

\item\noindent{\hskip-12 pt\bf Case~2:}\
$\Gamma=\emptyset\neq\Gamma'$. Factorize~$f'$ as an epimorphism~$e$
followed by a split monomorphism~$m$:
\[\begin{psmatrix}[colsep=15mm, rowsep=10mm]
\Gamma'&&\bar{\Gamma}\\
&\Gamma_1
\psset{arrows=->,nodesep=3pt}
\everypsbox{\scriptstyle}
\ncline{1,1}{1,3}^{f'}
\ncline{->>}{1,1}{2,2}<{e}
\ncline{>->}{2,2}{1,3}>{m}
\end{psmatrix}\]
Then, since for the unique $h\colon\emptyset\to\Gamma_1$ we have
$f=m\tec h$, we obtain
\[Xm\bigl(Xe(a')\bigr)=Xm\bigl(X h(a)\bigr).\]
Thus, $Xe(a')=Xh(a)=c$ which yields the zig-zag
\[Z_a\lnsi{Yh}Z_c\nsi{Ye}Z_{a'}.\]

\item\noindent{\hskip-12 pt\bf Case~3:}\
$\Gamma\neq\emptyset\neq\Gamma'$. Find
$g\colon\Gamma_0\to\Gamma$ with $a=Xg(b)$ and
$g'\colon\Gamma_0\to\Gamma'$ with $a'=Xg'(b')$
for some $b,b'\in X(\Gamma_0)$. Then $X(f\tec g)(b)=X(f'\tec
g')(b')$. Now factorize $[f\tec g,f'\tec g']\colon
\Gamma_0+\Gamma_0\to\bar{\Gamma}$ as an epimorphism
followed by a split monomorphism; so we obtain a
commutative diagram

\begin{equation}
\label{nnrdss}
\myvc{
\begin{psmatrix}[colsep=15mm,rowsep=10mm]
\Gamma_0+\Gamma_0&&\bar{\Gamma}\\
&\Gamma_1
\psset{arrows=->,nodesep=3pt}
\everypsbox{\scriptstyle}
\ncline{1,1}{1,3}\naput{[f\tec g,f'\tec g']}
\ncline{->>}{1,1}{2,2}\nbput[npos=.333]{[e,e']}
\ncline{>->}{2,2}{1,3}\nbput[npos=.5]{m}
\end{psmatrix}
}
\end{equation}
Since $m$~is a split monomorphism, conclude that $Xe(b)=Xe'(b')=c$.
The desired zig-zag is

\[\begin{psmatrix}[colsep=15mm,rowsep=10mm]
Z_a&&Z_{a'}\\
Z_b&&Z_{b'}\\
&Z_c
\psset{arrows=->,nodesep=3pt}
\everypsbox{\scriptstyle}
\ncline{1,1}{2,1}<{Yg}
\ncline{1,3}{2,3}>{Yg'}
\ncline{3,2}{2,1}\naput[npos=.5]{Ye}
\ncline{3,2}{2,3}\nbput[npos=.5]{Ye'}
\end{psmatrix}\]
\end{enumerate}\smallskip

\noindent (2)\ Let $X$~be a finitely presentable object of~\SetF. The empty maps are
denoted by $t_\Gamma\colon\emptyset\to\Gamma$. For every nonempty
context~$\Gamma_0$ let $X_{\Gamma_0}$~be the subfunctor of~$X$
generated by the elements of $X(\Gamma_0)\cup X(\emptyset)$: it
assigns to every~$\Gamma$ the subset of~$X(\Gamma)$ given by
\[X_{\Gamma_0}(\Gamma)=Xt_\Gamma\bigl[X(\emptyset)\bigr]
\cup\bigcup_{f\colon\Gamma_0\to\Gamma} Xf\bigl[X(\Gamma_0)\bigr].\]
We obviously have a union
\[X=\bigcup_{\Gamma_0\in\FF\setminus\{\emptyset\}}X_{\Gamma_0}\]
which is directed: given nonempty contexts~$\Gamma_0,\Gamma_1$ then
$X_{\Gamma_0}\cup X_{\Gamma_1}\subseteq X_{\Gamma_0\cup\Gamma_1}$.
Since $X$~is finitely presentable, the morphism
\[\id_X\colon X\to\colim_{\Gamma_0\in\FF\setminus\{\emptyset\}}
X_{\Gamma_0}\]
factorizes through one of the colimit injections
$X_{\Gamma_0}\hookrightarrow X$. In other words
\[X=X_{\Gamma_0}\qquad\text{for some $\Gamma_0\neq\emptyset$.}\]
It remains to prove that the sets $X(\Gamma_0)$ and~$X(\emptyset)$
are finite, then every~$X(\Gamma)$ is finite.

For every finite set $M\subseteq X(\emptyset)$ we have the
subfunctor~$X^M$ of~$X$ equal to~$X$ on nonempty objects and maps,
and assigning~$M$ to~$\emptyset$. We obviously get~$X$ as a directed
union of these subfunctors~$X^M$, thus, as above, there exists~$M$
with $X=X^M$. Then $X(\emptyset)=M$ is finite.

For every finite set $M\subseteq X(\Gamma_0)$ we have the
subfunctor~$^M\!X$ of $X=X_{\Gamma_0}$ generated by the elements of
$M\cup X(\emptyset)$:
\[^M\!X(\Gamma)=Xt_\Gamma\bigl[X(\emptyset)\bigr]\cup
\bigcup_{f\colon\Gamma_0\to\Gamma} Xf[M].\]
Again $X$~is a directed union of these subfunctors~$^M\!X$, thus,
there exists~$M$ with $X={}^M\!X$, proving that $X(\Gamma_0)$ is
finite.\qed

%\begin{defi}[see \cite{AMV1}]
%\label{djj}
%An algebra~$A$ for~$H$ is called \textbf{iterative} if every equation
%morphism $e\colon X\to HX+A$ with $X$~finitely presentable has a
%unique solution $e^\dag\colon X\to A$.
%\end{defi}

\begin{thm}
\label{hdjt}
The presheaf~$R_\lambda$ of rational $\lambda$-terms is the free
iterative $H_\lambda$-algebra on~$V$.
\end{thm}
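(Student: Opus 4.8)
The plan is to deduce the result from the general theory of rational monads developed in~\cite{AMV1}, using the concrete description of the free completely iterative algebra already obtained in Theorem~\ref{ddv}. By Theorem~\ref{ddv} together with Example~\ref{ndoa}, the presheaf $T_\lambda$ is the free cia on~$V$, equivalently the terminal coalgebra for $H_\lambda({-})+V$. According to~\cite{AMV1} the free iterative $H_\lambda$-algebra on~$V$ is then the subalgebra $R$ of~$T_\lambda$ consisting of all elements that occur as solutions, in the sense of~\eqref{rdd}, of some flat equation $e\colon X\to H_\lambda X+V$ whose object of variables~$X$ is finitely presentable; concretely $R=\bigcup e^\dag[X]$, the directed union of the images of all such solutions. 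By Theorem~\ref{hdjj} the relevant~$X$ are exactly the super-finitary presheaves. The algebra structure and the point $i^R\colon V\to R_\lambda$ being inherited from~$T_\lambda$, it therefore suffices to identify this subpresheaf $R$ of~$T_\lambda$ with the subpresheaf $R_\lambda$ of rational $\lambda$-terms.

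For the inclusion $R_\lambda\subseteq R$, I would take a rational $\lambda$-term $[t]$ in a context $\Gamma$ and represent it by a rational $\lambda$-tree~$t$, which by definition has only finitely many distinct subtrees $t_1,\dots,t_k$. Introducing one recursion variable $p_i$ for each~$t_i$, placed in the finite context of free variables of~$t_i$, one obtains a free presheaf~$X$ on these $k$ generators; by Example~\ref{plus} it is super-finitary, hence finitely presentable. Reading off the topmost operation of each~$t_i$ (an application $t_j\zav t_l$, an abstraction $\lambda x\tecd t_j$, or a variable) defines a flat equation $e\colon X\to H_\lambda X+V$ in the style of Remark~\ref{nrmdv}, whose unique solution sends $p_i$ to $[t_i]$; in particular $[t]\in e^\dag[X]\subseteq R$.

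For the converse inclusion $R\subseteq R_\lambda$, I would start from a flat equation $e\colon X\to H_\lambda X+V$ with $X$ super-finitary and show that every value $e^\dag_\Gamma(p)$ is a rational $\lambda$-term. Since $X$ is super-finitary, each $X(\Gamma)$ is finite and $X$ is generated by a single finite context~$\Gamma_0$, so the one-step data supplied by~$e$ is finite up to renaming of free variables. Unfolding the recursion as in Remark~\ref{nrmdv} then produces trees whose subtrees fall, up to renaming, into finitely many isomorphism types, which is precisely the condition guaranteeing that the resulting $\lambda$-terms are rational.

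The main obstacle is the discrepancy, already flagged after Example~\ref{dd}(vi), between rational $\lambda$-trees and rational $\lambda$-terms: a term that is rational as an $\alpha$-equivalence class may possess no rational tree representative if a naive unfolding keeps introducing fresh bound variables. The delicate point in the inclusion $R\subseteq R_\lambda$ is thus the bookkeeping of bound-variable names: one must unfold the finitary system while systematically reusing names, exploiting that at each node only the finitely many free variables tracked by~$\Gamma_0$ matter, so that the unfolding can be \emph{chosen} to be a genuinely rational tree rather than merely an $\alpha$-class of a non-rational one. Dually, in the inclusion $R_\lambda\subseteq R$ one must check that the flat equation constructed above is well defined modulo $\sim_\alpha$, that is, independent of the chosen rational representative and of the relabelling of bound variables.
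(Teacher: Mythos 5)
Your two concrete inclusions are exactly the constructions the paper uses: passing from a rational tree to a flat equation with one generator per subtree is the paper's proof that the solution cocone is collectively epimorphic (part~(II)(a)), and the finiteness count for solutions of super-finitary systems --- every subtree being of the form $T_\lambda f(e^\dag_{\Gamma_0}(r))$ with $f$ ranging over the finite set $\FF(\Gamma_0,\Gamma_0+\{x\})$ and $r\in X(\Gamma_0)$ --- is the paper's part~(I). The gap is in your first step. What~\cite{AMV1} provides is that the free iterative algebra on~$V$ is the \emph{colimit} of the filtered diagram of flat equations $e\colon X\to H_\lambda X+V$ with $X$ finitely presentable. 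To replace that colimit by the subpresheaf $\bigcup e^\dag[X]$ of~$T_\lambda$ you must know that the canonical comparison from the colimit into~$T_\lambda$ is a monomorphism, i.e., that whenever two generators have the same solution in~$T_\lambda$ they are already identified in the colimit by a zig-zag of coalgebra homomorphisms. This is not a formal consequence of the colimit description and cannot simply be cited away; it is precisely condition~(b) in the paper's proof, established by factorizing $\hat{e}^\dag$ as a strong epimorphism followed by a monomorphism and applying a diagonal fill-in, which in turn uses that $H_\lambda$ preserves monomorphisms and that a strong quotient of a finitely presentable presheaf is again finitely presentable (Theorem~\ref{hdjj}). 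Without this step your~$R$ is only the image of the free iterative algebra in~$T_\lambda$, not known to be the free iterative algebra itself.

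Two smaller points. Your worry that a rational $\alpha$-equivalence class ``may possess no rational tree representative'' is backwards relative to Example~\ref{dd}(vi): a rational $\lambda$-term is by definition the $\alpha$-class of a rational tree, so membership $e^\dag_{\Gamma_0}(p)\in R_\lambda(\Gamma_0)$ only requires exhibiting \emph{one} representative with finitely many subtrees, which the induction above delivers directly; no further bound-variable bookkeeping is needed beyond the finiteness of $\FF(\Gamma_0,\Gamma_0+\{x\})$. Also, to know that $R_\lambda$ is an iterative \emph{algebra} one must solve equations with parameters, $e\colon X\to H_\lambda X+R_\lambda$, not only equations into $H_\lambda X+V$; in your architecture this would follow from the universal property once it is actually established, but the paper proves it directly in part~(I) and then reuses it in part~(II).
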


\proof
(I)\ 
$R_\lambda$~is an iterative algebra for~$H_\lambda$. Indeed, given
an equation morphism
\[e\colon X\to H_\lambda X+R_\lambda\]
where Equation~\eqref{rdt} holds for~$\Gamma_0$,
we know that its extension
\[
\bar{e}\colon X\nsi{e}
H_\lambda X+R_\lambda\hookrightarrow H_\lambda X+T_\lambda
\]
has a unique solution $e^\dag\colon X\to T_\lambda$, and we are going
to prove that the trees $e^\dag_{\Gamma_0}(p)$
and~$e^\dag_\emptyset(p)$
are all
rational. It
then follows that all the trees~$e^\dag_\Gamma(p)$
are rational for all contexts~$\Gamma$,
and this gives us the desired solution
$X\to R_\lambda$.
Indeed, for each $x\in X(\Gamma)$ with $\Gamma\neq\emptyset$
we have $x=Xf(p)$ for some
$f\colon\Gamma_0\to\Gamma$ and $p\in X(\Gamma_0)$. Then
$e^\dag_\Gamma(x)=e^\dag_\Gamma(Xf(p))=T_\lambda
f(e^\dag_{\Gamma_0}(p))$ by the naturality of~$e^\dag$, and since
$e^\dag_{\Gamma_0}(p)$~is rational, so is~$T_\lambda
f(e^\dag_{\Gamma_0}(p))$. (The action of~$T_\lambda f$ is just
relabelling leaves according to~$f$.)

Now every element of $X(\Gamma_0)=\{p_1,\dots,p_n\}$
yields an element
\[e_{\Gamma_0}(p_i)\in X(\Gamma_0)\times X(\Gamma_0)+
X\bigl(\Gamma_0+\{x\}\bigr)+R_\lambda(\Gamma_0)\]
which is either
(i)~a pair $(p_j,p_k)$ or
(ii) $q\in X(\Gamma_0+\{x\})$
or
(iii)~a rational tree
in~$R_\lambda(\Gamma_0)$.
Put $t_i=e^\dag_{\Gamma_0}(p_i)$, then
in the last
case the commutativity of Diagram~\eqref{rdd} implies that
$e_{\Gamma_0}(p_i)=t_i$ (cf. Remark~\ref{nrmdv}).
From~\eqref{rdd} we also obtain in cases (i) and~(ii)
\[t_i=t_j\zav t_{k}\qquad\text{and}\qquad t_i=\lambda x\tecd
e^\dag_{\Gamma_0+\{x\}}(q),\qquad\text{respectively.}\]
From
Equation~\eqref{rdt} we see that in case~(ii) there exists $f\colon
\Gamma_0\to\Gamma_0+\{x\}$ with $q=Xf(p_j)$ for some~$j$, then
$e^\dag_{\Gamma_0+\{x\}}(q)=T_\lambda f(e^\dag_{\Gamma_0}(p_j))=
T_\lambda f(t_j)$. Thus we get equations telling us that for
every~$i$ either $t_i=t_j\zav t_k$ or $t_i=\lambda x\tecd T_\lambda
f(t_j)$ or $t_i$~is a rational tree.
Using these equations it is now easy, for every $i=1,\dots,n$,
to prove by induction on the depth~$k$ of subtrees of~$t_i$ that
each
subtree of~$t_i$
is either of the
form~$s=T_\lambda f(e^\dag_{\Gamma_0}(r))$ for
some $r\in X(\Gamma_0)$ and some $f\colon\Gamma_0\to \Gamma_0+\{x\}$,
or $s$~is a subtree of some rational tree
$e^\dag_{\Gamma_0}(r)=e_{\Gamma_0}(r)$ in case~(iii). Since
$X(\Gamma_0)$~is a finite set, it follows that every tree~$t_i$ has
only finitely many subtrees, whence $t_i\in R_\lambda(\Gamma_0)$.

The case $X(\emptyset)=\{p_1,\dots,p_n\}$ is analogous: for
$t_i=e^\dag_\Gamma(p_i)$ we get
(i) $t_i=t_j\zav t_k$ or
(ii)
$t_i=\lambda x\tecd e^\dag_{\{x\}}(q)$ or
(iii) $t_i=e_\emptyset(p_i)\in R_\lambda(\emptyset)$. We already know
that the trees in case~(ii) are rational. Thus, each subtree
of~$e^\dag_\emptyset(p_i)$ is either~$e^\dag_\emptyset(r)$ or it is a
subtree of some rational tree in cases (ii) or~(iii).

The solution of~$e$
in~$R_\lambda$ is unique because every solution
in~$R_\lambda$ yields a solution of the extended morphism~$\bar{e}$
in~$T_\lambda$.\medskip

\noindent(II)\ 
Let \DD~be the category of all equation morphisms
\[e\colon X\to H_\lambda X+V,\qquad\text{$X$ finitely presentable,}\]
whose morphisms are the coalgebra homomorphisms
for~$H_\lambda({-})+V$. The diagram $D\colon\DD\to\SetF$,
$D(e)=X$, is
filtered and its colimit is the free iterative $H_\lambda$-algebra
on~$V$, see~\cite{AMV1}. We will prove that $R_\lambda$~is a colimit
of~$D$. Recall that $R_\lambda$~is a pointed presheaf (see
Remark~\ref{ds}).

For every $e$ as above the equation morphism
\[\tilde{e}\equiv X\nsi{e}H_\lambda X+V\nsi{\id+i^R}H_\lambda
X+R_\lambda\]
has a unique solution $\tilde{e}^\dag\colon X\to R_\lambda$.
It is easy to verify that these morphisms form a cocone
for
the diagram~$D$. Since $D$~is a filtered diagram in~\SetF\ and since
colimits in~\SetF\ are constructed objectwise in~\Set, in order to
prove that
\[R_\lambda=\colim D\qquad\text{with the colimit cocone
$(\tilde{e}^\dag)$}\]
all we need to prove is that for every context~$\Gamma$
\begin{enumerate}[(a)]
\item
the cocone $\tilde{e}^\dag_\Gamma$ is collectively epimorphic:
$R_\lambda(\Gamma)=\bigcup \tilde{e}^\dag_\Gamma[X]$, and
\item
whenever $\tilde{e}^\dag_\Gamma$~merges $x,x'\in X(\Gamma)$, there
exists a connecting morphism in~$\DD$ merging $x$ and~$x'$ too.
\end{enumerate}\smallskip

\noindent To prove~(a), let $t\in R_\lambda(\Gamma)$ be a rational tree and let
$\Gamma_0$~be the context of variables~$x_s$ indexed by the finitely
many
subtrees~$s$
of~$t$ (up to isomorphism). Let $X$~be the free presheaf
on the set~$\Gamma_0$ of generators of context
$\bar{\Gamma}=\Gamma\cup\Gamma_0$,
see Example~\ref{dd}(ii). Define
\[e\colon X\to H_\lambda X+V\]
by assigning to every variable~$x_s$, for a
subtree~$s$ of~$t$, the following value: if
$s=s'\zav s''$ in~$t$, then
\[e_\Gamma(x_s)=x_{s'}\zav x_{s''}\qquad\text{in
$X(\bar{\Gamma})
\times X(\bar{\Gamma})$,}\]
if $s=\lambda y\tecd s'$ in~$t$, then
\[e_\Gamma(x_s)=\lambda y{.}x_{s'}\qquad\text{in
$X\bigl(\bar{\Gamma}+
\{y\}\bigr)$,}\]
and if $s$~is a leaf labelled by $x\in\Gamma$, then
\[e_\Gamma(x_s)=x\qquad\text{in $\Gamma=V(\Gamma)$.}\]
This object~$e$ of~\DD\ yields two equation morphisms:
$\tilde{e}\colon X\to H_\lambda X+R_\lambda$ above, and analogously
$\hat{e}=(\id+i^T)\tec e\colon
X\to H_\lambda X+T_\lambda$. The solution of the
latter is the unique morphism
\[\hat{e}^\dag\colon X\to T_\lambda\qquad\text{with\qquad
$\hat{e}^\dag_{\bar{\Gamma}}(x_s)=s$ for all $s\in\Gamma_0$.}\]
Indeed,
Diagram~\eqref{rdd}~is easily seen to commute for $\hat{e}$
and~$\hat{e}^\dag$. In (I)~above we saw that the
solution $\tilde{e}^\dag\colon X\to R_\lambda$ is a codomain
restriction of~$\hat{e}^\dag$. In particular:
\[t=\tilde{e}^\dag_{\bar{\Gamma}}(x_t).\]
This proves~(a).

To prove~(b) let $\tau\colon H_\lambda T_\lambda\to T_\lambda$ denote
the algebra structure of~$T_\lambda$. By
Theorem~\ref{ddv} and Example~\ref{ndoa} we
have that
\[[\tau,i^T]\colon H_\lambda T_\lambda+V\to T_\lambda\qquad\text{is an
isomorphism.}\]
From Diagram~\eqref{rdd} we get
\[
\hat{e}^\dag=[\tau,\id_{T_\lambda}]\tec[H_\lambda\hat{e}^\dag+
\id_{T_\lambda}]\tec(\id_{H_\lambda X}+i^T)\tec e
\]
which yields
\[[\tau,i^T]^{-1}\tec\hat{e}^\dag=
(H_\lambda\hat{e}^\dag+\id_V)\tec e.\]
Let us factorize~$\hat{e}^\dag$ as a strong epimorphism $k\colon
X\to
Y$ followed by a monomorphism $m\colon Y\to T_\lambda$. Then the last
equation makes it possible to apply the diagonal fill in:
\vspace{1pt}

\[\begin{psmatrix}[rowsep=10mm]
X&Y\\
H_\lambda X+V&T_\lambda\\
H_\lambda Y+V&H_\lambda T_\lambda+V
\psset{arrows=->,nodesep=3pt}
\everypsbox{\scriptstyle}
\ncline{1,1}{1,2}^{k}
\ncline{1,2}{2,2}>{m}
\ncline{1,1}{2,1}<{e}
\ncline{2,1}{3,1}<{H_\lambda k+\id}
\ncline{2,2}{3,2}>{[\tau,i^T]^{-1}}
\ncline{3,1}{3,2}_{H_\lambda m+\id}
\ncline[linestyle=dashed]{1,2}{3,1}>f
\end{psmatrix}\]
\vspace{3pt}

\noindent
Indeed, $H_\lambda=({-})^2+\delta$ preserves connected
limits (because each summand does), thus, monomorphisms;
consequently, $H_\lambda m+\id_V$ is a monomorphism. Since $Y$~is a
strong
quotient of~$X$, it follows from Theorem~\ref{hdjj} that $Y$~is
finitely presentable. Thus,
\[f\colon Y\to H_\lambda Y+V\]
is an object of~\DD, and clearly $k$~is a connecting morphism
from~$e$ to~$f$.

From~(I) we know that $\tilde{e}^\dag$~is
the domain restriction of~$\hat{e}^\dag$, thus we see that
$\tilde{e}^\dag_\Gamma(x)=\tilde{e}^\dag_\Gamma(x')$ implies
$\hat{e}^\dag_\Gamma(x)=\hat{e}^\dag_\Gamma(x')$, and since
$m_\Gamma$~is a monomorphism with
$\hat{e}^\dag_\Gamma=m_\Gamma\tec
k_\Gamma$, we conclude
\[k_\Gamma(x)=k_\Gamma(x')\]
as requested.\qed

\chg
\begin{rem}
\label{ndjj}
As mentioned in the Introduction
we want to combine application and abstraction with other
operations. Suppose $\Sig=(\Sig_n)_{n\in\nn}$ is a signature (of
``terminals'').
Then we
can form the endofunctor~$H_{\lambda,\Sig}$ of~\SetF\ on objects by
$$
H_{\lambda,\Sig}X
=
X\times X+\delta X+\coprod_{n\in\nn}\Sig_n\bullet X^n
$$
where $\Sig_n\bullet X^n$ is the coproduct (that is: disjoint union
in every context) of $\Sig_n$~copies of
the $n$-th Cartesian power of~$X$. For this
endofunctor an algebra is an $H_\lambda$-algebra~$A$ together with an
$n$-ary operation on~$A(\Gamma)$ for every $\sigma\in\Sig_n$ and every
context~$\Gamma$.
\end{rem}
\endchg

In the following result we use notation of Example~\ref{dd}(vii).

\begin{thm}
\label{djp}
For every signature~\Sig
\begin{enumerate}[\em(i)]
\item
$F_{\lambda,\Sig}$~is the free $H_{\lambda,\Sig}$-algebra on~$V$,
\item
$R_{\lambda,\Sig}$~is the free iterative $H_{\lambda,\Sig}$-algebra
on~$V$,
and
\item
$T_{\lambda,\Sig}$~is the free completely iterative
$H_{\lambda,\Sig}$-algebra on~$V$.
\end{enumerate}
\end{thm}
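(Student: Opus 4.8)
The plan is to establish the three items by adapting, respectively, the proofs of Theorems~\ref{dss}, \ref{hdjt} and~\ref{ddv} to the larger endofunctor. The organising observation is the decomposition
\[
H_{\lambda,\Sig}X = H_\lambda X + H_\Sig X,\qquad\text{where}\qquad H_\Sig X=\coprod_{n\in\nn}\Sig_n\bullet X^n
\]
is the polynomial endofunctor of the signature~$\Sig$. Every property of $H_\lambda$ that was used in those three proofs is also enjoyed by $H_\Sig$, and hence by the sum $H_{\lambda,\Sig}$: it is finitary, it preserves connected limits (each Cartesian power $X^n$ preserves all limits and each copower $\Sig_n\bullet({-})$ preserves connected ones, and so does the coproduct over~$n$), and it preserves limits of $\omega^{\op}$-chains (a compatible sequence of elements stays within a single summand $\Sig_n\bullet X^n$, where the limit is computed as $(\lim X_k)^n$). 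Preservation of connected limits yields preservation of monomorphisms, which is the one fact that drives the delicate step in~(ii) below.

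For~(i) I would argue exactly as in Theorem~\ref{dss} (following~\cite{FPT}): the free $H_{\lambda,\Sig}$-algebra on~$V$ is obtained by closing $V$ under the three clauses of application, abstraction, and the operations of~$\Sig$, and this is precisely the presheaf $F_{\lambda,\Sig}$ of finite $\lambda$-$\Sig$-terms modulo $\alpha$-conversion of Example~\ref{dd}(iv). For~(iii), by Example~\ref{ndoa} the free completely iterative $H_{\lambda,\Sig}$-algebra on~$V$ is the terminal coalgebra for $H_{\lambda,\Sig}({-})+V$. Since this functor preserves $\omega^{\op}$-limits, its terminal coalgebra is the limit of the chain $W_0=1$, $W_{n+1}=H_{\lambda,\Sig}W_n+V$, computed objectwise; the same induction as in Theorem~\ref{ddv} identifies $W_n(\Gamma)$ with the $\lambda$-$\Sig$-terms in context~$\Gamma$ of depth at most~$n$ (level-$n$ leaves relabelled by~$\perp$), and the limit is $T_{\lambda,\Sig}$.

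The bulk of the work is item~(ii), which I would carry out by transcribing the two halves of the proof of Theorem~\ref{hdjt}. In part~(I), given an equation morphism $e\colon X\to H_{\lambda,\Sig}X+R_{\lambda,\Sig}$ with $X$ generated by a context~$\Gamma_0$, I extend it into $T_{\lambda,\Sig}$, solve it there by~(iii), and prove by induction on the depth of subtrees that each $e^\dag_{\Gamma_0}(p_i)$ and $e^\dag_\emptyset(p_i)$ is rational; the induction now carries one extra case, $e_{\Gamma_0}(p_i)=\sigma(p_{j_1},\dots,p_{j_n})$, handled just like the application case because $X(\Gamma_0)$ is finite. In part~(II), I realize $R_{\lambda,\Sig}$ as the filtered colimit of the category~$\DD$ of equation morphisms $X\to H_{\lambda,\Sig}X+V$ with $X$ finitely presentable: collective epimorphism is verified by building, for a rational tree~$t$, the free presheaf on its finitely many subtrees together with the $e$ that reads off the top operation of each subtree (now also recognising $\sigma$-labelled nodes), and the merging condition is obtained by the same strong-epi/mono factorisation and diagonal fill-in.

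The step I expect to demand the most care is this diagonal fill-in in part~(II): it requires that $H_{\lambda,\Sig}m+\id_V$ be a monomorphism whenever $m$ is, and that the strong quotient $Y$ of the finitely presentable $X$ remain finitely presentable. The first reduces to the preservation of connected limits recorded in the first paragraph, and the second to Theorem~\ref{hdjj} (super-finitary presheaves are closed under strong quotients). Once these two points are secured, together with the finitariness and $\omega^{\op}$-continuity of $H_\Sig$, every remaining verification is a routine transcription of the corresponding step for $H_\lambda$.
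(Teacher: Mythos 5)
Your proposal is correct and takes essentially the same route as the paper, which proves this theorem by citing~\cite{FPT} for~(i) and declaring (ii) and~(iii) ``completely analogous'' to the proofs of Theorems \ref{hdjt} and~\ref{ddv}; your transcription, including the extra $\sigma$-case in the rationality induction and the check that $H_{\lambda,\Sig}$ still preserves connected limits and $\omega^{\op}$-limits, is exactly the analogy the paper intends and supplies the details it leaves implicit.
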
\smallskip

\noindent Indeed, (i)~was proved in~\cite{FPT}, and the proofs of (ii)
and~(iii) are completely analogous to the proofs of Theorems
\ref{hdjt} and~\ref{ddv}.

\section{Presheaves as Monoids}
\setcounter{equation}{0}
\label{ht}

\noindent So far we have not treated one of the basic features of
$\lambda$-calculus: substitution of subterms. For the
presheaf~$F_{\lambda,\Sig}$ of finite $\lambda$-\Sig-terms this was
elegantly performed by Fiore \textit{et al}~\cite{FPT}
based on the monoidal
structure of the category~\SetF. As mentioned in
Notation~\ref{dj}(3),
we can work
with the equivalent category~$\Fin(\Set,\Set)$ of all finitary
endofunctors of~\Set. Composition of functors makes this a
(strict, non-symmetric) monoidal category with unit~$\Id_{\Set}$.
This monoidal structure,
as shown in~\cite{FPT},
corresponds to simultaneous substitution. Indeed, let $X$ and~$Y$ be
objects of~$\Fin(\Set,\Set)$. Then the ``formulas of
the composite presheaf~$X\tec Y$'' in context~$\Gamma$
are the elements of
\begin{equation}
\label{newr}
X\tec Y(\Gamma)=X\bigl(Y(\Gamma)\bigr)=
\bigcup_{u: \bar{\Gamma}\subto Y(\Gamma)}Xu[\bar{\Gamma}],
\end{equation}
where $u: \bar\Gamma \subto Y(\Gamma)$ ranges over finite subobjects
of
$Y(\Gamma)$. Indeed, $X$~preserves the filtered colimit $Y(\Gamma) =
\colim \bar\Gamma$.

Consequently, in order to specify an $X\tec Y$-formula~$t$ in
context~$\Gamma$ we need (a)~an $X$-formula~$s$ in some new
context~$\bar{\Gamma}$ and (b)~for every variable $x\in\bar{\Gamma}$
a $Y$-formula of context~$\Gamma$, say, $r_x$. We can then think
of~$t$ as the formula~$s(r_x/x)$ obtained from~$s$ by simultaneous
substitution.

\chg
\begin{exa}
  We consider the presheaves $F_{\lambda,\Sigma}$ and
  $F_{\lambda,\Sigma'}$, where $\Sigma$ is the signature with a binary
  operation symbol $*$ and $\Sigma'$ a signature with a unary
  operation symbol $o$. Then for every context $\Gamma$, the elements
  of $F_{\lambda,\Sigma} \tec F_{\lambda,\Sigma'} (\Gamma)$ are
  $\lambda$-$\Sig$-terms in some context $\ol\Gamma$ with free
variables replaced by
  $\lambda$-$\Sig'$-terms in context $\Gamma$. For a concrete example,
  let $\ol\Gamma = \{\,y,z\,\}$ and $\Gamma = \{\,z'\,\}$ and consider
  the $\lambda$-$\Sig$-term
  $$
  t = \lambda x. x * (y * z)
  \qquad
  \text{in $F_{\lambda,\Sigma}\{\,y,z\,\}$}
  $$
  and the function
  $$
  u: \ol\Gamma \to F_{\lambda,\Sig'} (\Gamma)
  \qquad \text{with}\
  \begin{array}{rcl}
    u(y) & = & \lambda x. o(x) \zav z' \\
    u(z) & = &  z'\zav o(o(z'))
  \end{array}
  $$
  Then the element of $F_{\lambda,\Sig} \tec F_{\lambda,\Sig'}(\Gamma)$
  corresponding to $t$ and $u$ is the term
  $$
  \lambda x.x * \left((\lambda x. o(x) \zav z') * (z' \zav o(o(z')))\right).
  $$
\end{exa}
\endchg

\begin{rem}
\label{hndj}\hfill
\begin{enumerate}[(i)]
\item
The monoidal structure on~\SetF\ corresponding to composition
in %the category
$\Fin(\Set,\Set)$ will be denoted by~$\otimes$. Its unit
(corresponding to~$\Id$) is~$V$,
see Notation~\ref{dd}(i).
Observe that every
endofunctor~${-}{\otimes}X$ preserves colimits, e.g., $(A+B)\otimes
X\cong(A\otimes X)+(B\otimes X)$.

\item
Explicitly, the monoidal structure can be described by the coend
\begin{equation}
\label{eq:3.2}
(X\otimes Y)(\Gamma)
=
\int^{\overline{\Gamma}}
\Set(\overline{\Gamma},Y(\Gamma))\bullet X(\overline{\Gamma}).
\end{equation}

\item
Recall that monoids in
the monoidal category~$\Fin(\Set,\Set)$
are precisely the finitary
monads on~\Set.

\item
The presheaf~$F_{\lambda,\Sig}$ is endowed with
the usual simultaneous
substitution of $\lambda$-terms which defines a morphism $m^F\colon
F_{\lambda,\Sig}\otimes F_{\lambda,\Sig}\to F_{\lambda,\Sig}$.
Together
with the canonical pointing
$i^F\colon V\to F_{\lambda,\Sig}$, see Remark~\ref{ds},
this constitutes a monoid
as proved in~\cite{FPT}.

Analogously the simultaneous substitution of infinite
$\lambda$-terms defines a
monoid
\[(T_{\lambda,\Sig},m^T,i^T).\]
It is easy to see that given a rational term,
every simultaneous substitution of rational terms for variables
yields again a rational term. Thus, we have a submonoid
$(R_{\lambda,\Sig},m^R,i^R)$.

\item
The monoidal operation of~$F_{\lambda,\Sig}$ is well
connected
to its structure of an $H_{\lambda,\Sig}$-algebra. This was
expressed in~\cite{FPT} by the concept of an
\textit{$H_{\lambda,\Sig}$-monoid}.

In order to recall this concept, we need the notion of
point-strength introduced in~\cite{F} under the name
$(I/\WW)$-strength; this is a weakening of the
classical strength (necessary since $H_{\lambda,\Sig}$~is
unfortunately not strong). Recall that given an object~$I$ of a
category~\WW, then
objects of the slice
category~$I/\WW$ are morphisms $x\colon I\to X$ for $X\in\obj\WW$.
\end{enumerate}
\end{rem}

\begin{defi}[see~\cite{F}]
\label{hntj}
Let $(\WW,{\otimes},I)$~be a strict monoidal category and $H$ an
endofunctor on~$\WW$. A \textbf{point-strength} of~$H$ is a collection
of morphisms
\[s_{(X,x)(Y,y)}\colon
HX\otimes Y\to H(X\otimes Y)\]
natural in $(X,x)$ and~$(Y,y)$ ranging through~$I/\WW$ such that
\begin{enumerate}[(i)]
\item
$s_{(X,x)(I,\id)}=\id_{HX}$, and
\item
the following triangles commute:\smallskip

\begin{equation}
\label{triangle}
\myvc{
\begin{psmatrix}[colsep=10mm]
HX\otimes Y\otimes Z&&H(X\otimes Y\otimes Z)\\
&H(X\otimes Y)\otimes Z
\psset{arrows=->,nodesep=3pt}
\everypsbox{\scriptstyle}
\ncline{1,1}{1,3}^{s_{(X,x),(Y\otimes Z,y\otimes z)}}
\ncline{1,1}{2,2}<{s_{(X,x),(Y,y)}\otimes\id_Z}
\ncline{2,2}{1,3}>{s_{(X\otimes Y,x\otimes y),(Z,z)}}
\end{psmatrix}
}
\end{equation}\smallskip
\end{enumerate}
\end{defi}

\begin{exa}
\label{hntd}\hfill
\begin{enumerate}[(i)]
\item The endofunctor $X\mapsto X\otimes X$ (which usually fails to be
strong) has the point-strength
\[\qquad\enspace s_{(X,x)(Y,y)}=(X\otimes X)\otimes Y=
(X\otimes I\otimes X)\otimes Y\nsi{\id_X\otimes y\otimes
\id_{X\otimes Y}}(X\otimes Y)\otimes(X\otimes Y).\]

\item
The endofunctor $X\mapsto X^n$ of~\SetF\ is clearly
(point-)strong for every $n \in \nn$.

\item
The functor
$\delta$ in Notation~\ref{dt}
is point-strong, as observed in~\cite{FPT}.
The easiest way to describe its point-strength is by working in
$\Fin(\Set,\Set)$. Given pointed endofunctors $x\colon\Id\to X$ and
$y\colon\Id\to Y$,
then the point-strength $s_{(X,x)(Y,y)}\colon
(\delta
X)\tec Y\to\delta(X\tec Y)$ has components
\[X\bigl(Y(\Gamma)+1\bigr)\nsi{X(\id+y_1)}
X\bigl(Y(\Gamma)+Y(1)\bigr) \nsi{X\can} X\tec Y(\Gamma+1),\]
where
$\can\colon Y(\Gamma)+Y(1)\to Y(\Gamma+1)$ denotes the canonical
morphism.

\item
A coproduct of point-strong functors is point-strong.
\end{enumerate}
\end{exa}

\begin{cor}
\label{hntt}
The endofunctors $H_\lambda$ and~$H_{\lambda,\Sig}$ are point-strong.
Their point-strength is denoted by~$s^H$.\qed
\end{cor}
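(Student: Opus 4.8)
The plan is to read off the point-strength of each functor directly from the closure properties collected in Example~\ref{hntd}, since both $H_\lambda$ and $H_{\lambda,\Sig}$ are coproducts of functors already shown there to be point-strong. So the whole argument is an assembly of the preceding examples, summand by summand.

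First I would treat $H_\lambda X = X\times X+\delta X$. The summand $X\mapsto X\times X$ is the instance $n=2$ of the Cartesian-power functor $X\mapsto X^n$, which is point-strong by Example~\ref{hntd}(ii), while the summand $\delta$ is point-strong by Example~\ref{hntd}(iii). By Example~\ref{hntd}(iv) a coproduct of point-strong functors is again point-strong, with point-strength the coproduct of the given ones; applying this to the two summands yields the point-strength $s^H$ of $H_\lambda$.

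Next I would treat $H_{\lambda,\Sig}X = X\times X+\delta X+\coprod_{n\in\nn}\Sig_n\bullet X^n$ in the same way. Each Cartesian power $X\mapsto X^n$ is point-strong by Example~\ref{hntd}(ii); since $\Sig_n\bullet X^n$ is a coproduct of $\Sig_n$~copies of $X^n$ and $\coprod_{n\in\nn}\Sig_n\bullet X^n$ is a coproduct of the $\Sig_n\bullet X^n$, both are point-strong by repeated use of Example~\ref{hntd}(iv). Together with the point-strong summands $X\times X$ and $\delta X$, a final application of Example~\ref{hntd}(iv) equips the full coproduct $H_{\lambda,\Sig}$ with a point-strength, again denoted $s^H$.

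There is essentially no genuine obstacle here: the construction of $s^H$ is forced on each summand. The only thing that needs verifying is that the coproduct strength satisfies the unit condition \ref{hntj}(i) and the coherence triangle~\eqref{triangle}, but this is exactly what Example~\ref{hntd}(iv) asserts, so the corollary follows immediately from the examples above.
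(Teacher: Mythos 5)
Your proposal is correct and matches the paper's (implicit) argument exactly: the corollary is stated with no written proof precisely because it follows by decomposing $H_\lambda$ and $H_{\lambda,\Sig}$ into the summands $X\mapsto X^n$ (Example~\ref{hntd}(ii)) and $\delta$ (Example~\ref{hntd}(iii)) and invoking closure of point-strength under coproducts (Example~\ref{hntd}(iv)), which applies to the infinite coproduct $\coprod_{n\in\nn}\Sig_n\bullet X^n$ since ${-}\otimes Y$ preserves all colimits in~\SetF. You also correctly read $X\times X$ as the Cartesian square covered by item~(ii) rather than the tensor square of item~(i), which is the one place a reader could slip.
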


\begin{defi}[see~\cite{FPT}]
\label{hntc}
Let $H$~be a point-strong endofunctor of a monoidal category. By an
\textbf{$H$-monoid} is meant an $H$-algebra~$(A,a)$ which is also a
monoid
\[m\colon A\otimes A\to A\qquad\text{and}\qquad i\colon I\to A\]
such that the square below commutes:\smallskip

\begin{equation}
\label{abb}
\myvc{
\begin{psmatrix}[colsep=15mm,rowsep=10 mm]
HA\otimes A&H(A\otimes A)&HA\\
A\otimes A&&A
\psset{arrows=->,nodesep=3pt}
\everypsbox{\scriptstyle}
\ncline{1,1}{1,2}\naput{s_{(A,i)(A,i)}}
\ncline{1,1}{2,1}<{a\otimes\id}
\ncline{2,1}{2,3}_{m}
\ncline{1,2}{1,3}\naput{Hm}
\ncline{1,3}{2,3}>{a}
\end{psmatrix}
}
\end{equation}\smallskip
\end{defi}

\begin{rem}
\label{ntpjp}\hfill
\begin{enumerate}[(1)]
\item
Homomorphisms of $H$-monoids are those monoid homomorphisms which are
also $H$-algebra homomorphisms.

\item
An $H$-monoid is called \textit{(completely) iterative} if its
underlying $H$-algebra has this property.
\end{enumerate}
\end{rem}

\begin{exa}
\label{ntptp}\hfill
\begin{enumerate}[(1)]
\item
$F_\lambda$~is an $H_\lambda$-monoid. Indeed, we know that
substitution yields the monoid structure (Remark~\ref{ds}) and tree
tupling yields the algebra structure (Example~\ref{dp}). Let us
consider the square
\vspace{3mm}
\[
\begin{psmatrix}[colsep=15mm,rowsep=10mm]
H_\lambda F_\lambda\otimes F_\lambda&
H_\lambda(F_\lambda\otimes F_\lambda)&HF_\lambda\\
F_\lambda\otimes F_\lambda&&F_\lambda
\psset{arrows=->,nodesep=3pt}
\everypsbox{\scriptstyle}
\ncline{1,1}{1,2}^{s}
\ncline{1,1}{2,1}<{\varphi\otimes\id}
\ncline{2,1}{2,3}_{m^F}
\ncline{1,2}{1,3}^{\rule{16pt}{0pt}Hm^F}
\ncline{1,3}{2,3}>{\varphi}
\end{psmatrix}
\]
\vspace{2mm}

\noindent
The elements~$t$ of $H_\lambda F\otimes F_\lambda$ in
context~$\Gamma$ are those of
\[H_\lambda F_\lambda(\Gamma_0)=F_\lambda(\Gamma_0)\times
F_\lambda(\Gamma_0)+F_\lambda\bigl(\Gamma_0+\{x\}\bigr)\]
for a given context~$\Gamma_0$
together with a substitution $f\colon\Gamma_0\to F_\lambda(\Gamma)$.
In case of the summand $F_\lambda(\Gamma_0)\times
F_\lambda(\Gamma_0)$ the lower passage
$m^F_\Gamma(\varphi_\Gamma\otimes\id)$ assigns to $t=(t_1,t_2)$ the
term $t_1\zav t_2$ with variables substituted according to~$f$. And
the upper passage first substitutes to $t_1$ and~$t_2$ according
to~$f$ separately, and then forms~$\zav$; the result is the same. In
case of the summand $F_\lambda(\Gamma_0+\{x\})$ the lower passage
assigns to~$t$ the term~$\lambda x\tecd t$ with variables substituted
according to~$f$; the upper one first substitutes in~$t$ and then
forms~$\lambda x\tecd{-}$ yielding the same result again.

\item
More generally, for every signature~\Sig\ we have an
$H_{\lambda,\Sig}$-monoid $F_{\lambda,\Sig}$.
\end{enumerate}
\end{exa}

\begin{thm}[see~\cite{FPT}]
\label{hnts}
The presheaf~$F_{\lambda,\Sig}$ of finite $\lambda$-\Sig-terms is the
initial $H_{\lambda,\Sig}$-monoid.
\end{thm}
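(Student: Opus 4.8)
The plan is to verify directly the universal property of the initial $H_{\lambda,\Sig}$-monoid, using two facts already established: that $F_{\lambda,\Sig}$ is the free $H_{\lambda,\Sig}$-algebra on~$V$ (Theorem~\ref{djp}(i)) and that it carries an $H_{\lambda,\Sig}$-monoid structure $(m^F,i^F)$ (Example~\ref{ntptp}). Write $F=F_{\lambda,\Sig}$ and $H=H_{\lambda,\Sig}$ for brevity. Given an arbitrary $H$-monoid $(A,a,m,i)$, I must produce a unique morphism $h\colon F\to A$ that is at the same time an $H$-algebra homomorphism and a monoid homomorphism.

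For existence, the monoid unit $i\colon V\to A$ exhibits $(A,a)$ as a pointed $H$-algebra (recall that $V$ is the monoidal unit). By the universal property of the free algebra $F$ on~$V$ there is then a unique $H$-algebra homomorphism $h\colon F\to A$ with $h\cdot i^F=i$. This very equation says that $h$ preserves the monoid unit, so the whole content of the theorem is to check that $h$ also preserves multiplication, i.e.\ that $h\cdot m^F=m\cdot(h\otimes h)$. Uniqueness is then immediate: any $H$-monoid homomorphism $g\colon F\to A$ is in particular a pointed $H$-algebra homomorphism, since as a monoid homomorphism it satisfies $g\cdot i^F=i$, whence $g=h$ by freeness.

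To prove that multiplication is preserved I would exploit that, because $F$ is free, the object $F\otimes F$ is again a free $H$-algebra. Concretely, every functor ${-}\otimes F$ preserves colimits (Remark~\ref{hndj}(i)), and $F$ is obtained by the usual free-algebra construction as a colimit of a chain built from $H$, $V$ and coproducts with~$V$; tensoring that chain with $F$ and transporting the algebra operations along the point-strength $s^H$ (Corollary~\ref{hntt}) exhibits $F\otimes F$ as the free $H$-algebra on $V\otimes F\cong F$, with universal morphism $i^F\otimes\id_F$. With respect to this structure the defining square~\eqref{abb} of an $H$-monoid, instantiated at~$F$, says precisely that $m^F$ is an $H$-algebra homomorphism, while the left unit law gives $m^F\cdot(i^F\otimes\id_F)=\id_F$. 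Hence $m^F$ is the unique $H$-algebra homomorphism $F\otimes F\to F$ extending $\id_F$, and therefore $h\cdot m^F$ is the unique $H$-algebra homomorphism $F\otimes F\to A$ extending~$h$. It now suffices to show that $m\cdot(h\otimes h)$ is an $H$-algebra homomorphism $F\otimes F\to A$ that likewise extends~$h$; the extension property is a short calculation from $h\cdot i^F=i$ and the left unit law of~$A$, and uniqueness then forces $h\cdot m^F=m\cdot(h\otimes h)$.

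The main obstacle is this last verification, that $m\cdot(h\otimes h)$ is an $H$-algebra homomorphism for the strength-twisted structure on $F\otimes F$. This is a diagram chase combining three ingredients: the naturality of the point-strength $s^H$ in both of its pointed arguments $(F,i^F)$ and $(A,i)$, the fact that $h$ is an $H$-algebra homomorphism, and the $H$-monoid square~\eqref{abb} for~$A$ relating $a$ and~$m$. The delicate point is the bookkeeping: since $s^H$ is only a \emph{point}-strength, natural over $V/\SetF$ rather than over all of~$\SetF$, one must keep every object pointed and every map point-preserving throughout, which is exactly why the identity $h\cdot i^F=i$ is invoked at each stage. Once this coherence is in place the computation is routine, and the theorem follows.
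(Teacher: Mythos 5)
The paper does not actually prove Theorem~\ref{hnts}; it quotes it from~\cite{FPT}. Your overall strategy is the standard one behind that result: obtain $h$ from the freeness of $F_{\lambda,\Sig}$ as an $H_{\lambda,\Sig}$-algebra on~$V$ (so unit-preservation and uniqueness are immediate), and reduce preservation of multiplication to a uniqueness principle for morphisms out of $F_{\lambda,\Sig}\otimes F_{\lambda,\Sig}$. The final diagram chase you sketch (naturality of $s^H$ in both pointed variables, $h$ an $H$-algebra homomorphism, square~\eqref{abb} for~$A$) is also correct as far as it goes.

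However, the pivotal intermediate claim is false as stated: $F_{\lambda,\Sig}\otimes F_{\lambda,\Sig}$ is \emph{not} the free $H_{\lambda,\Sig}$-algebra on $V\otimes F_{\lambda,\Sig}\cong F_{\lambda,\Sig}$, and the paper explicitly warns against exactly this in Remark~\ref{td}(ii): $F_{\lambda,\Sig}\otimes Z$ fails to be the free algebra on~$Z$ even for pointed~$Z$, because the free algebra contains terms such as $\lambda x\tecd t$ in which a generator $t$ captures the bound variable~$x$, while elements of $F_{\lambda,\Sig}\otimes Z$ do not. Structurally, the point-strength has components $s^H\colon HF_{\lambda,\Sig}\otimes F_{\lambda,\Sig}\to H(F_{\lambda,\Sig}\otimes F_{\lambda,\Sig})$, i.e., it points the \emph{wrong way} to transport the algebra structure $\varphi\otimes\id$ to a map $H(F_{\lambda,\Sig}\otimes F_{\lambda,\Sig})\to F_{\lambda,\Sig}\otimes F_{\lambda,\Sig}$; concretely, for the $\delta$-summand one has $(\delta F\otimes Z)(\Gamma)=F(Z\Gamma+1)$ versus $\delta(F\otimes Z)(\Gamma)=F(Z(\Gamma+1))$, and these differ. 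What is true, and what your argument actually needs, is the weaker parameterized recursion property: for every $H_{\lambda,\Sig}$-algebra $(B,\beta)$ and every $g\colon F_{\lambda,\Sig}\to B$ there is a \emph{unique} $\hat g\colon F_{\lambda,\Sig}\otimes F_{\lambda,\Sig}\to B$ with $\hat g\tec(i^F\otimes\id)=g$ and $\hat g\tec(\varphi\otimes\id)=\beta\tec H\hat g\tec s^H$. This is obtained by tensoring the initial-algebra chain with $F_{\lambda,\Sig}$ (no transport of algebra structure is required), and it is precisely the $f^\flat$-style universal property the paper establishes for $T$ and $R$ in Theorems \ref{T:monad-T} and~\ref{nncs} (diagrams \eqref{fflat} and~\eqref{nnrcss}; see also Remark~\ref{R:new}). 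Once you substitute this principle for the false freeness claim, your verification that both $h\tec m^F$ and $m\tec(h\otimes h)$ satisfy it over the extension $h$ goes through verbatim and the proof is complete.
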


\begin{thm}[see \cite{MU}]
\label{ntsjt}
The presheaf $T_{\lambda,\Sigma}$ of $\lambda$-\Sig-terms is an
$H_{\lambda,\Sigma}$-monoid with simultaneous substitution as monoid
structure.
\end{thm}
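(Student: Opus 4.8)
The plan is to equip the $H_{\lambda,\Sig}$-algebra $T_{\lambda,\Sig}$ with the monoid structure required by Definition~\ref{hntc} and to verify the coherence square~\eqref{abb}. The algebra structure $\tau\colon H_{\lambda,\Sig}T_{\lambda,\Sig}\to T_{\lambda,\Sig}$ is the evident tupling of terms (Example~\ref{dp} together with Theorem~\ref{djp}(iii)), and the unit is the canonical pointing $i^T\colon V\to T_{\lambda,\Sig}$. The only real content is the multiplication $m^T\colon T_{\lambda,\Sig}\otimes T_{\lambda,\Sig}\to T_{\lambda,\Sig}$ (simultaneous substitution of possibly infinite terms) together with the two unit laws, associativity, and~\eqref{abb}. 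Rather than manipulate infinite trees directly, I would \emph{define} $m^T$ as the unique solution of a flat equation in the completely iterative algebra $T_{\lambda,\Sig}$, so that naturality and every equational law drop out of uniqueness of solutions.

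Concretely, since $T_{\lambda,\Sig}$ is the free completely iterative algebra on $V$ (Theorem~\ref{djp}(iii)), Example~\ref{ndoa} tells us that $[\tau,i^T]\colon H_{\lambda,\Sig}T_{\lambda,\Sig}+V\to T_{\lambda,\Sig}$ is invertible; write $\alpha$ for its inverse, the terminal-coalgebra structure. Whiskering $\alpha$ with the inner copy of $T_{\lambda,\Sig}$ and using distributivity of ${-}\otimes T_{\lambda,\Sig}$ over coproducts (Remark~\ref{hndj}(i)), the left-unit isomorphism $V\otimes T_{\lambda,\Sig}\cong T_{\lambda,\Sig}$, and the point-strength $s^H$ of Corollary~\ref{hntt}, I obtain the flat equation morphism
\[
e\colon T_{\lambda,\Sig}\otimes T_{\lambda,\Sig}
\nsi{\alpha\otimes\id}
(H_{\lambda,\Sig}T_{\lambda,\Sig})\otimes T_{\lambda,\Sig}+T_{\lambda,\Sig}
\nsi{s^H+\id}
H_{\lambda,\Sig}(T_{\lambda,\Sig}\otimes T_{\lambda,\Sig})+T_{\lambda,\Sig},
\]
where the codomain of the first arrow already incorporates the distributivity and the unit isomorphism. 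I then set $m^T:=e^\dag$. Unfolding the defining square~\eqref{rdd} shows that $e^\dag$ sends a single outer variable $v\in T_{\lambda,\Sig}(\Gamma)$ to the inner term $v$ and commutes with every term constructor (the $\delta$-part of $s^H$, as in Example~\ref{hntd}(iii), supplying exactly the weakening needed under a binder), so $e^\dag$ is simultaneous substitution. Because $e$ is a genuine morphism of $\SetF$, its solution $m^T$ is automatically natural, and since everything is phrased on the quotient presheaf $T_{\lambda,\Sig}$, compatibility with $\sim_\alpha$ needs no separate argument.

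It remains to verify the laws, in each case by exhibiting a flat equation solved by the relevant composite and appealing to uniqueness. The coherence square~\eqref{abb} holds essentially by construction: precomposing the defining identity $e^\dag=[\tau,\id]\tec(H_{\lambda,\Sig}e^\dag+\id)\tec e$ with the inclusion of an element whose root is a constructor, and using $\alpha\tec\tau=\inl$ together with the unit axiom and triangle~\eqref{triangle} of the point-strength (Definition~\ref{hntj}), yields exactly $m^T\tec(\tau\otimes\id)=\tau\tec H_{\lambda,\Sig}m^T\tec s^H$. The unit laws $m^T\tec(i^T\otimes\id)=\id=m^T\tec(\id\otimes i^T)$ follow likewise, each side being a solution of the same equation as $\id_{T_{\lambda,\Sig}}$.

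The main obstacle is associativity, $m^T\tec(m^T\otimes\id)=m^T\tec(\id\otimes m^T)$ as morphisms $T_{\lambda,\Sig}\otimes T_{\lambda,\Sig}\otimes T_{\lambda,\Sig}\to T_{\lambda,\Sig}$. Here I would construct a single flat equation morphism on $T_{\lambda,\Sig}\otimes T_{\lambda,\Sig}\otimes T_{\lambda,\Sig}$ that both composites solve, and conclude equality from uniqueness of solutions in the cia $T_{\lambda,\Sig}$, thereby avoiding any transfinite induction on trees. The delicate point is to interleave the two substitution steps correctly while threading the two occurrences of the point-strength through the associativity coherence of $\otimes$ and the triangle~\eqref{triangle}; this bookkeeping is where the argument must be carried out with care.
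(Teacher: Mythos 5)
Your construction is correct, but note that the paper does not prove this statement on the spot: it is quoted from \cite{MU}, with only the remark that the methods there extend from $T_\lambda$ to $T_{\lambda,\Sig}$. The paper's own derivation of the $H_{\lambda,\Sig}$-monoid structure on $T_{\lambda,\Sig}$ comes later and is factored differently: Theorem~\ref{T:monad-T} first equips the whole monad $\TT$ with a point-strength $s^T$, each component being an $f^\flat$ obtained by solving a flat equation in a cia via Diagram~\eqref{fflat}, and Corollary~\ref{cojp} then sets $m=\mu^T_I\tec s^T_{I,TI}$ and deduces the unit laws, associativity and Square~\eqref{abb} from the point-strong-monad axioms \eqref{nnnrcp} and~\eqref{nnrcs} together with the monad laws. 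Your single flat equation $e=(s^H+\id)\tec(\alpha\otimes\id)$ is exactly the paper's equation defining $f^\flat$ in the case $X=Z=V$, $Y=T_{\lambda,\Sig}$, $f=\id$; and since $\mu^T_I$ is an $H$-algebra homomorphism between cias and hence preserves solutions, $e^\dag=(\id)^\flat=\mu^T_I\tec(\eta^T_{TI})^\flat=\mu^T_I\tec s^T_{I,TI}$, so the two multiplications coincide. What your route buys is economy: you never need the general components of $s^T$ nor the monad multiplication, only the one cia $T_{\lambda,\Sig}$; this is close in spirit to the elementary proof the paper attributes to \cite{AMV2}. What the paper's factorization buys is reuse: the same strength (and its rational analogue $s^R$) drives the initiality results, Theorems \ref{nncjn} and~\ref{nncjjjp}, and the solution theorem for recursion schemes. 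Two spots in your write-up are asserted rather than carried out and should be completed: the identification of $e^\dag$ with concrete simultaneous substitution requires checking that tree-level substitution is well defined modulo $\sim_\alpha$ and makes \eqref{rdd} commute, uniqueness of solutions then giving the identification; and associativity, which you only sketch, is precisely the diagram chase the paper performs in Corollary~\ref{nncss} via~\eqref{nnrcs} and~\eqref{triangle} --- your plan of exhibiting one equation solved by both composites is workable, but it is where most of the labour lies.
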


Although in~\cite{MU}, Example~13, just $T_\lambda$ is used, the
methods of that paper apply to $T_{\lambda,\Sig}$ immediately. The
following theorem proves a stronger property of~$T_{\lambda,\Sig}$,
corresponding to Theorem~\ref{hnts} above.

\begin{thm}
\label{hntss}
The presheaf~$T_{\lambda,\Sig}$ of $\lambda$-\Sig-terms is the
initial completely iterative $H_{\lambda,\Sig}$-monoid.
\end{thm}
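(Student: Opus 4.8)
The plan is to reduce initiality to the universal property of the free completely iterative algebra already established in Theorem~\ref{djp}(iii), and then to check that the induced map respects substitution. First note that $T_{\lambda,\Sig}$ is itself a completely iterative $H_{\lambda,\Sig}$-monoid: it is an $H_{\lambda,\Sig}$-monoid by Theorem~\ref{ntsjt} (with $m^T$ the simultaneous substitution and $i^T$ the canonical pointing, see Remark~\ref{hndj}(iv)), and its underlying algebra is a cia by Theorem~\ref{djp}(iii). So it is a legitimate object of the category in question.

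Now let $(A,a,m,i)$ be an arbitrary completely iterative $H_{\lambda,\Sig}$-monoid. Since $A$ is a cia and $T_{\lambda,\Sig}$ is the free cia on $V$, the pointing $i\colon V\to A$ (recall $V$ is the monoidal unit) extends to a unique $H_{\lambda,\Sig}$-algebra homomorphism $h\colon T_{\lambda,\Sig}\to A$ with $h\tec i^T=i$. I would take $h$ as the candidate monoid homomorphism. Uniqueness is then immediate: any $H_{\lambda,\Sig}$-monoid homomorphism $T_{\lambda,\Sig}\to A$ is in particular an $H_{\lambda,\Sig}$-algebra homomorphism preserving the pointing, hence equals $h$ by the freeness in Theorem~\ref{djp}(iii). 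The unit law $h\tec i^T=i$ holds by construction, so \emph{the whole content of the theorem is the multiplicativity} $h\tec m^T=m\tec(h\otimes h)$.

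To prove multiplicativity I would use uniqueness of solutions in the cia $A$. Put $P=T_{\lambda,\Sig}\otimes T_{\lambda,\Sig}$. Since $T_{\lambda,\Sig}$ is the terminal coalgebra for $H_{\lambda,\Sig}(-)+V$ (Example~\ref{ndoa}), the map $[\tau,i^T]\colon H_{\lambda,\Sig}T_{\lambda,\Sig}+V\to T_{\lambda,\Sig}$ is invertible, and because $-\otimes T_{\lambda,\Sig}$ preserves coproducts (Remark~\ref{hndj}(i)) we may decompose the left factor of $P$. Composing this decomposition with the point-strength $s^H$ on the constructor summand, with the left unitor $V\otimes T_{\lambda,\Sig}\cong T_{\lambda,\Sig}$ on the variable summand, and finally applying $h$ on that second summand, yields a flat equation morphism
\[
e\colon P\to H_{\lambda,\Sig}P+A .
\]
I claim that both $h\tec m^T$ and $m\tec(h\otimes h)$ are solutions of $e$ in the sense of Diagram~\eqref{rdd}; uniqueness in the cia then forces them to coincide.

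Verifying the two solution equations is the heart of the matter, and the place where I expect the real work. That $h\tec m^T$ solves $e$ uses that $m^T$ itself satisfies the corecursion built into $e$ (on the constructor summand this is exactly the $H$-monoid square~\eqref{abb} for $T_{\lambda,\Sig}$, namely $\tau\tec H_{\lambda,\Sig}m^T\tec s^H=m^T\tec(\tau\otimes\id)$, and on the variable summand it is the left unit law), together with $h$ being an $H_{\lambda,\Sig}$-algebra homomorphism. That $m\tec(h\otimes h)$ solves $e$ uses the $H$-monoid square~\eqref{abb} for $A$ and the left unit law of $A$; the one genuinely delicate point is that passing $h\otimes h$ through $s^H$ requires the naturality of the point-strength in both arguments, which applies precisely because $h$ is a morphism of pointed presheaves $(T_{\lambda,\Sig},i^T)\to(A,i)$. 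I expect the bookkeeping of these point-strength naturality squares---and checking that the two summands of $e$ are matched correctly on both sides---to be the main obstacle; everything else is formal. Since $h$ is thus a homomorphism into the arbitrary cia monoid $A$, together with the uniqueness above this establishes that $T_{\lambda,\Sig}$ is the initial completely iterative $H_{\lambda,\Sig}$-monoid.
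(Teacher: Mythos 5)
Your proposal is correct and follows essentially the same route as the paper, which proves the general Theorem~\ref{nncjjjp} (for the free completely iterative algebra $TI$ of any finitary point-strong endofunctor) by exactly your strategy: take $h$ to be the unique $H$-algebra homomorphism extending the pointing, form the equation morphism $e=s^H+h$ on the decomposition $HTI\otimes TI+TI$ of $TI\otimes TI$, and show that both passages of the multiplicativity square are solutions of $e$ in the cia $\bar A$ (using the $H$-monoid square~\eqref{abb}, the unit laws, and naturality of the point-strength in both pointed arguments). The only cosmetic difference is that you argue directly with the concrete substitution monoid $m^T$ of Theorem~\ref{ntsjt}, whereas the paper works abstractly with $\mu^T\tec s^T$ and then specializes in Corollary~\ref{nncjjtc}.
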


An elementary proof of this theorem was presented in~\cite{AMV2}.
Here we will prove a more general result in Theorem~\ref{nncjjjp}
below.

\section{The Initial Iterative \texorpdfstring{$H$}{H}-Monoid}
\setcounter{equation}{0}
\label{snc}

\noindent The aim of this section is to prove that the
presheaf~$R_{\lambda,\Sig}$ of rational $\lambda$-\Sig-terms is the
initial iterative $H_{\lambda,\Sig}$-monoid in~\SetF. We have (in
contrast to the characterization of~$T_\lambda$ in the preceding
section) no elementary proof. Rather, we need to work with the
monad~$\rr_{\lambda,\Sig}$
of free iterative $H_{\lambda,\Sig}$-algebras on~\SetF\
(for which $R_{\lambda,\Sig}$~is~$\rr_{\lambda,\Sig}(V)$) and prove
that it is point-strong and use this strength further. We will
actually work in a more general setting (which can be
applied later for the case of typed $\lambda$-calculus).

\begin{asm}
\label{nncj}\textnormal{%
Throughout this section we assume that $H$~is a finitary endofunctor
of~\WW\ where
\begin{enumerate}[(1)]
\item
\WW~is a locally finitely presentable category, i.e., a cocomplete
category with a set of finitely presentable
objects~$\WW_{\mathsf{fp}}$ whose
closure under filtered colimits is all of~\WW.
\item
\WW~is also a strict
monoidal category with the unit~$I$ finitely presentable and the
tensor product preserving finite presentability: if $A,B$~are
finitely presentable, then so is $A\otimes B$.
\item
\WW~is right distributive, that is, for every object~$W$ the
endofunctor~${-\otimes}W$ preserves finite coproducts.
\item
The tensor product is a finitary functor, i.e., its preserves
filtered colimits (in both variables).
\end{enumerate}
We call categories satisfying \mbox{\textnormal{(1)--(4)}}
\textit{monoidally locally finitely presentable}.}
\end{asm}

\begin{exa}
\label{ncdjp}
Set is, as a cartesian closed category, monoidally locally finitely
presentable. For every monoidally locally finitely presentable
category~\WW\ all functor categories~$\WW^{\AAA}$, \AAA~small, have
the property too; for example, \SetF~with the cartesian product as
tensor.
However, in our paper we only use
the fact that \SetF~is a monoidally locally finitely presentable
w.r.t.~$\otimes$ in Remark~\ref{hndj}(i).
This follows from the fact that this is
equivalent to~$\Fin(\Set,\Set)$ with the tensor product given by
composition.
Observe that $\otimes$~is right distributive (since precomposition
with a given functor preserves colimits) but not left distributive.
\end{exa}

\begin{notation}
\label{nncd}
For every object~$Y$ of~\WW\ we denote by
\[
\varrho_Y\colon HRY\to RY\qquad\text{and}\qquad \eta_Y\colon Y\to RY
\]
the structure and oniversal morphism of
the free iterative $H$-algebra on~$Y$, which exists as proved
in~\cite{AMV1}.
This gives rise to the monad
\[\rr=(R,\eta,\mu)\]
where $\mu_X\colon RRY\to RY$ is the unique homomorphism
extending~$\eta_Y$:\smallskip

\begin{equation}
\label{nnrcj}
\myvc{
\begin{psmatrix}[colsep=15mm,rowsep=10mm]
HRRY&RRY&Y\\
HRY&RY
\psset{arrows=->,nodesep=3pt}
\everypsbox{\scriptstyle}
\ncline{1,1}{1,2}^{\varrho_{RY}}
\ncline{1,3}{1,2}^{\eta_{RY}}
\ncline{1,1}{2,1}<{H\mu_Y}
\ncline{1,3}{2,2}>{\eta_Y}
\ncline{2,1}{2,2}_{\varrho_Y}
\ncline{1,2}{2,2}<{\mu_Y}
\end{psmatrix}
}
\end{equation}\smallskip

\noindent
\rr~is called the \textit{rational monad} of the endofunctor~$H$.
\end{notation}

\begin{rem}
\label{nnct}
In~\cite{AMV1}
we described the free iterative $H$-algebra~$RY$ as the colimit of
the diagram of all ``flat equation''  morphisms
\[e\colon W\to HW+Y,\qquad\text{$W\in\WW$ finitely presentable,}\]
whose connecting morphisms (``equation morphisms'') are just the
coalgebra homomorphisms~$h$ for the endofunctor $H({-})+Y$:\smallskip

\begin{equation}
\label{nnrcd}
\myvc{
\begin{psmatrix}[colsep=15mm,rowsep=10mm]
W&HW+Y\\
W'&HW'+Y
\psset{arrows=->,nodesep=3pt}
\everypsbox{\scriptstyle}
\ncline{1,1}{1,2}^{e\rule{8pt}{0pt}}
\ncline{1,1}{2,1}<{h}
\ncline{2,1}{2,2}_{e'\rule{8pt}{0pt}}
\ncline{1,2}{2,2}>{Hh+\id}
\end{psmatrix}
}
\end{equation}\smallskip

\noindent
More detailed:
\begin{enumerate}[(i)]
\item
The category~$\mathsf{EQ}_Y$
of all flat equation morphisms in~$Y$ is filtered. The filtered
diagram
\[\mathsf{Eq}_Y\colon \mathsf{EQ}_Y\to W,\qquad
\mathsf{Eq}_Y(W\nsi{e}HW+Y)=W\]
has a colimit~$RY$ with the colimit injections
$e^\#\colon W\to RY$.

\item
For the flat equation morphism
$\inl\colon Y\to HY+Y$
put
\begin{equation}
\label{nnrct}
\eta_Y=\inr^\#\colon Y\to RY.
\end{equation}

\item
There is a unique isomorphism $i\colon RY\to HRY+Y$ such that the
squares\smallskip
\begin{equation}
\label{nnrcc}
\myvc{
\begin{psmatrix}[colsep=15mm,rowsep=10mm]
W&HW+Y\\
RY&HRY+Y
\psset{arrows=->,nodesep=3pt}
\everypsbox{\scriptstyle}
\ncline{1,1}{1,2}^{e\rule{8pt}{0pt}}
\ncline{1,1}{2,1}<{e^\#}
\ncline{2,1}{2,2}_{i_Y\rule{8pt}{0pt}}
\ncline{1,2}{2,2}>{He^\#+Y}
\end{psmatrix}
}
\end{equation}\smallskip

\noindent
commute for all flat equations~$e$. Put
\[\varrho\equiv HRY\nsi{\inl}HRY+Y\nsi{i_Y^{-1}}RY.\]
Then $RY$ together with $\eta_Y$ and~$\varrho_Y$ is the free
iterative $H$-algebra on~$Y$. We also have
\begin{equation}
\label{nnrcp}
i_Y=[\varrho_Y,\eta_Y]^{-1}.
\end{equation}
Furthermore, $e^\#$~is the unique coalgebra homomorphism from~$e$
to~$i_Y$.

\item
For every $e\colon W\to HW+Y$ the morphism $e^\#\colon W\to RY$ is
the unique solution (in the iterative algebra~$RY$) of
\[\eta_Y\bullet e\equiv W\nsi{e}HW+Y\nsi{HW+\eta_Y}HW+RY.\]

\item[(v)]
Let
\[e\colon W\to HW+RY\qquad\text{and}\qquad e'\colon W'\to HW'+RY\]
be two equation morphisms with $W$ and~$W'$ finitely presentable, and
let
$h$~be a coalgebra homomorphism from~$(W,e)$ to~$(W,e')$. Then
for the unique solutions of $e$ and~$e'$
we have
\[e^\dag=\bigl(e'\bigr)^\dag\tec h\colon W\to RY.\]

\item
Suppose we have two morphisms
\[f\colon V\to HV+W\qquad\text{and}\qquad e\colon W\to HW+RY\]
where $V,W$~are finitely presentable. Then we can form an equation
morphism
\vspace{3mm}
\[\begin{psmatrix}[colsep=15mm,rowsep=10mm]
e\ctv f\equiv V+W&
HV+W&HV+HW+RY\\
&&H(V+W)+RY
\psset{arrows=->,nodesep=3pt}
\everypsbox{\scriptstyle}
\ncline{1,1}{1,2}^{\rule{4mm}{0mm}[f,\inr]}
\ncline{1,2}{1,3}^{HV+e\rule{5mm}{0mm}}
\ncline{1,3}{2,3}>{\can+RY}
\end{psmatrix}\]
and we have
\begin{equation}
\label{rcoo}
\bigl(e^\dag\bullet f\bigr)^\dag=(e\ctv f)^\dag\tec\inl,
\end{equation}
see~\cite{AMV-Elgot}.

\item
Finally, every homomorphism~$h\colon A\to B$
of $H$-algebras between
iterative algebras $A$ and~$B$ \textit{preserves solutions}:
\[h\tec e^\dag=(h\bullet e)^\dag\colon X\to B\]
for every equation morphism $e\colon X\to HX+A$.
\end{enumerate}
\end{rem}

\begin{exa}
\label{nncc}
The rational monad of~$H_\lambda$ is the monad~$\rr_\lambda$ of
rational $\lambda$-terms with
constants: to every presheaf~$Y$ it assigns the
presheaf~$R_\lambda(Y)$ defined precisely as~$R_\lambda$ in
Example~\ref{dd}(v) except that in every
context~$\Gamma$ we can also use
elements of~$Y(\Gamma)$ to label the leaves.

More detailed: we first define the set~$R'_\lambda(Y)(\Gamma)$ of
rational trees in context~$\Gamma$
with constants from~$Y$. It consists of all rational trees
of the form~\eqref{rdj} such that
\[
\text{a node labelled by an element of~$Y(\Gamma)$ is a leaf.}
\]
By using the $\alpha$-conversion precisely as in
Example~\ref{dd}(iii), we
obtain the desired presheaf
\[R_\lambda(Y)(\Gamma)=
R'_\lambda(Y)(\Gamma)/{\sim_\alpha}.\]
It is again pointed; the pointing
$i^{R_\lambda(Y)}\colon V\to R_\lambda(Y)$
assigns to every variable the corresponding singleton tree. And
$R_\lambda(Y)$~is canonically an $H_\lambda$-algebra. We define
\[\eta_Y\colon Y\to R_\lambda(Y)\]
to assign to every element of~$Y(\Gamma)$ the corresponding singleton
tree. This is the free iterative $H_\lambda$-algebra on~$Y$, the
proof is completely analogous to that of Theorem~\ref{hdjt}.
\end{exa}

\begin{defi}
\label{nncp}
A \textit{point-strong monad} is a monad $\MM=(M,\eta,\mu)$ on~\WW\
together with a point-strength
\[s_{(X,x),(Y,y)}\colon(MX)\otimes Y\to M(X\otimes Y)\]
see Definition~\ref{hntj}, such that $s$~preserves the unit:\smallskip

\begin{equation}
\label{nnnrcp}
\myvc{
\begin{psmatrix}[colsep=15mm,rowsep=10mm]
MX\otimes Y&&M(X\otimes Y)\\
&X\otimes Y
\psset{arrows=->,nodesep=3pt}
\everypsbox{\scriptstyle}
\ncline{1,1}{1,3}^{s_{(X,x),(Y,y)}}
\ncline{2,2}{1,1}<{\eta^{\strut}_X\otimes Y}
\ncline{2,2}{1,3}>{\eta^{\strut}_{X\otimes Y}}
\end{psmatrix}
}
\end{equation}\smallskip

\noindent and the multiplication:\smallskip

\begin{equation}
\label{nnrcs}
\myvc{
\begin{psmatrix}[colsep=15mm,rowsep=10mm]
MMX\otimes Y&M(MX\otimes Y)&MM(X\otimes Y)\\
MX\otimes Y&&M(X\otimes Y)
\psset{arrows=->,nodesep=3pt}
\everypsbox{\scriptstyle}
\ncline{1,1}{1,2}^{s_{(MX,\eta_X\tec x),(Y,y)}}
\ncline{1,2}{1,3}^{Ms_{(X,x),(Y,y)}}
\ncline{1,1}{2,1}<{\mu_X\otimes Y}
\ncline{2,1}{2,3}_{s_{(X,x),(Y,y)}}
\ncline{1,3}{2,3}>{\mu_Y}
\end{psmatrix}
}
\end{equation}\smallskip
\end{defi}

\begin{exa}
\label{E:monad-T}
By our assumption that $H$~be finitary we know that all terminal
coalgebras for $H({-})+X$ exist, this follows from~\cite{n}, see
also~\cite{nc}. Equivalently, all free completely iterative algebras
for~$H$ exist (cf. Example~\ref{ndoa}), and they
yield the object map of a monad
$\TT=(T,\eta^T,\mu^T)$. This monad is the free completely iterative
monad on the endofunctor~$H$, see~\cite{nn}. The monad multiplication
$\mu^T_X\colon TTX\to TX$ is the unique algebra homomorphism
extending~$\id_{TX}$, i.e., such that
\begin{equation}
\label{homo}
\mu^T_X\tec \tau^{\phantom{T}}_{TX}=\tau^{\phantom{T}}_X\tec
H\mu^T_X\qquad\text{and}\qquad
\mu^T_X\tec\eta^T_{TX}=\id_{TX}.
\end{equation}
\end{exa}

\begin{thm}
\label{T:monad-T}
The free completely iterative monad~\TT\ of a point-strong
endofunctor~$H$ is point-strong.
\end{thm}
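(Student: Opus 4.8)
The plan is to define the point-strength of $\TT$ by the universal property of terminal coalgebras and to verify every required law by the single principle that two coalgebra homomorphisms into a terminal coalgebra must coincide. Recall from Example~\ref{ndoa} that, writing $\alpha_X=[\tau_X,\eta^T_X]^{-1}\colon TX\to HTX+X$, the object $TX$ is the terminal coalgebra for $H({-})+X$. To build
\[
s_{(X,x),(Y,y)}\colon TX\otimes Y\to T(X\otimes Y),
\]
I would first equip $TX\otimes Y$ with an $H({-})+(X\otimes Y)$-coalgebra structure $c_{(X,x),(Y,y)}$ given by the composite
\[
TX\otimes Y \nsi{\alpha_X\otimes Y} (HTX+X)\otimes Y \cong (HTX\otimes Y)+(X\otimes Y) \nsi{s^H+\id} H(TX\otimes Y)+(X\otimes Y),
\]
where the middle isomorphism is right distributivity (Assumption~\ref{nncj}(3)) and $s^H=s^H_{(TX,\eta^T_X\tec x),(Y,y)}$ is the point-strength of $H$ (Corollary~\ref{hntt}), with $TX$ pointed by $\eta^T_X\tec x$ and $Y$ by $y$. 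Then $s_{(X,x),(Y,y)}$ is declared to be the unique coalgebra homomorphism from $(TX\otimes Y,c_{(X,x),(Y,y)})$ to the terminal coalgebra $T(X\otimes Y)$. Naturality in both pointed arguments follows from naturality of $s^H$, of $\alpha$, and of the distributivity isomorphism, together with uniqueness of homomorphisms into a terminal coalgebra.

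Next I would verify the two point-strength axioms of Definition~\ref{hntj}. Axiom (i) is immediate: for $Y=I$ the structure $c_{(X,x),(I,\id)}$ reduces to $\alpha_X$ because $s^H_{(TX,-),(I,\id)}=\id$, so the induced homomorphism is $\id_{TX}$. For the triangle~\eqref{triangle} I would use the corresponding triangle for $s^H$ and right distributivity to show that both legs equip $TX\otimes Y\otimes Z$ with the same $H({-})+(X\otimes Y\otimes Z)$-coalgebra structure and are coalgebra homomorphisms into $T(X\otimes Y\otimes Z)$; hence they coincide. The monad unit law~\eqref{nnnrcp} is handled the same way: since $\alpha_X\tec\eta^T_X=\inr$, the morphism $\eta^T_X\otimes Y$ is a coalgebra homomorphism from $(X\otimes Y,\inr)$ to $(TX\otimes Y,c_{(X,x),(Y,y)})$, and composing with the homomorphism $s_{(X,x),(Y,y)}$ gives a homomorphism $(X\otimes Y,\inr)\to T(X\otimes Y)$, which is forced to equal $\eta^T_{X\otimes Y}$.

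The hard part is the multiplication law~\eqref{nnrcs}, namely that $s_{(X,x),(Y,y)}\tec(\mu^T_X\otimes Y)$ equals $\mu^T_{X\otimes Y}\tec Ts_{(X,x),(Y,y)}\tec s_{(TX,\eta^T_X\tec x),(Y,y)}$. Again I would prove this by uniqueness: I would put on $TTX\otimes Y$ a single $H({-})+(X\otimes Y)$-coalgebra structure $d$ — built from $\alpha_{TX}\otimes Y$, right distributivity and $s^H$ in the first summand, while the base summand $TX\otimes Y$ is unfolded by $c_{(X,x),(Y,y)}$ and its $H$-part reinjected along $H(\eta^T_X\otimes Y)$ — and then show that both composites are coalgebra homomorphisms from $(TTX\otimes Y,d)$ into the terminal coalgebra $T(X\otimes Y)$. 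For the left-hand composite this reduces to checking that $\mu^T_X\otimes Y$ is a homomorphism $(TTX\otimes Y,d)\to(TX\otimes Y,c_{(X,x),(Y,y)})$, which unwinds to the defining equations~\eqref{homo} of $\mu^T$ tensored with $Y$ and transported through $s^H$. For the right-hand composite one uses that $s_{(TX,\eta^T_X\tec x),(Y,y)}$ and $Ts_{(X,x),(Y,y)}$ are homomorphisms over the appropriate base maps and that $\mu^T_{X\otimes Y}$ collapses the base, again invoking~\eqref{homo}; here the associativity triangle of $s^H$~\eqref{triangle} is what makes the two nested uses of $s^H$ (inside $d$ and inside $s_{(TX,\eta^T_X\tec x),(Y,y)}$) compatible. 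Since homomorphisms into the terminal coalgebra are unique, the two composites agree. I expect the bookkeeping in this last step — matching the two nested applications of $s^H$ and keeping the right-distributivity isomorphisms coherent — to be the principal obstacle; everything else reduces to the slogan that a morphism into a terminal coalgebra is determined by its coalgebra structure.
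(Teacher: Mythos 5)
Your proposal is correct and follows essentially the same route as the paper: there, too, $s^T$ is defined by corecursion into $T(X\otimes Y)$ (namely as the unique morphism $(\eta^T_{X\otimes Y})^\flat$ determined by the structure built from $\alpha_X\otimes Y$, right distributivity and $s^H$), and naturality, the strength axioms and the monad laws are all verified by the same uniqueness principle, packaged in the paper as an operator $f^\flat$ for arbitrary $f\colon X\otimes Y\to TZ$. The only cosmetic differences are that the paper reuses $f^\flat$ where you construct bespoke coalgebra structures (e.g.\ your $d$ on $TTX\otimes Y$), and that the reinjection inside $d$ should be along $H(\eta^T_{TX}\otimes Y)$ rather than $H(\eta^T_{X}\otimes Y)$.
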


\begin{trivlist}
\item[\hspace{\labelsep}\textbf{Remark.}]
The strength of~\TT\ will be proved to be the unique natural
transformation~$s^T$ for which the diagram\smallskip

\begin{equation}
\label{ffflat}
\myvc{
\begin{psmatrix}[colsep=15mm,rowsep=10mm]
HTX\otimes Y&H(TX\otimes Y)&HT(X\otimes Y)\\
TX\otimes Y&&T(X\otimes Y)\\
X\otimes Y
\psset{arrows=->,nodesep=3pt}
\everypsbox{\scriptstyle}
\ncline{1,1}{1,2}^{s^H}
\ncline{1,2}{1,3}^{Hs^T}
\ncline{1,1}{2,1}<{\tau_X\otimes Y}
\ncline{2,1}{2,3}^{s^T}
\ncline{1,3}{2,3}>{\tau^{\phantom{T}}_{X\otimes Y}}
\ncline{3,1}{2,1}<{\eta_X^T\otimes Y}
\ncline{3,1}{2,3}>{\eta_{X\otimes Y}^T}
\end{psmatrix}
}
\end{equation}\smallskip

\noindent commutes. Note that we have dropped the subscripts indicating the
components of the natural transformations $s^H$ and~$s^T$ above; from
now on we shall frequently do this when components of natural
transformations are clear from the context.
\end{trivlist}

\proof
(a)\ 
Let $(X,x)$ and~$(Y,y)$ be pointed objects.
For every morphism $f\colon X\otimes Y\to TZ$ there exists a unique
morphism $f^\flat\colon TX\otimes Y\to TZ$ such that that the diagram
\smallskip

\begin{equation}
\label{fflat}
\myvc{
\begin{psmatrix}[colsep=15mm,rowsep=10mm]
HTX\otimes Y&H(TX\otimes Y)&HTZ\\
TX\otimes Y&&TZ\\
X\otimes Y
\psset{arrows=->,nodesep=3pt}
\everypsbox{\scriptstyle}
\ncline{1,1}{1,2}^{s^H}
\ncline{1,2}{1,3}^{Hf^\flat}
\ncline{1,1}{2,1}<{\tau\otimes Y}
\ncline{2,1}{2,3}^{f^\flat}
\ncline{1,3}{2,3}>{\tau}
\ncline{3,1}{2,1}<{\eta\otimes Y}
\ncline{3,1}{2,3}>{f}
\end{psmatrix}
}
\end{equation}\smallskip

\noindent
commutes. Indeed, the algebra~$TZ$ is completely iterative. Due to
$(HTX+Y)\otimes Y=HTX\otimes Y+X\otimes Y$, see
Assumption~\ref{nncj}(3),
we obtain an equation
morphism in~$TZ$ as follows:
\[TX\otimes Y\nsi{[\tau_X^{\strut},\eta_X^T]^{-1}\otimes Y} HTX\otimes
Y+X\otimes Y\nsi{s^H_{X,Y}+f}H(TX\otimes Y)+TZ.\]
Its unique solution is denoted by~$f^\flat$. It is characterized by
the commutative diagram\smallskip

\[\begin{psmatrix}[colsep=30mm,rowsep=10mm]
TX\otimes Y&TZ\\
HTX\otimes Y+X\otimes Y\\
H(TX\otimes Y)+TZ&HTZ+TZ
\psset{arrows=->,nodesep=3pt}
\everypsbox{\scriptstyle}
\ncline{2,1}{1,1}<{[\tau,\eta^T]\otimes Y}
\ncline{2,1}{3,1}<{s^H+f}
\ncline{1,1}{1,2}^{f^\flat}
\ncline{3,1}{3,2}_{Hf^\flat+TZ}
\ncline{3,2}{1,2}>{[\tau,TZ]}
\end{psmatrix}\]\smallskip

\noindent
It is easy to verify that this diagram commutes iff
\eqref{fflat}~does.\smallskip

\noindent(b)\ 
Put
\[s^T_{(X,x),(Y,y)}=\bigl(\eta^T_{X\otimes Y}\bigr)^\flat\colon
TX\otimes Y\to
T(X\otimes Y).\]
In other words, we define the components of~$s^T$ via
\eqref{fflat} uniquely.\smallskip

(b1)\
$s^T$~is natural: the squares\smallskip

\[\begin{psmatrix}[colsep=30mm,rowsep=10mm]
TX\otimes Y&T(X\otimes Y)\\
T'X\otimes Y'&T(X'\otimes Y')
\psset{arrows=->,nodesep=3pt}
\everypsbox{\scriptstyle}
\ncline{1,1}{1,2}^{s^T}
\ncline{1,1}{2,1}<{Tg\otimes h}
\ncline{1,2}{2,2}>{T(g\otimes h)}
\ncline{2,1}{2,2}_{s^T}
\end{psmatrix}\]\smallskip

\noindent
commute for all morphisms $g$ and~$h$ of~$I/\WW$ since both
passages
form~$f^\flat$ for
\[f=\eta^{T}_{X'\otimes Y'}\tec(g\otimes h)\colon X\otimes Y\to
T(X'\otimes Y').\]
Indeed, for the upper passage, $f^\flat=T(g\otimes h)\tec s^T$, use
the following diagram:\smallskip

\[\begin{psmatrix}[colsep=18mm,rowsep=15mm]
HTX\otimes Y&HTX\otimes Y&HT(X\otimes Y)&
HT(X'\otimes Y')\\
TX\otimes Y&&T(X\otimes Y)&T(X'\otimes Y')\\
X\otimes Y&&&X'\otimes Y'
\psset{arrows=->,nodesep=3pt}
\everypsbox{\scriptstyle}
\ncline{1,1}{1,2}^{s^H}
\ncline{1,2}{1,3}^{Hs^T}
\ncline{1,3}{1,4}^{HT(h\otimes g)}
\ncline{1,1}{2,1}<{H\tau^{\strut}_X\otimes Y}
\ncline{1,3}{2,3}>{\tau^{\strut}_{X\otimes Y}}
\ncline{1,4}{2,4}>{\tau^{\strut}_{X'\otimes Y'}}
\ncline{2,1}{2,3}^{s^T}
\ncline{2,3}{2,4}^{T(h\otimes g)}
\ncline{3,1}{2,1}<{\eta^T_X\otimes Y}
\ncline{3,1}{2,3}>{\eta^T_{X\otimes Y}}
\ncline{3,4}{2,4}>{\eta^{T}_{X'\otimes Y'}}
\ncline{3,1}{3,4}_{h\otimes g}
\end{psmatrix}\]\smallskip

\noindent
The two left-hand parts form
Diagram~\eqref{fflat}, the remaining two
commute by naturality of $\tau$ and~$\eta$.

The lower passage $f^\flat=s^T\tec(Tg\otimes h)$ follows from the
following diagram:\smallskip

\[\begin{psmatrix}[colsep=18mm,rowsep=15mm]
HTX\otimes Y&HTX'\otimes Y'&H(T(X')\otimes Y')&
HT(X'\otimes Y')\\
TX\otimes Y&TX'\otimes Y'&&T(X'\otimes Y')\\
X\otimes Y&X'\otimes Y'
\psset{arrows=->,nodesep=3pt}
\everypsbox{\scriptstyle}
\ncline{1,1}{1,2}^{HTg\otimes h}
\ncline{1,2}{1,3}^{s^H}
\ncline{1,3}{1,4}^{Hs^T}
\ncline{1,1}{2,1}<{\tau^{\strut}_X\otimes Y}
\ncline{1,2}{2,2}>{\tau^{\strut}_{X'}\otimes Y'}
\ncline{1,4}{2,4}>{\tau^{\strut}_{X'\otimes Y'}}
\ncline{2,1}{2,2}^{Tg\otimes h}
\ncline{2,2}{2,4}^{s^T}
\ncline{3,1}{2,1}<{\eta^T_X\otimes Y}
\ncline{3,2}{2,2}>{\eta^T_{X'}\otimes Y'}
\ncline{3,2}{2,4}>{\eta^T_{X'\otimes Y'}}
\ncline{3,1}{3,2}_{g\otimes h}
\end{psmatrix}\]\smallskip

\noindent
The right-hand parts form
Diagram~\eqref{strength-T}, the left-hand ones
commute by naturality of $\tau$ and~$\eta^T$.\smallskip

(b2)\
$s^T$~is a point-strength of the endofunctor~$T$. Indeed, the axiom
\begin{equation}
\label{strength-T}
s^T_{(X,x)(V,v)}=\id_{T(X)}
\end{equation}
follows from the fact that if $(Y,y)=(V,\id)$,
then Diagram~\eqref{ffflat} commutes
with~$\id_{T(X)}$ in lieu of~$s^T$. To verify the
Axiom~\eqref{triangle}, apply~(a) to $f=\eta^T_{X\otimes Y\otimes
Z}$: we prove that the lower passage of~\eqref{triangle} serves
as~$f^\flat$. In detail, the diagram\smallskip

\[\begin{psmatrix}[colsep=24mm,rowsep=12mm]
&HT(X\otimes Y)\otimes Z&H(T(X\otimes Y)\otimes Z)\\
HTX\otimes Y\otimes Z&H(TX\otimes Y)\otimes Z&
HT(X\otimes Y\otimes Z)\\
TX\otimes Y\otimes Z&T(X\otimes Y)\otimes Z&
T(X\otimes Y\otimes Z)\\
X\otimes Y\otimes Z
\psset{arrows=->,nodesep=3pt}
\everypsbox{\scriptstyle}
\ncline{2,2}{1,2}<{Hs^T\otimes Z}
\ncline{1,2}{1,3}^{s^H}
\ncline{1,3}{2,3}>{Hs^T}
\ncline{2,1}{2,2}^{s^H\otimes Z}
\ncline{2,1}{3,1}<{\tau^{\strut}_X\otimes Y\otimes Z}
\ncarc[arcangle=75]{1,2}{3,2}>{\tau^{\strut}_{X\otimes Y}\otimes Z}
\ncline{2,3}{3,3}>{\tau^{\strut}_{X\otimes Y\otimes Z}}
\ncline{3,1}{3,2}^{s^T\otimes Z}
\ncline{3,2}{3,3}^{s^T}
\ncline{4,1}{3,1}<{\eta^T_X\otimes Y\otimes Z}
\ncline{4,1}{3,2}<{\eta^T_{X\otimes Y}\otimes Z}
\ncline{4,1}{3,3}>{\rule{0pt}{10pt}\eta^T_{X\otimes Y\otimes Z}}
\end{psmatrix}\]\smallskip

\noindent
commutes. Indeed, all inner parts commute by two applications
of~\eqref{ffflat}.\smallskip

(b3)\
It remains to verify the axioms of Definition~\ref{nncp}.
For~\eqref{nnnrcp} use the lower triangle of
Diagram~\eqref{ffflat}.
For~\eqref{nnrcs} apply~(a) to
\[f=s^T\colon TX\otimes Y\to T(X\otimes Y).\]
We prove that both passages of~\eqref{nnrcs} serve as~$f^\flat$. For
the lower passage, $(s^T)^\flat=s^T\tec\mu^T\otimes Y$, use the
following diagram\smallskip

\[\begin{psmatrix}[colsep=22mm,rowsep=12mm]
HTTX\otimes Y&HT(TX\otimes Y)&HT&HT(X\otimes Y)\\
&HTX\otimes Y\\
TTX\otimes Y&TX\otimes Y&&T(X\otimes Y)\\
TX\otimes Y
\psset{arrows=->,nodesep=3pt}
\everypsbox{\scriptstyle}
\ncline{1,1}{1,2}^{s^H}
\ncline{1,2}{1,3}^{\rule{6mm}{0mm}H(\mu^T\otimes Y)}
\ncline{1,3}{1,4}\naput{Hs^T}
\ncline{1,1}{2,2}<{H\mu^T\otimes Y}
\ncline{2,2}{1,3}>{s^H}
\ncline{2,2}{3,2}>{\tau\otimes Y}
\ncline{1,1}{3,1}<{\tau\otimes Y}
\ncline{1,4}{3,4}>{\tau}
\ncline{3,1}{3,2}^{\mu^T\otimes Y}
\ncline{3,2}{3,4}^{s^T}
\ncline{4,1}{3,1}<{\eta^TT\otimes Y}
\ncline[doubleline=true]{-}{4,1}{3,2}
\ncline{4,1}{3,4}>{s^T}
\end{psmatrix}\]\smallskip

\noindent
The upper left-hand part is
Equation~\eqref{homo}, the lower one commutes by the
monad axiom $\mu^T\tec\eta^TT=\id$, the upper triangle is the
naturality of~$s^H$, and the right-hand part follows
from~\eqref{ffflat}.

For the upper passage, $(s^T)^\flat=\mu^T\tec Ts^T\tec s^T$, use the
following diagram\smallskip

\[\begin{psmatrix}[colsep=18mm,rowsep=12mm]
H(TTX\otimes Y)\\
HTTX\otimes Y&HT(TX\otimes Y)&HTT(X\otimes Y)&HT(X\otimes Y)\\
TTX\otimes Y&T(TX\otimes Y)&TT(X\otimes Y)&T(X\otimes Y)\\
TX\otimes Y
\psset{arrows=->,nodesep=3pt}
\everypsbox{\scriptstyle}
\ncline{2,1}{1,1}<{s^H}
\ncline{1,1}{2,2}>{Hs^T}
\ncline{2,2}{2,3}^{HTs^T}
\ncline{2,3}{2,4}^{H\mu^T}
\ncline{2,1}{3,1}<{\tau\otimes Y}
\ncline{2,2}{3,2}>{\tau}
\ncline{2,3}{3,3}>{\tau}
\ncline{2,4}{3,4}>{\tau}
\ncline{3,1}{3,2}^{s^T}
\ncline{3,2}{3,3}^{Ts^T}
\ncline{3,3}{3,4}^{\mu^T}
\ncline{4,1}{3,1}<{\eta^T\otimes Y}
\ncline{4,1}{3,2}<{\eta^T}
\ncline{4,1}{3,4}>{\rule{8pt}{0pt}s^T}
\end{psmatrix}\]\smallskip

\noindent The three upper squares commute due to~\eqref{ffflat}, the
naturality of~$\tau$ and~\eqref{homo}. The lower triangles commute
due to~\eqref{ffflat}, the naturality of~$s^T$ and
$\mu^T\tec\eta^TT=\id$.\qed

\begin{rem}
\label{R:monad-T}
Recall from Example~\ref{ndoa} that $T=HT+\Id$ with injections
$\tau$ and~$\eta^T$. From the Diagram~\eqref{ffflat} we see that the
strength~$s^T$ then has the form
\[s^T=Hs^T\tec s^H+X\otimes Y\colon HTX\otimes Y+X\otimes Y\to
HT(X\otimes Y)+X\otimes Y.\]
\end{rem}

\begin{thm}
\label{nncs}
The rational monad of a point-strong endofunctor is point-strong.
\end{thm}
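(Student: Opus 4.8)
The plan is to imitate the proof of Theorem~\ref{T:monad-T} almost line for line, replacing the completely iterative structure $(\tau,\eta^T,\mu^T)$ of $\TT$ by the iterative structure $(\varrho,\eta,\mu)$ of $\rr$ from Notation~\ref{nncd}. That proof turned entirely on one construction: for pointed objects $(X,x),(Y,y)$ and any $f\colon X\otimes Y\to TZ$ it produced a \emph{unique} $f^\flat\colon TX\otimes Y\to TZ$ satisfying diagram~\eqref{fflat}, obtained as the unique solution, \emph{in the completely iterative algebra} $TZ$, of a single flat equation morphism with domain $TX\otimes Y$. Once an $\rr$-analogue of this $\flat$-operation is in hand, the proofs that $s^R:=(\eta_{X\otimes Y})^\flat$ is natural, is a point-strength in the sense of Definition~\ref{hntj} (axioms (i) and (ii)), and satisfies the monad laws~\eqref{nnnrcp} and~\eqref{nnrcs} are formally identical to the diagram chases of parts (b1)--(b3) there, each inner cell being justified by the $\flat$-diagram together with the naturality of $\varrho$, $\eta$ and $s^H$ and the monad equations for $\rr$.

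The one genuine difference, and the main obstacle, is that $RZ$ is merely \emph{iterative}: it solves flat equation morphisms whose domain is finitely presentable, whereas $RX\otimes Y$ is not. Thus $f^\flat$ cannot be obtained in one stroke; it must be assembled over the canonical filtered presentation of $RX$ in Remark~\ref{nnct}(i), namely $RX=\colim_{e\in\mathsf{EQ}_X}W$ with injections $e^\#\colon W\to RX$ and each $W$ finitely presentable. Since $-\otimes Y$ preserves colimits (Remark~\ref{hndj}(i)) we get $RX\otimes Y=\colim_e (W\otimes Y)$, and by Assumption~\ref{nncj}(2) each $W\otimes Y$ is finitely presentable as soon as $Y$ is. So I would first treat $(Y,y)$ with $Y$ finitely presentable and remove this hypothesis only at the end.

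For the restriction of the intended $\flat$-equation along $e^\#\otimes Y$ to be a genuine flat equation $g_e\colon W\otimes Y\to H(W\otimes Y)+RZ$ on the finitely presentable object $W\otimes Y$, its $H$-part must land inside $H(W\otimes Y)$; by~\eqref{nnrcc}, \eqref{nnrcp} and right distributivity (Assumption~\ref{nncj}(3)) this reduces to the point-naturality square of $s^H$ along $e^\#$. Here lies the subtlety: $s^H$ is only \emph{point}-natural, so $e^\#$ must be a morphism of $I/\WW$, i.e. $W$ must carry a compatible point. This is arranged by passing to a cofinal subdiagram: since $I$ is finitely presentable (Assumption~\ref{nncj}(2)), the point $\eta_X\tec x\colon I\to RX$ factors through some injection, and the filteredness of $\mathsf{EQ}_X$ then shows that the equation objects $(W,e)$ equipped with a point $w\colon I\to W$ satisfying $e^\#\tec w=\eta_X\tec x$ form a cofinal filtered subdiagram. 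Working over it, $s^H_{(W,w),(Y,y)}$ is available, each $g_e$ is defined and has a unique solution $g_e^\dag$ in $RZ$, these solutions cohere along the (point-preserving) connecting morphisms by Remark~\ref{nnct}(v), and their colimit is the desired $f^\flat\colon RX\otimes Y\to RZ$; uniqueness follows because any morphism satisfying the $\flat$-diagram restricts to a solution of each $g_e$.

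Finally I would drop the assumption that $Y$ be finitely presentable. Both $RX\otimes-$ and $R(X\otimes-)$ are finitary, since the tensor is finitary (Assumption~\ref{nncj}(4)) and the rational monad $\rr$ is finitary, so the family $s^R$ defined on finitely presentable $Y$ extends uniquely along the presentation of an arbitrary $(Y,y)$ as a filtered colimit of finitely presentable objects of $I/\WW$, and all the point-strength and monad-strength axioms persist, being equalities of finitary functors determined by their values on finitely presentable objects. I expect the cofinality argument for pointed equation objects, which alone licenses the use of the merely point-natural $s^H$ on the finitely presentable pieces, to be the delicate step; everything after it is a transcription of Theorem~\ref{T:monad-T}.
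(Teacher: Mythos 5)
Your proposal follows essentially the same route as the paper: construct $f^\flat$ over the filtered presentation $RX=\colim\mathsf{Eq}_X$ for finitely presentable $Y$ (each restricted equation $\hat e$ on $W\otimes Y$ being solvable in the merely iterative algebra $RZ$ because $W\otimes Y$ is finitely presentable), pass to arbitrary $Y$ by a filtered-colimit argument using that $I$ is finitely presentable and $\otimes$ is finitary, and then set $s^R=\eta^\flat_{X\otimes Y}$ and transcribe the verification from Theorem~\ref{T:monad-T}. The one point where you are more careful than the paper is the pointing of the equation objects $W$ required to apply the merely point-natural $s^H$; the paper uses this silently in its proof of the present theorem and only makes the restriction to pointed $W$ (via the cofinality argument you describe) explicit later, in the proof of Theorem~\ref{nncjn}.
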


\begin{trivlist}
\item[\hspace{\labelsep}\textbf{Remark.}]
The strength of~\rr\ will be proved to be the unique natural
transformation~$s^R$ for which the diagram\smallskip

\begin{equation}
\label{nnrcsjp}
\myvc{
\begin{psmatrix}[colsep=15mm,rowsep=10mm]
HRX\otimes Y&H(RX\otimes Y)&HR(X\otimes Y)\\
RX\otimes Y&&R(X\otimes Y)\\
X\otimes Y
\psset{arrows=->,nodesep=3pt}
\everypsbox{\scriptstyle}
\ncline{1,1}{1,2}^{s^H}
\ncline{1,2}{1,3}^{Hs^R}
\ncline{1,1}{2,1}<{\varrho_X\otimes Y}
\ncline{2,1}{2,3}^{s^R}
\ncline{1,3}{2,3}>{\varrho_{X\otimes Y}}
\ncline{3,1}{2,1}<{\eta_X\otimes Y}
\ncline{3,1}{2,3}>{\eta_{X\otimes Y}}
\end{psmatrix}
}
\end{equation}\smallskip

\noindent
commutes.
\end{trivlist}

\proof
(a)\ 
Given pointed objects $(X,x)$ and~$(Y,y)$,
we prove that for every morphism $f\colon X\otimes Y\to RZ$ there
exists a unique morphism $f^\flat\colon RX\otimes Y\to RZ$ such that
the
following diagram commutes:\smallskip

\begin{equation}
\label{nnrcss}
\myvc{
\begin{psmatrix}[colsep=15mm,rowsep=10mm]
HRX\otimes Y&H(RX\otimes Y)&HRZ\\
RX\otimes Y&&RZ\\
X\otimes Y
\psset{arrows=->,nodesep=3pt}
\everypsbox{\scriptstyle}
\ncline{1,1}{1,2}^{s^H}
\ncline{1,2}{1,3}\naput{Hf^\flat}
\ncline{1,1}{2,1}<{\varrho_X\otimes Y}
\ncline{2,1}{2,3}^{f^\flat}
\ncline{1,3}{2,3}>{\varrho_Z}
\ncline{3,1}{2,1}<{\eta_X\otimes Y}
\ncline{3,1}{2,3}>{f}
\end{psmatrix}
}
\end{equation}\smallskip

(a1)\
Assume that $Y$~is finitely presentable. Recall
$RY=\colim\mathsf{Eq}_Y$ from Remark~\ref{nnct}.
For every object
\[e\colon W\to HW+X\qquad\text{in $\mathsf{EQ}_X$}\]
define, using the distributivity $(HW+Y)\otimes Y=HW\otimes
Y+X\otimes Y$ (see
Assumption~\ref{nncj}(3)), the equation morphism
\begin{equation}
\label{rcxxx}
\hat{e}\equiv W\otimes Y\nsi{e\otimes Y} HW\otimes Y+X\otimes Y
\nsi{s^H+f} H(W\otimes Y)+RZ.
\end{equation}
Since $W\otimes Y$ is finitely presentable by
Assumption~\ref{nncj}(2), we
obtain the unique solution $\hat{e}^\dag\colon W\otimes Y\to RZ$,
and those solutions
form a cocone of the diagram $\mathsf{Eq}_X\otimes Y$.
Indeed, given a
connecting morphism\smallskip

\[
\begin{psmatrix}[colsep=15mm,rowsep=10mm]
W&HW+X\\
W'&HW'+X
\psset{arrows=->,nodesep=3pt}
\everypsbox{\scriptstyle}
\ncline{1,1}{1,2}^{e\rule{5mm}{0mm}}
\ncline{1,1}{2,1}<{h}
\ncline{2,1}{2,2}_{e'\rule{5mm}{0mm}}
\ncline{1,2}{2,2}>{Hh+X}
\end{psmatrix}
\]\smallskip

\noindent
then $h\otimes Y$ is a coalgebra homomorphism from~$\hat{e}$
to~$\hat{e}'$:\smallskip

\[
\begin{psmatrix}[colsep=15mm,rowsep=10mm]
W\otimes Y&(HW\otimes Y)+(X\otimes Y)&
H(W\otimes Y)+RZ\\
W'\otimes Y&(HW'\otimes Y)+(X\otimes Y)&
H(W'\otimes Y)+RZ
\psset{arrows=->,nodesep=3pt}
\everypsbox{\scriptstyle}
\ncline{1,1}{1,2}^{e\otimes Y\rule{12mm}{0mm}}
\ncline{1,2}{1,3}^{s^H+f}
\ncline{1,1}{2,1}<{h\otimes Y}
\ncline{1,2}{2,2}>{(Hh\otimes Y)+(X\otimes Y)}
\ncline{2,1}{2,2}_{e'\otimes Y\rule{12mm}{0mm}}
\ncline{1,3}{2,3}>{H(h\otimes Y)+RZ}
\ncline{2,2}{2,3}_{s^H+f}
\end{psmatrix}
\]\smallskip

\noindent
which implies, by Remark~\ref{nnct}(v) that
\[\hat{e}^\dag=\widehat{e'}\rule{0pt}{11pt}^\dag\tec(h\otimes Y).\]
Consequently, we can define
\[f^\flat\colon RX\otimes Y\to RZ\]
by the commutativity of the triangles\smallskip

\begin{equation}
\label{nnrco}
\myvc{
\begin{psmatrix}[colsep=15mm,rowsep=10mm]
W\otimes Y\\
RX\otimes Y&RZ
\psset{arrows=->,nodesep=3pt}
\everypsbox{\scriptstyle}
\ncline{1,1}{2,1}<{e^\#\otimes Y}
\ncline{1,1}{2,2}>{\hat{e}^\dag}
\ncline{2,1}{2,2}_{f^\flat}
\end{psmatrix}
}\qquad\quad\text{for all $e\in \mathsf{EQ}_X$.}
\end{equation}\smallskip

\noindent
Indeed, since ${-\otimes}Y$ is a finitary functor by
Assumption~\ref{nncj}(4),
we see that $RX\otimes Y$ is a colimit
of~$\mathsf{Eq}_Y\otimes
Y$
with the colimit
cocone $e^\#\otimes Y$. We now verify that the Diagram~\eqref{nnrcss}
commutes. Consider the diagram below:\smallskip

\begin{equation}
\label{rcyyy}
\myvc{
\begin{psmatrix}[colsep=25mm]
W\otimes Y&RX\otimes Y&RZ\\
HW\otimes Y+X\otimes Y&HRX\otimes Y+X\otimes Y\\
H(W\otimes Y)+RZ&H(RX\otimes Y)+RZ&HRZ+RZ
\psset{arrows=->,nodesep=3pt}
\everypsbox{\scriptstyle}
\ncline{1,1}{2,1}<{e\otimes Y}
\ncline{1,1}{1,2}^{e^\#\otimes Y}
\ncline{1,2}{1,3}^{f^\flat}
\ncline{1,2}{2,2}<{i_X\otimes Y}
\ncline{2,1}{2,2}_{He^\#\otimes Y+X\otimes Y}
\ncline{2,1}{3,1}<{s^H+f}
\ncline{2,2}{3,2}>{s^H+f}
\ncline{3,1}{3,2}_{H(e^\#\otimes Y)+RZ}
\ncline{3,2}{3,3}_{Hf^\flat+RZ}
\ncline{3,3}{1,3}<{[\varrho_Z,RZ]}
\ncarc[arcangle=-30,linestyle=dashed]{2,2}{1,2}>{[\varrho_X\otimes
Y,\eta_X\otimes Y]}
\end{psmatrix}
}
\end{equation}\smallskip

\noindent
Notice first that the left-hand
edge is~$\hat{e}$. The upper left-hand
part commutes by~\eqref{nnrcc}, and the lower one does by
naturality of~$s^H$. The outside of the diagram commutes since
$f^\flat\tec(e^\#\otimes Y)$ is the unique solution of~$\hat{e}$ in
the iterative algebra~$RZ$.
Thus, the right-hand part commutes when precomposed by any
$e^\#\otimes Y$. So since the latter morphisms are collectively
epimorphic (being the injections of $\colim
\mathsf{Eq}_X\otimes Y$), we
see that the right-hand part commutes. Now we use that $i_X$~is an
isomorphism with the inverse~$[\varrho_X,\eta_X]$, see
Equation~\eqref{nnrcp},
which implies
\[[\varrho_X\otimes Y,\eta_X\otimes Y]=(i_X\otimes Y)^{-1}.\]
Finally observe that the two coproduct components of the right-hand
part of~\eqref{rcyyy} yield
precisely the upper and lower parts
of~\eqref{nnrcss}---this proves that \eqref{nnrcss}~commutes.

It only remains to prove the uniqueness of~$f^\flat$. So suppose we
have some $f^\flat$ such that Diagram~\eqref{nnrcss}
commutes. Equivalently,
the right-hand part of~\eqref{rcyyy} commutes, and this implies that
$f^\flat\tec(e^\#\otimes Y)$ is, for every~$e$ in~$\mathsf{EQ}_X$, a
solution
of~$\hat{e}$. This determines~$f^\flat$ uniquely.\smallskip

(a2)\ 
Let $Y$~be arbitrary. Then since \WW~is locally finitely presentable
we can express~$Y$ as a filtered colimit
\[Y=\colim_{q\in Q}Y^q\qquad\text{with colimit cocone $y^q\colon
Y^q\to Y$}\]
of finitely presentable objects~$Y^q$.
By Assumption~\ref{nncj}(2) the unit object~$I$
is finitely presentable, thus the given pointing of~$Y$:
\[y\colon I\to\colim_{q\in Q}Y^q\]
factorizes through some~$y^q$. The diagram above being filtered, we
can assume that this factorization takes place for every $q\in Q$, in
other words, that we have a filtered diagram of pointed objects~$Y^q$
with colimit~$Y$ (and with all the connecting morphisms $Y^q\to
Y^{q'}$ preserving the pointing).

Given $f\colon X\otimes Y\to RZ$, for every $q\in Q$ we know from the
previous part~(a1) that there exists a unique
\[f_q^\flat\colon RX\otimes Y^q\to RZ\]
such that Diagram~\eqref{nnrcss}
commutes when $f^\flat$~is replaced
by~$f^\flat_q$
and $f$~by
\[f_q\equiv X\otimes Y^q\nsi{X\otimes y^q}X\otimes Y\nsi{f}RZ.\]
This defines a unique $f^\flat\colon RX\otimes Y\to RZ$ with
\begin{equation}
\label{nrcjsa}
f^\flat_q=f^\flat\tec(RX\otimes y^q)\qquad\text{for all $q\in Q$.}
\end{equation}
Now
Diagram~\eqref{nnrcss} commutes because $HRX\otimes Y=\colim_{q\in Q}
HRX\otimes Y^q$ as well as $X\otimes Y=\colim_{q\in Q} X\otimes Y^q$.
And $f^\flat$~is uniquely determined by this commutativity; indeed,
for any~$f^\flat$ such that \eqref{nnrcss}~commutes one easily
verifies that~\eqref{nrcjsa} holds using the uniqueness
of~$f^\flat_q$ from part~(a1).\smallskip

\noindent(b)\
Analogously to the proof of Theorem~\ref{T:monad-T} put
\begin{equation}
\label{nnrcojp}
s^R_{(X,x),(Y,y)}=\eta^\flat_{X\otimes Y}\colon RX\otimes Y\to
R(X\otimes
Y).
\end{equation}
The verification that $s^R$~is the desired strength is analogous to
the above proof: just replace $T$ by~$R$ (and $\tau$ by~$\varrho$).\qed

\begin{rem}
\label{R:new}
The proofs of Theorems \ref{T:monad-T} and~\ref{nncs} have the same
structure, and also the proof that the
monad~$\mathbb{F}_{\lambda,\Sig}$ is point-strong can proceed
analogously:

Let $H$~be a point-strong endofunctor of~\WW\ and let
$(\hat{M},\hat{\mu},\hat{\eta})$~be a monad. Suppose that a natural
transformation $\alpha\colon H\hat{M}\to\hat{M}$ has the property
that for every morphism $f\colon X\otimes Y\to MZ$ there exists a
unique morphism $f^\flat\colon MX\otimes Y\to MZ$ with
$f=f^\flat\tec(\hat{\eta}^{\strut}_X\otimes Y)$ and
$f^\flat\tec(\alpha^{\strut}_X\otimes Y)=\alpha^{\strut}_X\tec
Hf^\flat\tec s^H$. Then $M$~is a point-strong monad w.r.t.
$s^M=\hat{\eta}^\flat_{X\otimes Y}$.
\end{rem}

\begin{rem}
\label{copr}
The morphisms
\[\varrho^{\strut}_X\colon HRX\to
RX\qquad\text{and}\qquad \eta^{\strut}_X\colon X\to
RX\]
of~\eqref{nnrcsjp} are coproduct injections of
\[RX=HRX+X\]
as proved in~\cite{AMV1}. From diagram~\eqref{nnrcsjp} we conclude
that the strength of~\rr,
\[s^R\colon RX\otimes Y\to R(X\otimes Y)\]
whose domain is $HRX\otimes Y+X\otimes Y$ by~\ref{nncj}(2) and
codomain is $HR(X\otimes Y)+X\otimes Y$, has the form
\[s^R=Hs^R\tec s^H+X\otimes Y.\]
\end{rem}

\begin{cor}
\label{nncss}
For a point-strong endofunctor~$H$
the free iterative $H$-algebra
\[RI\]
on the unit object is an $H$-monoid w.r.t. the unit
$i=\eta_I\colon I\to RI$ and
the multiplication
\begin{equation}
\label{multi}
m\equiv RI\otimes RI\nsi{s^R_{I,RI}} RRI\nsi{\mu_I} RI.
\end{equation}
\end{cor}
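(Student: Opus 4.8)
The plan is to verify the two ingredients that Definition~\ref{hntc} demands beyond the $H$-algebra structure $\varrho_I\colon HRI\to RI$ (which is already given by Notation~\ref{nncd}): that $(RI,m,i)$ is a monoid, and that the compatibility square~\eqref{abb} commutes. For the monoid laws I would use that $\rr$ is a point-strong monad by Theorem~\ref{nncs}, so that $m$ and $i$ are the structure a point-strong monad always carries on $RI$; since $\WW$ is strict I freely use $I\otimes RI=RI=RI\otimes I$. The left unit law is immediate: the unit axiom~\eqref{nnnrcp} gives $s^R_{I,RI}\tec(\eta_I\otimes RI)=\eta_{RI}$, and then the monad law $\mu_I\tec\eta_{RI}=\id$ finishes it. The right unit law follows by naturality of $s^R$ in its second argument along the $I/\WW$-morphism $\eta_I\colon(I,\id)\to(RI,\eta_I)$ together with the normalization $s^R_{(I,\id),(I,\id)}=\id_{RI}$ of Definition~\ref{hntj}(i); this reduces the composite to $\mu_I\tec R\eta_I=\id$.

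I expect associativity to be the main obstacle, being the one genuinely long diagram chase. The plan is to paste the multiplication axiom~\eqref{nnrcs} of the point-strong monad (to commute $\mu$ past $s^R$), the associativity triangle~\eqref{triangle} of the strength, naturality of $s^R$, and monad associativity $\mu_I\tec\mu_{RI}=\mu_I\tec R\mu_I$, reducing both $m\tec(m\otimes RI)$ and $m\tec(RI\otimes m)$ to a common normal form. This is the standard coherence by which a point-strong monad makes $RI$ a monoid; I would assemble it as a single commuting diagram rather than chase cells individually, taking care that every pointed object occurring (built from $I$, $RI$ and $RRI$) carries its canonical pointing, so that the point-strength axioms, which hold only in $I/\WW$, actually apply.

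Finally, the compatibility square~\eqref{abb} falls out cleanly from the defining property of $s^R$ and the homomorphism property of $\mu$. Instantiating the square in the defining diagram~\eqref{nnrcsjp} of $s^R$ at $X=I$, $Y=RI$ (so $X\otimes Y=RI$) gives
\[
s^R_{I,RI}\tec(\varrho_I\otimes RI)=\varrho_{RI}\tec Hs^R_{I,RI}\tec s^H_{(RI,\eta_I),(RI,\eta_I)}.
\]
Postcomposing with $\mu_I$ and using that $\mu$ is an $H$-algebra homomorphism, i.e.\ $\mu_I\tec\varrho_{RI}=\varrho_I\tec H\mu_I$ by the square in~\eqref{nnrcj}, I obtain
\[
m\tec(\varrho_I\otimes RI)=\varrho_I\tec H\mu_I\tec Hs^R_{I,RI}\tec s^H=\varrho_I\tec H(\mu_I\tec s^R_{I,RI})\tec s^H=\varrho_I\tec Hm\tec s^H_{(RI,\eta_I),(RI,\eta_I)},
\]
which is exactly~\eqref{abb} for $a=\varrho_I$ and $i=\eta_I$. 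The only point to watch is that the pointing of $RI$ viewed as $RX$ with $X=(I,\id)$ is $\eta_I\tec\id=\eta_I$, so the copy of $s^H$ produced by~\eqref{nnrcsjp} is precisely the $s^H_{(RI,\eta_I),(RI,\eta_I)}$ required by~\eqref{abb}; this matching is what lets the argument close without invoking any further coherence.
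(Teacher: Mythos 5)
Your proposal follows essentially the same route as the paper: the unit laws via the strength's unit axiom~\eqref{nnnrcp}, naturality of $s^R$ and the normalization $s_{({-}),(I,\id)}=\id$; associativity by pasting~\eqref{nnrcs}, \eqref{triangle}, naturality of $s^R$ and monad associativity (exactly the cells of the paper's large diagram, which you describe but do not draw); and the compatibility square~\eqref{abb} by instantiating~\eqref{nnrcsjp} at $X=I$, $Y=RI$ and postcomposing with $\mu_I$ using $\mu_I\tec\varrho_{RI}=\varrho_I\tec H\mu_I$ from~\eqref{nnrcj}. Your closing observation that the pointing of $RI$ is $\eta_I$, so the $s^H$ arising from~\eqref{nnrcsjp} is the one required by~\eqref{abb}, is precisely the matching the paper relies on.
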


\begin{proof}
Indeed, the unit laws are obvious:\smallskip
\[
\begin{psmatrix}[colsep=15mm, rowsep=10mm]
&I\otimes RI&RI\\
&RI\otimes RI&RRI&RI\\
&RI\otimes I&RI
\psset{arrows=->,nodesep=3pt}
\everypsbox{\scriptstyle}
\ncline{1,2}{2,2}<{\eta_I\otimes RI}
\ncline{1,2}{2,2}>{\rule{8mm}{0mm}\eqref{nnrcsjp}}
\ncline{1,3}{2,3}<{\eta_{RI}}
\ncline[doubleline=true,arrows=-]{1,3}{2,4}
\ncline{2,2}{2,3}_{s^R_{I,RI}}
\ncline{2,3}{2,4}_{\mu_I}
\ncline{3,2}{2,2}<{RI\otimes\eta_I}
\ncline{3,3}{2,3}<{R\eta_I}
\ncline{3,2}{3,3}_{s^R_{I,I}}
\ncline[doubleline=true,arrows=-]{3,3}{2,4}
\ncline[doubleline=true,arrows=-]{1,2}{1,3}
\end{psmatrix}
\]\smallskip

\noindent
where the lower square commutes by the
naturality of~$s^R$. For the associativity
we have
the following commutative diagram
\vspace{6mm}
\[
\rule{7mm}{0mm}\begin{psmatrix}[colsep=18mm]
RI\otimes RI\otimes RI&&RI\otimes RRI&RI\otimes RI\\
RRI\otimes RI&R(RI\otimes RI)&RRRI&RRI\\
RI\otimes RI&&RRI&RI
\psset{arrows=->,nodesep=3pt}
\everypsbox{\scriptstyle}
\ncline{1,1}{1,3}_{RI\otimes s^R_{I,RI}}
\ncline{1,3}{1,4}_{RI\otimes\mu_I}
\ncline{1,1}{2,1}>{s^R_{I,RI}\otimes RI}
\ncline{1,3}{2,3}<{s^R_{I,RRI}}
\ncline{1,4}{2,4}<{s^R_{I,RI}}
\ncline{2,1}{2,2}^{s^R_{RI,RI}}
\ncline{2,2}{2,3}^{Rs^R_{I,RI}}
\ncline{2,3}{2,4}^{R\mu_I}
\ncline{2,1}{3,1}>{\mu_I\otimes RI\rule{24mm}{0mm}\eqref{nnrcs}}
\ncline{2,3}{3,3}<{\mu_{RI}}
\ncline{3,1}{3,3}^{s^R_{I,RI}}
\ncline{3,3}{3,4}^{\mu_I}
\ncline{2,4}{3,4}<{\mu_I}
\ncarc[arcangle=10]{1,1}{1,4}^{RI\otimes m}
\ncarc[arcangle=-10]{3,1}{3,4}_{m}
\ncarc[arcangle=-57]{1,1}{3,1}<{m\otimes RI}
\ncarc[arcangle=30]{1,4}{3,4}>{m}
\ncline{1,1}{2,2}>{s^R_{I,RI\otimes RI}}
\end{psmatrix}
\]
\vspace{5mm}

\noindent
Finally, the Diagram~\eqref{abb} commutes due to~\eqref{nnrcsjp}:
\vspace{3mm}
\[
\begin{psmatrix}[colsep=20mm]
HRI\otimes RI&H(RI\otimes RI)&HRRI&HRI\\
RI\otimes RI&&RRI&RI
\psset{arrows=->,nodesep=3pt}
\everypsbox{\scriptstyle}
\ncline{1,1}{1,2}^{s^H}
\ncline{1,2}{1,3}_{Hs^R}
\ncline{1,3}{1,4}_{H\mu}
\ncline{1,1}{2,1}<{\varrho\otimes RI}
\ncline{1,3}{2,3}>{\varrho R}
\ncline{1,4}{2,4}>{\varrho}
\ncline{2,1}{2,3}^{s^R}
\ncline{2,3}{2,4}^{\mu}
\ncarc[arcangle=10]{1,2}{1,4}^{Hm}
\ncarc[arcangle=-10]{2,1}{2,4}_{m}
\end{psmatrix}
\]
\end{proof}
\vspace{2mm}

\begin{cor}
\label{cojp}
The free completely iterative $H$-algebra
\[TI\]
on the unit object is
an $H$-monoid w.r.t. $\eta^T_I$ and~$\mu^T_I\tec s^T_{I,TI}$.
\end{cor}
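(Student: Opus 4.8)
The plan is to follow the proof of Corollary~\ref{nncss} verbatim, replacing the rational monad~$\rr$ by the free completely iterative monad~$\TT$ throughout: that is, $R$ by~$T$, the algebra structure~$\varrho$ by~$\tau$, the multiplication~$\mu$ by~$\mu^T$, the unit~$\eta$ by~$\eta^T$, and the strength~$s^R$ by~$s^T$. The reason this works is that every ingredient used for~$RI$ was a purely formal consequence of~$\rr$ being a point-strong monad, together with the defining diagram~\eqref{nnrcsjp} of its strength. By Theorem~\ref{T:monad-T} the monad~$\TT$ is likewise point-strong, and its strength~$s^T$ is characterized by the completely analogous diagram~\eqref{ffflat}; so all these consequences are again at our disposal.

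Concretely, the monoid structure is $i=\eta^T_I\colon I\to TI$ and $m=\mu^T_I\tec s^T_{I,TI}\colon TI\otimes TI\to TI$. First I would check the two unit laws. The left unit law follows from the unit axiom~\eqref{nnnrcp} of the point-strong monad, which turns $s^T_{I,TI}\tec(\eta^T_I\otimes TI)$ into~$\eta^T_{TI}$, followed by the monad law $\mu^T_I\tec\eta^T_{TI}=\id$ recorded in~\eqref{homo}. The right unit law uses naturality of~$s^T$ in its second argument together with the strength axiom of Definition~\ref{hntj}(i), giving $s^T_{I,TI}\tec(TI\otimes\eta^T_I)=T\eta^T_I$, and then the monad law $\mu^T_I\tec T\eta^T_I=\id$. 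These are exactly the two triangles drawn in Corollary~\ref{nncss}, now read with~$T$ in place of~$R$.

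Next I would establish associativity $m\tec(m\otimes TI)=m\tec(TI\otimes m)$. This is the large commutative diagram of Corollary~\ref{nncss}: its central cell is the multiplication axiom~\eqref{nnrcs} of the point-strong monad~$\TT$, while the surrounding cells commute by naturality of~$s^T$ and by the associativity of~$\mu^T$. Finally, the defining $H$-monoid square~\eqref{abb}, relating the algebra structure~$\tau_I$, the strength~$s^T$, and~$m$, commutes by the very characterization~\eqref{ffflat} of~$s^T$ (equivalently, by Remark~\ref{R:monad-T})---just as~\eqref{abb} was obtained from~\eqref{nnrcsjp} in the rational case of Corollary~\ref{nncss}.

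I expect no genuine obstacle here, since Theorem~\ref{T:monad-T} has already done the substantive work of producing a point-strength on~$\TT$; the corollary is a formal harvest of that theorem exactly as Corollary~\ref{nncss} was a harvest of Theorem~\ref{nncs}. The only point requiring care is bookkeeping: one must invoke the $\TT$-versions of the strength axioms rather than the $\rr$-versions, and keep track of the canonical identifications $I\otimes A\cong A$ and $A\otimes I\cong A$ that are used silently when matching the monad unit laws against the monoid unit laws.
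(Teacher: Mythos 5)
Your proposal is correct and is precisely the paper's own argument: the paper proves Corollary~\ref{cojp} by declaring it ``completely analogous'' to Corollary~\ref{nncss}, and your substitution of $\TT$, $\tau$, $\eta^T$, $\mu^T$, $s^T$ for $\rr$, $\varrho$, $\eta$, $\mu$, $s^R$, with the unit laws from~\eqref{nnnrcp} and~\eqref{homo}, associativity from~\eqref{nnrcs} plus naturality and the strength axioms, and the $H$-monoid square~\eqref{abb} from~\eqref{ffflat}, is exactly the intended fleshing-out. No gaps.
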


The proof is completely analogous to the previous one.

\begin{notation}
\label{nnco}
Let
\[e\colon W\to HW+I\]
be a flat equation morphism in~$I$, and let a pointing of~$W$ be
given.
\begin{enumerate}[(i)]
\item
$e^\#\colon W\to RI$ denotes the colimit morphism of $RI=\colim
\mathsf{Eq}_I$ and
\[\widehat{e^\#}\colon RW\to RI\]
its unique extension to a homomorphism of $H$-algebras.
\item
$e*W$~denotes the following equation morphism in~$W$:
\[e*W\equiv W\otimes W\nsi{e\otimes W} HW\otimes W+W \nsi{s^H+W}
H(W\otimes W)+W.\]
\item
$\langle e\rangle\colon W\otimes W+W\to H(W\otimes W+W)+I$ denotes
the flat equation morphism whose left-hand component is
\vspace{3mm}
\[\begin{psmatrix}[colsep=15mm,rowsep=12mm]
\langle e\rangle\tec\inl\equiv W\otimes W&
H(W\otimes W)+W\\
&H(W\otimes W)+HW+I&H(W\otimes W+W)+I
\psset{arrows=->,nodesep=3pt}
\everypsbox{\scriptstyle}
\ncline{1,1}{1,2}^{e*W}
\ncline{1,2}{2,2}<{H(W\otimes W)+e}
\ncline{2,2}{2,3}^{\can+I}
\end{psmatrix}\]
%\begin{multline*}
%\rule{4em}{0em}
%\langle e\rangle\tec\inl\equiv W\otimes W\nsi{e*W}
%H(W\otimes W)+W\nsi{H(W\otimes W)+e}\\
%H(W\otimes W)+HW+I
%\nsi{\can+I} H(W\otimes W+W)+I
%\end{multline*}
and the right-hand one is
\[
\langle e\rangle\tec\inr\equiv W\nsi{e} HW+I\nsi{H\inr+I} H(W\otimes
W+W)+I.
\]
\end{enumerate}
\end{notation}

\begin{lem}
\label{nncdv}
For every flat equation morphism $e\colon W\to HW+I$ with $W$~pointed
the square\smallskip

\begin{equation}
\label{nnrcjn}
\myvc{
\begin{psmatrix}[colsep=15mm,rowsep=10mm]
W\otimes W&RW\\
W\otimes W+W&RI
\psset{arrows=->,nodesep=3pt}
\everypsbox{\scriptstyle}
\ncline{1,1}{1,2}^{\rule{3mm}{0mm}(e*W)^\#}
\ncline{1,2}{2,2}>{\widehat{e^\#}}
\ncline{1,1}{2,1}<{\inl}
\ncline{2,1}{2,2}_{\rule{7mm}{0mm}\langle e\rangle^\#}
\end{psmatrix}
}
\end{equation}\smallskip

\noindent commutes.
\end{lem}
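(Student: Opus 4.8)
The plan is to reduce everything to the \emph{uniqueness} of solutions of flat equation morphisms in the iterative algebra $RI$. Since $W$ is finitely presentable, so is $W\otimes W$ by Assumption~\ref{nncj}(2); hence any flat equation morphism with domain $W\otimes W$ and constant $RI$ has a unique solution in $RI$. It therefore suffices to exhibit one such morphism,
\[
d\equiv W\otimes W\nsi{e*W}H(W\otimes W)+W\nsi{H(W\otimes W)+e^\#}H(W\otimes W)+RI,
\]
and to show that both legs of the square~\eqref{nnrcjn}, namely $\widehat{e^\#}\tec(e*W)^\#$ and $\langle e\rangle^\#\tec\inl$, coincide with the solution $d^\dag$.

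For the top-right leg I would use that $H$-homomorphisms of iterative algebras preserve solutions. By Remark~\ref{nnct}(iv) the morphism $(e*W)^\#\colon W\otimes W\to RW$ is the unique solution in $RW$ of $\eta_W\bullet(e*W)$, and by Notation~\ref{nnco}(i) the morphism $\widehat{e^\#}\colon RW\to RI$ is a homomorphism of $H$-algebras between the iterative algebras $RW$ and $RI$. Hence Remark~\ref{nnct}(vii) gives $\widehat{e^\#}\tec(e*W)^\#=\bigl(\widehat{e^\#}\bullet(\eta_W\bullet(e*W))\bigr)^\dag$. Using the defining identity $\widehat{e^\#}\tec\eta_W=e^\#$ of the extension, a short computation collapses $\widehat{e^\#}\bullet(\eta_W\bullet(e*W))$ to exactly $d$, so the top-right leg equals $d^\dag$.

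For the bottom-left leg the key observation is that $\langle e\rangle$ is precisely an instance of the ``box'' construction of Remark~\ref{nnct}(vi). Concretely, I would verify the identity
\[
\eta_I\bullet\langle e\rangle=(\eta_I\bullet e)\ctv(e*W),
\]
where $V=W\otimes W$, the inner equation $e*W\colon V\to HV+W$ plays the role of $f$, and $\eta_I\bullet e\colon W\to HW+RI$ plays the role of the $RI$-equation. Granting this, Remark~\ref{nnct}(iv) identifies $\langle e\rangle^\#$ with $(\eta_I\bullet\langle e\rangle)^\dag=\bigl((\eta_I\bullet e)\ctv(e*W)\bigr)^\dag$, and then the identity~\eqref{rcoo} together with $(\eta_I\bullet e)^\dag=e^\#$ yields $\langle e\rangle^\#\tec\inl=\bigl(e^\#\bullet(e*W)\bigr)^\dag=d^\dag$. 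Comparing the two computations gives $\widehat{e^\#}\tec(e*W)^\#=d^\dag=\langle e\rangle^\#\tec\inl$, which is the asserted commutativity.

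The genuinely delicate step is the identification $\eta_I\bullet\langle e\rangle=(\eta_I\bullet e)\ctv(e*W)$. This is a diagram chase that must (i)~match the right-hand component $\langle e\rangle\tec\inr=(H\inr+I)\tec e$ against the $\inr$-leg of the box, so that the solution restricted to the second summand $W$ is exactly $e^\#$, and (ii)~check that the reindexing by the canonical morphism $\can$ in $\langle e\rangle\tec\inl$ agrees with the $\can$ built into the definition of $\ctv$. Tracing the coproduct injections through these $\can$'s is the only real bookkeeping; the point-strength $s^H$ enters solely inside the black box $e*W$ and is never unfolded, so it causes no difficulty here. (If one prefers to avoid~\eqref{rcoo}, the same conclusion follows by verifying directly that $\langle e\rangle^\#\tec\inl$ satisfies the solution diagram for $d$, again invoking $\langle e\rangle^\#\tec\inr=e^\#$ and uniqueness.)
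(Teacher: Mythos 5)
Your argument is correct and is essentially the paper's own proof: both rest on the identification $\langle e\rangle=e\ctv(e*W)$, the fact that the $H$-algebra homomorphism $\widehat{e^\#}$ preserves solutions together with $\widehat{e^\#}\tec\eta_W=e^\#$, and Equation~\eqref{rcoo}. The only difference is presentational---you introduce the intermediate equation morphism $d=e^\#\bullet(e*W)$ and show each leg equals $d^\dag$, whereas the paper writes the same steps as a single chain of equalities.
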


\begin{proof}
Notice that $\langle e\rangle$~is precisely of the form $e\ctv f$
from Remark~\ref{nnct}(vi) for $f=e*W$. Also recall from
Remark~\ref{nnct}(iv) that  $\langle e\rangle^\#=(\eta\bullet\langle
e\rangle)^\dag$ and similarly $(e*W)^\#=(\eta\bullet(e*W))^\dag$.
We shall also use that the
$H$-algebra homomorphism~$\widehat{e^\#}$ preserves solutions (cf.
Remark~\ref{nnct}(vii)). Thus, we compute
\begin{align*}
\widehat{e^\#}\tec(e*W)^\#&=
\widehat{e^\#}\tec\bigl(\eta\bullet(e*W)\bigr)^\dag&
\text{\ref{nnct}(iv)}\\
&=\bigl((\widehat{e^\#}\tec\eta)\bullet(e*W)\bigr)^\dag&
\text{\ref{nnct}(vii)}\\
&=\bigl(e^\#\bullet(e*W)\bigr)^\dag&
\text{$\widehat{e^\#}$ extends $e^\#$}\\
&=\bigl((\eta\tec e)^\dag\bullet(e*W)\bigr)^\dag&
\text{\ref{nnct}(iv)}\\
&=(\eta\tec e\ctv e*W)^\dag\tec\inl&
\text{\eqref{rcoo}}\\
&=\Bigl(\eta\bullet\bigl(e\ctv(e*W)\bigr)\Bigr)^\dag\tec\inl&
\text{obvious}\\
&=\bigl(e\ctv(e*W)\bigr)^\#\tec\inl&
\text{\ref{nnct}(iv)}\\
&=\langle e\rangle^\#\tec\inl&
\text{\ref{nnct}(vi)}
\end{align*}
This completes the proof.
\end{proof}

\begin{thm}
\label{nncjn}
Let $H$~be a finitary, point-strong endofunctor.
Then the $H$-monoid~$RI$ above
is the initial iterative $H$-monoid.
\end{thm}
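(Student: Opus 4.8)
The plan is to verify the universal property of an initial object by hand. Since $RI$ is free iterative, its underlying $H$-algebra is iterative, and by Corollary~\ref{nncss} it carries an $H$-monoid structure; hence $RI$ is itself an iterative $H$-monoid. Given an arbitrary iterative $H$-monoid $(A,a,m_A,i_A)$, I would first produce the candidate morphism: by freeness of $RI$ as the free iterative $H$-algebra on $I$ there is a unique $H$-algebra homomorphism $h\colon RI\to A$ with $h\tec\eta_I=i_A$. Uniqueness among $H$-monoid homomorphisms is then immediate, since any such map is an $H$-algebra homomorphism preserving the unit, hence satisfies $h\tec\eta_I=i_A$ and so coincides with $h$ by freeness. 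Unit preservation for $h$ is just $h\tec i=h\tec\eta_I=i_A$. The whole content is therefore to prove $h\tec m=m_A\tec(h\otimes h)$, where $m=\mu_I\tec s^R_{I,RI}$.

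For this I would first record an ``$A$-valued'' version of the flat operation from the proof of Theorem~\ref{nncs}(a): since $A$ is iterative, for every pointed pair $(X,x),(Y,y)$ and every $f\colon X\otimes Y\to A$ there is a unique $f^\flat\colon RX\otimes Y\to A$ making the pentagon~\eqref{nnrcss} commute with $A,a,f$ in place of $RZ,\varrho_Z,f$; the proof is verbatim that of Theorem~\ref{nncs}(a1), solving the equation morphisms~\eqref{rcxxx} in the iterative algebra $A$ rather than in $RZ$. The key observation is that any $H$-algebra homomorphism $p\colon RZ\to A$ commutes with $\flat$: applying $p$ to the pentagon defining $g^\flat$ and using $p\tec\varrho=a\tec Hp$ turns it into the $A$-valued pentagon for $p\tec g^\flat$, so $p\tec g^\flat=(p\tec g)^\flat$. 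Writing $\hat h=h\tec\mu_I\colon RRI\to A$, which is an $H$-algebra homomorphism with $\hat h\tec\eta_{RI}=h$ by the monad law $\mu_I\tec\eta_{RI}=\id$, and recalling $s^R_{I,RI}=\eta_{RI}^\flat$ from~\eqref{nnrcojp}, I obtain
\[
h\tec m=\hat h\tec s^R_{I,RI}=\hat h\tec\eta_{RI}^\flat=(\hat h\tec\eta_{RI})^\flat=h^\flat,
\]
the $A$-valued flat of $h\colon I\otimes RI=RI\to A$.

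It then remains to check that $m_A\tec(h\otimes h)$ satisfies the two equations characterizing $h^\flat$. The unit equation $m_A\tec(h\otimes h)\tec(\eta_I\otimes\id_{RI})=h$ follows from $h\tec\eta_I=i_A$ and the left unit law of the monoid $A$. For the recursive equation I would compute $m_A\tec(h\otimes h)\tec(\varrho_I\otimes\id_{RI})$ by: rewriting $h\tec\varrho_I=a\tec Hh$ since $h$ is an $H$-algebra homomorphism; applying the $H$-monoid law~\eqref{abb} for $A$ to replace $m_A\tec(a\otimes\id_A)$ by $a\tec Hm_A\tec s^H$; and finally using naturality of the point-strength $s^H$ along the pointed morphism $h\colon(RI,\eta_I)\to(A,i_A)$ to pass $H(h\otimes h)$ across $s^H$. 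The result is $a\tec H\bigl(m_A\tec(h\otimes h)\bigr)\tec s^H$, which is exactly the required equation. By the uniqueness clause of the flat lemma, $m_A\tec(h\otimes h)=h^\flat=h\tec m$, so $h$ is an $H$-monoid homomorphism, completing the argument.

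The main obstacle is the multiplication step, and inside it the need for the $A$-valued flat lemma: the strength $s^R$ was constructed only with codomain $R(-)$, so before one can write $h\tec m=h^\flat$ one must establish that the same ``solve-the-equation'' universal property is available with an arbitrary iterative algebra $A$ as codomain. Once that lemma is in place---and it demands no new idea beyond the proof of Theorem~\ref{nncs}(a1)---the two remaining equation checks are routine, driven entirely by the compatibility square~\eqref{abb} of the $H$-monoid $A$ together with the naturality of $s^H$.
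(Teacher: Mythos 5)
Your proof is correct, but it follows a genuinely different route from the paper's. The paper reduces the multiplication square to the colimit presentation $RI\otimes RI=\colim(\mathsf{Eq}_I\otimes\mathsf{Eq}_I)$ and, for each flat equation $e\colon W\to HW+I$, manufactures a combined flat equation $\langle e\rangle$ on $W\otimes W+W$ (Notation~\ref{nnco}, Lemma~\ref{nncdv}); it then shows, via two long diagram chases, that both passages of the square are the left-hand component of the unique solution of $\overline{\langle e\rangle}$ in $\bar A$. You instead isolate exactly one new lemma---the $A$-valued version of the $\flat$ operation of Theorem~\ref{nncs}(a), valid because its proof only ever uses that the codomain is an iterative algebra and that the domains $W\otimes Y$ of the auxiliary equations~\eqref{rcxxx} are finitely presentable---together with the easy observation that $H$-algebra homomorphisms between iterative algebras commute with $\flat$. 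This identifies $h\tec m=\hat h\tec\eta_{RI}^\flat=h^\flat$, and the verification that $m_A\tec(h\otimes h)$ satisfies the same characterizing pentagon is a three-line computation from the left unit law, the $H$-monoid square~\eqref{abb}, and naturality of $s^H$ along the pointed map $h\colon(RI,\eta_I)\to(A,i_A)$. What your approach buys is brevity and conceptual clarity: all the colimit bookkeeping is front-loaded into one reusable universal property, the combinators $e*W$, $\langle e\rangle$ and Lemma~\ref{nncdv} disappear, and the argument becomes structurally parallel to the paper's own proof of the completely iterative case (Theorem~\ref{nncjjjp}), where an equation valued in $\bar A$ is solved directly. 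The paper's version stays closer to the concrete solution calculus of Remark~\ref{nnct} and so needs no restatement of the $\flat$ construction, at the price of substantially heavier diagrams. One small point of care: for $Y=RI$, which is not finitely presentable, you need part (a2) of the proof of Theorem~\ref{nncs} as well as (a1); it transfers verbatim too, but it should be cited.
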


That is, for every $H$-monoid~$A$ there exists precisely one morphism
$h\colon RI\to A$ which is both a monoid homomorphism and a
homomorphism of $H$-algebras.

\proof
(1)\ 
Given an iterative $H$-monoid
\[\bar{a}\colon H\bar{A}\to\bar{A},\qquad \bar{i}\colon I\to\bar{A}
\qquad\text{and}\qquad \bar{m}\colon\bar{A}\otimes\bar{A}\to \bar{A}\]
we know that there is a unique $H$-algebra homomorphism
\[h\colon RI\to\bar{A}\qquad\text{with}\qquad h\tec\eta_I=\bar{i}.\]
It is our task to prove that $h$~preserves multiplication:\smallskip

\begin{equation}
\label{nnnrcjj}
\myvc{
\begin{psmatrix}[colsep=15mm,rowsep=10mm]
RI\otimes RI&RRI&RI\\
\bar{A}\otimes\bar{A}&&\bar{A}
\psset{arrows=->,nodesep=3pt}
\everypsbox{\scriptstyle}
\ncline{1,1}{1,2}^{\rule{3mm}{0mm}s^R}
\ncline{1,2}{1,3}^{\mu_I}
\ncline{1,3}{2,3}>{h}
\ncline{1,1}{2,1}<{h\otimes h}
\ncline{2,1}{2,3}_{\bar{m}}
\end{psmatrix}
}
\end{equation}\smallskip

\noindent(2)\
Recall $RI=\colim \mathsf{Eq}_I$ from Remark~\ref{nnct}.  We can
substitute~$\mathsf{EQ}_I$ by the category of all $e\colon W\to HW+I$
with
$W$~pointed. The argument is as in~(a2) of Theorem~\ref{nncs}. We
indicate pointing
by writing~$W_\bullet$ instead
of~$W$ (this stands for the notation~$(W,i^W)$). We know that
$\otimes$~is finitary, thus
\[RI\otimes RI=\colim \mathsf{Eq}_I\otimes \mathsf{Eq}_I\]
with the colimit cocone
\[e^\#\otimes e^\#\colon W_\bullet\otimes W_\bullet\to RI\otimes RI.\]
It is thus sufficient to prove that for every~$e$ the square\smallskip

\begin{equation}
\label{nnrcjd}
\myvc{
\begin{psmatrix}[colsep=15mm,rowsep=10mm]
W_\bullet\otimes W_\bullet&RI\otimes RI&RRI&RI\\
\bar{A}\otimes\bar{A}&&&\bar{A}
\psset{arrows=->,nodesep=3pt}
\everypsbox{\scriptstyle}
\ncline{1,1}{1,2}^{\rule{3mm}{0mm}e^\#\otimes e^\#}
\ncline{1,2}{1,3}^{s^R}
\ncline{1,3}{1,4}^{\mu_I}
\ncline{1,4}{2,4}>{h}
\ncline{1,1}{2,1}<{(he^\#)\otimes(he^\#)}
\ncline{2,1}{2,4}_{\bar{m}}
\end{psmatrix}
}
\end{equation}\smallskip

\noindent
commutes.

For every flat equation morphism $e\colon W\to HW+I$ recall $\langle
e\rangle$ from Notation~\ref{nnco}(iii) and put
\begin{equation}
\label{nnrcjdjp}
\bar{e}\equiv W\nsi{e} HW+I\nsi{HW+\bar{i}} HW+\bar{A}.
\end{equation}
We prove that~\eqref{nnrcjd} commutes by verifying that the two sides
of the square
are both the left-hand part of the solution $\bar{f}^\dag\colon
W\otimes W+W\to\bar{A}$
of the equation
morphism~$\bar{f}$ for
\[f=\langle e\rangle\colon W\otimes W+W\to H(W\otimes W+W)+I.\]\smallskip

\noindent(3)\
Proof of the upper passage of~\eqref{nnrcjd}:
\begin{equation}
\label{nnrcjt}
h\tec\mu_I\tec s^R\tec(e^\#\otimes e^\#)=\bar{f}^\dag\tec\inl\colon
W\otimes
W\to\bar{A}.
\end{equation}
Since $h\colon RI\to\bar{A}$
preserves solutions by Remark~\ref{nnct}(vii),
and since $f^\#$~is a solution
of $\eta_I\bullet f$ as mentioned in Remark~\ref{nnct}(iv), the
composite~$h\tec f^\#$ is a solution of the equation morphism
$\bar{f}$: indeed, $h$~takes $\eta_I\bullet f$ to~$\bar{f}$ due to
the diagram\smallskip

\[
\begin{psmatrix}[colsep=15mm,rowsep=10mm]
W\otimes W+W&&H(W\otimes W+W)+RI\\
&H(W\otimes W+W)+I\\
&&H(W\otimes W+W)+\bar{A}
\psset{arrows=->,nodesep=3pt}
\everypsbox{\scriptstyle}
\ncline{1,1}{1,3}\naput{\eta_I\bullet f}
\ncline{1,1}{2,2}\naput{f}
\ncline{2,2}{1,3}\nbput{\id+\eta_I}
\ncline{1,3}{3,3}>{\id+h}
\ncline{2,2}{3,3}\naput{\id+\bar{i}}
\ncarc[arcangle=-24]{1,1}{3,3}\nbput{\bar{f}}
\end{psmatrix}
\]\smallskip
Shortly, the triangle\smallskip

\begin{equation}
\label{nnrcjc}
\myvc{
\begin{psmatrix}[colsep=15mm,rowsep=10mm]
W\otimes W+W&RI\\
&\bar{A}
\psset{arrows=->,nodesep=3pt}
\everypsbox{\scriptstyle}
\ncline{1,1}{1,2}\naput{\quad f^\#}
\ncline{1,2}{2,2}>{h}
\ncline{1,1}{2,2}\nbput{\bar{f}^\dag}
\end{psmatrix}
}
\end{equation}\smallskip

\noindent commutes.

Observe that also the triangle\smallskip

\begin{equation}
\label{nnrcjp}
\myvc{
\begin{psmatrix}[colsep=15mm,rowsep=10mm]
RW\\
RRI&RI
\psset{arrows=->,nodesep=3pt}
\everypsbox{\scriptstyle}
\ncline{1,1}{2,2}\naput{\widehat{e^\#}}
\ncline{1,1}{2,1}<{Re^\#}
\ncline{2,1}{2,2}_{\mu}
\end{psmatrix}
}
\end{equation}\smallskip

\noindent
commutes since $RW$~is the free iterative algebra on $\eta_W\colon
W\to RW$ and the three morphisms above are homomorphisms of
$H$-algebras which are merged by~$\eta_W$---indeed:
\[\widehat{e^\#}\tec\eta_W=e^\#\]
as well as
\[\mu\tec Re^\#\tec\eta_W=\mu\tec\eta_{RI}\tec e^\#=e^\#.\]
Let us verify that
\begin{equation}
\label{nnrcjs}
s_{I,W}\tec (e^\#\otimes W)=(e*W)^\#.
\end{equation}
Indeed, for $e\colon W\to HW+I$ and $f=\eta_W$ form~$\hat{e}$
as in~\eqref{rcxxx} and observe
that $\hat{e}=\eta_W\bullet(e*W)$ holds. Now recall
from Equation~\eqref{nnrcojp}
that $s^R_{I,W}=\eta^\flat_W$. Thus, we have
\begin{align*}
s^R_{I,W}\tec(e^\#\otimes W)&=
\eta^\flat_W\tec(e^\#\otimes W)\\
&=\hat{e}^\dag&&\text{by \eqref{nnrco}}\\
&=\bigl(\eta_W\bullet(e*W)\bigr)^\dag\\
&=(e*W)^\#&&\text{by Remark \ref{nnct}(iv)}.
\end{align*}

We are now in the position to demonstrate~\eqref{nnrcjt}:\smallskip

\[
\begin{psmatrix}
W_\bullet\otimes W_\bullet&&(W_\bullet\otimes W_\bullet)+W_\bullet\\
RI\otimes W_\bullet&RW\\
RI\otimes RI&RRI&RI&\bar{A}
\psset{arrows=->,nodesep=3pt}
\everypsbox{\scriptstyle}
\ncline{1,1}{1,3}^{\inl}
\ncline{1,1}{2,2}>{(e* W_\bullet)^\#}
\ncline{1,1}{2,1}<{e^\#\otimes W}
\ncline{1,1}{2,1}>{\rule{1mm}{0mm}\eqref{nnrcjs}}
\ncline{2,1}{2,2}_{s^R}
\ncline{1,3}{3,3}>{f^\#}
\ncline{1,3}{3,3}<{%
\raisebox{5mm}{$\scriptstyle\eqref{nnrcjn}$\rule{10mm}{0mm}}}
\ncline{2,1}{3,1}<{RI\otimes e^\#}
\ncline{2,2}{3,2}<{Re^\#}
\ncline{2,2}{3,3}>{\widehat{e^\#}}
\ncline{1,3}{3,4}>{\bar{f}^\dag}
\ncline{3,1}{3,2}_{s^R}
\ncline{3,2}{3,3}_{\mu^I}
\ncline{3,3}{3,4}_{h}
\ncline{2,2}{3,3}<{\eqref{nnrcjp}\rule{2mm}{0mm}}
\ncline{3,3}{3,4}^{%
\raisebox{9mm}{$\scriptstyle\eqref{nnrcjc}$\rule{10mm}{0mm}}}
%\ncarc[arcangle=-24]{1,1}{3,3}<{\bar{f}}
\end{psmatrix}
\]\smallskip

\comment{\item
Before continuing with our proof we observe two commutative squares.\smallskip

\begin{equation}
\label{nnnrcjs}
\myvc{
\begin{psmatrix}[colsep=15mm, rowsep=10mm]
HW\otimes W&H\bar{A}\otimes\bar{A}&H(\bar{A}\otimes\bar{A})\\
H(W\otimes W)&H(\bar{A}\otimes\bar{A})&H\bar{A}
\psset{arrows=->,nodesep=3pt}
\everypsbox{\scriptstyle}
\ncline{1,1}{1,2}^{H\bar{e}^\dag\otimes\bar{e}^\dag}
\ncline{1,2}{1,3}^{s^H}
\ncline{1,1}{2,1}<{s^H}
\ncline{1,3}{2,3}>{H\bar{m}}
\ncline{2,1}{2,2}_{H(\bar{e}^\dag\otimes\bar{e}^\dag)}
\ncline{2,2}{2,3}_{H\bar{m}}
\end{psmatrix}
}
\end{equation}\smallskip

\noindent
which is just the naturality square for~$s^H$ followed by~$H\bar{m}$,
and the other one is\smallskip

\begin{equation}
\label{nnrcjss}
\myvc{
\begin{psmatrix}[colsep=15mm, rowsep=10mm]
W=I\otimes W&\bar{A}\otimes W&\bar{A}\otimes\bar{A}\\
&I\otimes\bar{A}\\
W&&\bar{A}
\psset{arrows=->,nodesep=3pt}
\everypsbox{\scriptstyle}
\ncline{1,1}{1,2}^{\bar{i}\otimes W}
\ncline{1,2}{1,3}^{\bar{A}\otimes\bar{e}}
\ncline{1,1}{2,2}<{I\otimes\bar{e}}
\ncline{1,1}{3,1}<{W}
\ncline{2,2}{1,3}<{\bar{i}\otimes\bar{A}}
\ncline[arrows=-,doubleline=true]{2,2}{3,3}
\ncline{1,3}{3,3}>{\bar{m}}
\ncline{3,1}{3,3}_{e^\dag}
\end{psmatrix}
}
\end{equation}\smallskip

\noindent(4)\
We now prove the lower passage of~\eqref{nnrcjd}:
\[\bar{m}\bigl((h\tec\bar{e}^\#)\otimes(h\tec\bar{e}^\#)\bigr)=
\bar{f}^\dag\tec\inl.\]
Since the homomorphism $h\colon RI\to A$ preserves solutions and,
by~\ref{nnct}(iv), $e^\#$~is the solution of $\eta_I\bullet e$
in~$RI$, it follows that $\bar{e}^\dag=h\tec e^\#$. Thus. our task is
to prove
\begin{equation}
\label{nnrcjo}
\bar{m}\tec(\bar{e}^\dag\otimes\bar{e}^\dag)=\bar{f}^\dag\tec\inl.
\end{equation}
We first observe that we have a commutative diagram\smallskip

\begin{equation}
\label{nnnrcjo}
\myvc{
\begin{psmatrix}[colsep=22mm]
&\bar{A}\otimes{A}\\
W\otimes W&\bar{A}\otimes W&\bar{A}\\
&(H\bar{A}+\bar{A})\otimes W&\bar{A}\otimes\bar{A}+\bar{A}\\
(HW\otimes W)+W&(HW+\bar{A})\otimes W&
(H\bar{A}\otimes\bar{A})+(\bar{A}\otimes\bar{A})\\
H(W\otimes W)+W&H(\bar{A}\otimes\bar{A})+\bar{A}&
H\bar{A}+\bar{A}
\psset{arrows=->,nodesep=3pt}
\everypsbox{\scriptstyle}
\ncline{2,2}{1,2}<{\bar{A}\otimes\bar{e}^\dag}
\ncline{1,2}{2,3}>{\bar{m}}
\ncline{2,1}{2,2}^{\bar{e}^\dag W}
\ncline{2,1}{4,1}<{e\otimes W}
%\ncline[arrows=-,doubleline=true]{2,2}{3,3}
\ncline{2,1}{4,2}<{\bar{e}\otimes W}
\ncline{3,2}{2,2}>{[\bar{a},\bar{A}]\otimes W}
\ncline{5,1}{5,2}_{H(\bar{e}^\dag\otimes\bar{e}^\dag)+\bar{e}^\dag}
\ncline{3,3}{2,3}>{[\bar{m},\bar{A}]}
\ncline{4,2}{3,2}>{(H\bar{e}^\dag+\bar{A}\otimes W}
\ncline{3,2}{4,3}>{\id\otimes\bar{e}}
\ncline{4,3}{3,3}>{[\bar{a},\bar{A}]\otimes\bar{A}}
\ncline{4,1}{5,1}<{s^{H+W}}
\ncline{4,3}{5,3}>{H\bar{m}\tec s^H+\bar{m}}
\ncline{5,2}{5,3}_{H\bar{m}+\bar{A}}
\ncline{4,1}{4,2}_{(HW+\bar{i})\otimes W}
\ncline{4,1}{5,1}>{\rule{35mm}{0mm}\mbox{\small
\eqref{nnnrcjs} \&\ \eqref{nnrcjss}}}
\end{psmatrix}
}
\end{equation}\smallskip

\noindent
We now conclude~\eqref{nnrcjss} by using~\ref{nnco}(iii) (recall
$f=\langle e\rangle$):
}

\noindent(4)\
We shall prove for the lower passage of~\eqref{nnrcjd} that
\[\bar{m}\tec\bigl((h\tec\bar{e}^\#)\otimes (h\tec\bar{e}^\#)\bigr)=
\bar{f}^\dag\tec\inl\colon W\otimes W\to\bar{A}.\]
Since the homomorphism $h\colon RI\to A$ preserves solutions and,
by Remark~\ref{nnct}(iv),
$e^\#$~is the solution of $\eta_I\bullet e$
in~$RI$, it follows that $\bar{e}^\dag=h\tec e^\#$
(cf.~\eqref{nnrcjc}).
So we will prove
that
\[\bar{f}^\dag=\bigl[\bar{m}\tec(\bar{e}^\dag\otimes\bar{e}^\dag),
\bar{e}^\dag\bigr]\colon W\otimes W+W\to\bar{A}.\]
To see this it suffices to verify that the
following diagram\smallskip

\begin{equation}
\label{rcsss}
\myvc{
\begin{psmatrix}[colsep=35mm,rowsep=10mm]
W\otimes W+W&\bar{A}\\
H(W\otimes W+W)+\bar{A}&H\bar{A}+\bar{A}
\psset{arrows=->,nodesep=3pt}
\everypsbox{\scriptstyle}
\ncline{1,1}{1,2}^{[\bar{m}\tec(\bar{e}^\dag\otimes
\bar{e}^\dag),\bar{e}^\dag]}
\ncline{1,1}{2,1}<{\bar{f}}
\ncline{2,2}{1,2}>{[\bar{a},\bar{A}]}
\ncline{2,1}{2,2}_{\rule{8mm}{0mm}H[\bar{m}\tec(\bar{e}^\dag\otimes
\bar{e}^\dag),\bar{e}^\dag]+\bar{A}}
\end{psmatrix}
}
\end{equation}\smallskip

\noindent
commutes. In order to do so we consider the components of the upper
left-hand coproduct separately. For the right-hand component with
domain~$W$ we obtain\smallskip

\[
\begin{psmatrix}[colsep=25mm, rowsep=10mm]
W&&\bar{A}\\
HW+I&HW+\bar{A}\\
H(W\otimes W+W)+\bar{A}&&H\bar{A}+\bar{A}
\psset{arrows=->,nodesep=3pt}
\everypsbox{\scriptstyle}
\ncline{1,1}{1,3}^{\bar{e}^\dag}
\ncline{1,1}{2,1}<{e}
\ncline{1,1}{2,2}>{\bar{e}}
\ncline{2,1}{2,2}^{HW+\bar{i}}
\ncline{2,1}{3,1}<{H\inr+\bar{i}}
\ncline{2,2}{3,1}>{H\inr+\bar{A}}
\ncline{2,2}{3,3}>{H\bar{e}^\dag+\bar{A}}
\ncline{3,3}{1,3}>{[\bar{a},\bar{A}]}
\ncline{3,1}{3,3}_{\rule{8mm}{0mm}H[\bar{m}\tec(\bar{e}^\dag\otimes
\bar{e}^\dag),\bar{e}^\dag]+\bar{A}}
\end{psmatrix}
\]\smallskip

\noindent
This diagram commutes: the upper right-hand part commutes since
$\bar{e}^\dag$~is a solution of~$\bar{e}$, and all other inner parts
are obvious.

For the left-hand component of~\eqref{rcsss} we prove that the
following diagram\smallskip

\[
\begin{psmatrix}[colsep=21mm]
%&\bar{A}\otimes W\\
W\otimes W&\bar{A}\otimes\bar{A}&\bar{A}\\
(HW\otimes W)+W&(H\bar{A}+\bar{A})\otimes W&
(H\bar{A}\otimes\bar{A})+(\bar{A}\otimes\bar{A})\\
H(W\otimes W)+W&H(\bar{A}\otimes\bar{A})+\bar{A}&
H\bar{A}+\bar{A}\\
H(W\otimes W)+HW+\bar{A}&
H(\bar{A}\otimes\bar{A})+H\bar{A}+\bar{A}\\
H(W\otimes W+W)+\bar{A}&
H(\bar{A}\otimes\bar{A}+\bar{A})+\bar{A}&
H\bar{A}+\bar{A}
\psset{arrows=->,nodesep=3pt}
\everypsbox{\scriptstyle}
%\ncline{2,1}{1,2}<{\bar{e}^\dag\otimes W}
%\ncline{1,2}{2,2}>{\bar{A}\otimes\bar{e}^\dag}
\ncline{1,1}{1,2}^{\bar{e}^\dag\otimes\bar{e}^\dag}
\ncline{1,2}{1,3}^{\bar{m}}
\ncline[arrows=-,doubleline=true]{3,3}{5,3}<{\mbox{(vii)}
\rule{17mm}{0mm}}
\ncline[arrows=-,doubleline=true]{3,3}{5,3}>{\rule{7mm}{0mm}
\mbox{(ii)}}
\ncline{1,1}{2,1}>{e\otimes W}
\ncline{2,3}{1,3}<{\mbox{(iv)}\rule{10mm}{0mm}
\bar{m}\tec([\bar{a},\bar{A}]\otimes
\bar{A})}
\ncline{2,1}{2,2}_{(H\bar{e}^\dag+\bar{i})\otimes W}
\ncline{2,2}{2,3}_{\id\otimes\bar{e}^\dag}
\ncline{2,1}{3,1}>{(s^H+W)\rule{43mm}{0mm}\mbox{(v)}}
\ncline{3,1}{4,1}>{\id+\bar{e}\rule{17mm}{0mm}%
\mbox{(vi)}}
\ncline{2,3}{3,3}<{H\bar{m}\tec s^H+\bar{m}}
\ncline{3,1}{3,2}_{H(\bar{e}^\dag\otimes\bar{e}^\dag)+\bar{e}^\dag}
\ncline{3,2}{3,3}_{H\bar{m}+\bar{A}}
\ncline{4,2}{3,2}>{\id+[\bar{a},\bar{A}]}
\ncline{4,1}{5,1}>{\can+\bar{A}\rule{13mm}{0mm}\mbox{(viii)}}
\ncline{4,2}{5,3}<{\mbox{(ix)}\rule{2mm}{0mm}}
\ncline{4,2}{5,2}>{\can+\bar{A}}
\ncline{4,1}{4,2}_{H(\bar{e}^\dag\otimes\bar{e}^\dag)+H\bar{e}^\dag+
\bar{A}}
\ncline{4,2}{5,3}>{H\bar{m}+[\bar{a},\bar{A}]}
\ncline{5,1}{5,2}^{H(\bar{e}^\dag\otimes
\bar{e}^\dag+\bar{e}^\dag)+\bar{A}}
\ncline{5,2}{5,3}^{H[\bar{m},\bar{A}]+\bar{A}}
\ncarc[arcangle=-10]{5,1}{5,3}_{H[\bar{m}\tec(\bar{e}^\dag\otimes
\bar{e}^\dag),\bar{e}^\dag]+\bar{A}}
\ncarc[arcangle=-70]{5,3}{1,3}<{[\bar{a},\bar{A}]}
\ncarc[arcangle=-70]{1,1}{5,1}>{\bar{f}\tec\inl}
\ncline{2,1}{3,1}<{\mbox{(i)}\rule{10mm}{0mm}}
%\ncline{1,2}{2,2}<{\mbox{(ii)\rule{7mm}{0mm}}}
\ncline{2,2}{1,2}<{\mbox{(iii)}\rule{12mm}{0mm}
[\bar{a},A]\otimes\bar{e}^\dag}
\end{psmatrix}
\]\bigskip\bigskip

\noindent
commutes: the left-hand part~(i) commutes by the definition of
$\bar{f}=\overline{\langle e\rangle}$ and
right-hand part~(ii) is obvious since
$\bar{A}$~is an $H$-monoid, i.e., $\bar{a}\tec H\bar{m}\tec
s^H=\bar{m}\tec (\bar{a}\otimes\bar{A})$.
For
part~(iii) observe that $(HW\otimes W)+W=(HW+I)\otimes W$ and use
Diagram~\eqref{rdd} and Equation~\eqref{nnrcjdjp}.
In part~(iv) we use the distributivity for the object in the
lower right-hand corner:
$(H\bar{A}+\bar{A})\otimes\bar{A}=H\bar{A}\otimes\bar{A}+
\bar{A}\otimes\bar{A}$---the commutativity is then obvious. We
postpone part~(v) to the end. Part~(vi) commutes by using~\eqref{rdd}.
Parts (vii) and~(viii) are trivial. We do not claim that
part~(ix) commutes, but it clearly does when post-composed
with~$[\bar{a},\bar{A}]$,
which suffices for the commutativity of the
outside of the diagram. Finally, it remains to
prove that part~(v) commutes: we consider the components of the
coproduct $(HW\otimes W)+W$ separately. For the right-hand component
we obtain the commutative diagram\smallskip

\begin{equation}
\label{nnrcjss}
\myvc{
\begin{psmatrix}[colsep=15mm,rowsep=10mm]
W=I\otimes W&\bar{A}\otimes W&\bar{A}\otimes\bar{A}\\
&I\otimes\bar{A}\\
W&&\bar{A}
\psset{arrows=->,nodesep=3pt}
\everypsbox{\scriptstyle}
\ncline{1,1}{1,2}^{\bar{i}\otimes W}
\ncline{1,2}{1,3}^{\bar{A}\otimes\bar{e}^\dag}
\ncline{1,1}{2,2}<{I\otimes\bar{e}^\dag}
\ncline[arrows=-,doubleline=true]{1,1}{3,1}
\ncline{2,2}{1,3}<{\bar{i}\otimes\bar{A}}
\ncline[arrows=-,doubleline=true]{2,2}{3,3}
\ncline{1,3}{3,3}>{\bar{m}}
\ncline{3,1}{3,3}_{\bar{e}^\dag}
\end{psmatrix}
}
\end{equation}\smallskip

\noindent
and for the left-hand one consider the diagram below\smallskip

\begin{equation}
\label{nnnrcjs}
\myvc{
\begin{psmatrix}[colsep=15mm,rowsep=10mm]
HW\otimes W&H\bar{A}\otimes\bar{A}&H(\bar{A}\otimes\bar{A})\\
H(W\otimes W)&H(\bar{A}\otimes\bar{A})&H\bar{A}
\psset{arrows=->,nodesep=3pt}
\everypsbox{\scriptstyle}
\ncline{1,1}{1,2}^{H\bar{e}^\dag\otimes\bar{e}^\dag}
\ncline{1,2}{1,3}^{s^H}
\ncline{1,1}{2,1}<{s^H}
\ncline{1,3}{2,3}>{H\bar{m}}
\ncline{2,1}{2,2}_{H(\bar{e}^\dag\otimes\bar{e}^\dag)}
\ncline{2,2}{2,3}_{H\bar{m}}
\ncline[arrows=-,doubleline=true]{2,2}{1,3}
\end{psmatrix}
}
\end{equation}\smallskip

\noindent
This completes the proof.\qed

\begin{thm}
\label{nncjjjp}
Let $H$~be a finitary, point-strong endofunctor. Then the
monoid~$TI$ from Corollary~\textnormal{\ref{cojp}}
is the initial completely iterative $H$-monoid.
\end{thm}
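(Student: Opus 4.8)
The plan is to follow the pattern of Theorem~\ref{nncjn}, but to use complete iterativity in order to replace the filtered-colimit reduction there by a single \emph{universal} equation morphism. First I would invoke the freeness of $TI$ as a completely iterative $H$-algebra: given a completely iterative $H$-monoid $(\bar A,\bar a,\bar i,\bar m)$, there is a unique $H$-algebra homomorphism $h\colon TI\to\bar A$ with $h\tec\eta^T_I=\bar i$. The entire task is then to prove that $h$ preserves the multiplication $m=\mu^T_I\tec s^T_{I,TI}$ of Corollary~\ref{cojp}, that is, that $h\tec m=\bar m\tec(h\otimes h)\colon TI\otimes TI\to\bar A$.

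The key idea is that \emph{both} of these morphisms are solutions, in the completely iterative algebra $\bar A$, of one and the same equation morphism; uniqueness of solutions then finishes the argument. Writing $c=[\tau_I,\eta^T_I]^{-1}\colon TI\to HTI+I$ for the terminal-coalgebra structure (Example~\ref{ndoa}) and using right distributivity $(HTI+I)\otimes TI=HTI\otimes TI+TI$ from Assumption~\ref{nncj}(3), set
\[
g\equiv TI\otimes TI\nsi{c\otimes TI}HTI\otimes TI+TI
\nsi{s^H+h}H(TI\otimes TI)+\bar A .
\]
This is of the same shape as the equation morphism~$\hat e$ in~\eqref{rcxxx}, for $e=c$ and $f=h$, with $TI$ itself in the role of the variable object---legitimate here precisely because complete iterativity imposes no finiteness restriction on the domain. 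Indeed, taking $W=TI$ and $e=c$ in the scheme of Theorem~\ref{nncjn}, the unique coalgebra endomorphism of the terminal coalgebra is $\id_{TI}$, so the analogue of square~\eqref{nnrcjd} is already the multiplication square we must prove; this is what licenses arguing directly on $TI\otimes TI$.

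To see that $h\tec m$ solves $g$: by the strength construction of Theorem~\ref{T:monad-T}, the component $s^T_{I,TI}=(\eta^T_{TI})^\flat$ is the unique solution of $(s^H+\eta^T_{TI})\tec(c\otimes TI)$. Since $\mu^T_I$ is an $H$-algebra homomorphism between completely iterative algebras it preserves solutions, and $\mu^T_I\tec\eta^T_{TI}=\id_{TI}$, whence $m=\mu^T_I\tec s^T_{I,TI}$ solves $(s^H+\id_{TI})\tec(c\otimes TI)$. Finally $h$ itself preserves solutions, so $h\tec m$ solves $h\bullet\bigl((s^H+\id)\tec(c\otimes TI)\bigr)=g$. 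It remains to check that $\bar m\tec(h\otimes h)$ solves $g$ as well. Precomposing the fixpoint identity $u=[\bar a,\id]\tec(Hu+\bar A)\tec g$ for $u=\bar m\tec(h\otimes h)$ with the isomorphism $c^{-1}\otimes TI=[\tau_I\otimes TI,\eta^T_I\otimes TI]$ splits it into two summand equations. On the summand $\eta^T_I\otimes TI$ it reads $\bar m\tec(\bar i\otimes h)=h$, which is the left unit law of the monoid $\bar A$ (Diagram~\eqref{nnrcjss}). On the summand $\tau_I\otimes TI$ it reads
\[
\bar m\tec(h\otimes h)\tec(\tau_I\otimes TI)=\bar a\tec H\bigl(\bar m\tec(h\otimes h)\bigr)\tec s^H,
\]
which follows from $h\tec\tau_I=\bar a\tec Hh$ (as $h$ is an $H$-algebra homomorphism), the naturality of $s^H$ along the pointed map $h\colon(TI,\eta^T_I)\to(\bar A,\bar i)$, and the $H$-monoid axiom~\eqref{abb} for $\bar A$ (Diagram~\eqref{nnnrcjs}). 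By uniqueness of solutions in the completely iterative algebra $\bar A$ we conclude $h\tec m=\bar m\tec(h\otimes h)$.

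The step I expect to need the most care is the identification of $h\tec m$ with the solution of $g$ in the third paragraph: one must unwind the strength exactly as it is built in Theorem~\ref{T:monad-T} (and its coproduct form in Remark~\ref{R:monad-T}) and keep the pointings straight, so that the instance of $s^H$ occurring in $g$, namely $s^H_{(TI,\eta^T_I),(TI,\eta^T_I)}$, is literally the instance used to verify the $\tau_I$-summand, and so that the $H$-monoid axiom and the naturality of $s^H$ genuinely apply. The remaining ingredient that must be cited rather than reproved is that $H$-algebra homomorphisms between completely iterative algebras preserve solutions; this is the completely iterative counterpart of Remark~\ref{nnct}(vii) and is available from the theory of free completely iterative algebras used in Example~\ref{E:monad-T}.
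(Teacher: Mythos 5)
Your proposal is correct and follows essentially the same route as the paper: the same equation morphism $s^H+h$ on $TI\otimes TI\cong HTI\otimes TI+TI$, the same strategy of exhibiting both $h\tec\mu^T_I\tec s^T$ and $\bar m\tec(h\otimes h)$ as solutions of it in the completely iterative algebra $\bar A$, and the same unit-law/monoid-axiom/naturality-of-$s^H$ verification on the two coproduct summands for the lower passage. The only deviation is that for the upper passage the paper verifies the solution square \eqref{mul} by a direct diagram chase using \eqref{ffflat} and the homomorphism properties of $\mu^T$ and $h$, whereas you transport the defining solution property of $s^T_{I,TI}=(\eta^T_{TI})^\flat$ along $\mu^T_I$ and $h$ via preservation of solutions --- an equally valid and slightly slicker argument, provided the cia analogue of Remark~\ref{nnct}(vii) is cited, as you indicate.
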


Notice that our proof below uses just the existence of the completely
iterative algebras for~$H$ (cf. Example~\ref{E:monad-T}) and not
finitariness of~$H$ directly.

\begin{proof}
Analogously to the preceding proof, for a completely iterative
$H$-monoid $(\bar{A},\bar{i},\bar{m},\bar{a})$ we know that there
exists a unique $H$-algebra homomorphism
\[h\colon TI\to\bar{A}\qquad\text{with}\qquad h\tec\eta^T_I=\bar{i}\]
and it is our task to prove that it preserves multiplication:\smallskip

\begin{equation}
\label{mon}
\myvc{
\begin{psmatrix}[colsep=15mm,rowsep=10mm]
TI\otimes TI&TTI&TI\\
\bar{A}\otimes\bar{A}&&\bar{A}
\psset{arrows=->,nodesep=3pt}
\everypsbox{\scriptstyle}
\ncline{1,1}{1,2}^{s^T}
\ncline{1,2}{1,3}^{\mu^T}
\ncline{1,1}{2,1}<{h\otimes h}
\ncline{1,3}{2,3}>{h}
\ncline{2,1}{2,3}_{\bar{m}}
\end{psmatrix}
}
\end{equation}\smallskip

\noindent
To prove this, we define an equation morphism in~$\bar{A}$:
\[e\colon TI\otimes TI=HTI\otimes TI+TI\to H(TI\otimes TI)+\bar{A}\]
such that both passages of~\eqref{mon} are solutions of~$e$
in~$\bar{A}$: put
\[e=s^H_{TI,TI}+h.\]
For the upper passage of~\eqref{mon} we need to verify that the square\smallskip

\begin{equation}
\label{mul}
\myvc{
\begin{psmatrix}[colsep=25mm,rowsep=10mm]
TI\otimes TI&\bar{A}\\
H(TI\otimes TI)+\bar{A}&H\bar{A}+\bar{A}
\psset{arrows=->,nodesep=3pt}
\everypsbox{\scriptstyle}
\ncline{1,1}{1,2}^{h\tec\mu^T\tec s^T}
\ncline{1,1}{2,1}<{e}
\ncline{1,2}{2,2}>{[\bar{a},\bar{A}]}
\ncline{2,1}{2,2}_{\rule{5mm}{0mm}H(h\tec\mu^T\tec s^T)+\bar{A}}
\end{psmatrix}
}
\end{equation}\smallskip

\noindent
commutes. Consider the components of $HTI\otimes TI+I\otimes TI$
separately. The left-hand component yields\smallskip

\[\begin{psmatrix}[colsep=20mm,rowsep=10mm]
H(TI\otimes TI)&HTTI&HTI&H\bar{A}\\
HTI\otimes TI\\
TI\otimes TI&TTI&TI&\bar{A}\\
H(TI\otimes TI)\\
H(TI\otimes TI)+\bar{A}&HTTI+\bar{A}&HTI+\bar{A}&H\bar{A}
\psset{arrows=->,nodesep=3pt}
\everypsbox{\scriptstyle}
\ncline{2,1}{1,1}<{s^H}
\ncline{1,1}{1,2}^{Hs^T}
\ncline{1,2}{1,3}^{H\mu^T}
\ncline{1,3}{1,4}^{Hh}
\ncarc[arcangle=-55]{2,1}{4,1}<{s^H}
\ncline{2,1}{3,1}>{\tau\otimes TI}
\ncarc[arcangle=70]{3,1}{5,1}>{e}
\ncline{3,1}{3,2}^{s^T}
\ncline{3,2}{3,3}^{\mu^T}
\ncline{3,3}{3,4}^{h}
\ncline{1,2}{3,2}>{\tau}
\ncline{1,3}{3,3}>{\tau}
\ncline{1,4}{3,4}>{\bar{a}}
\ncline{4,1}{5,1}<{\inl}
\ncline{5,1}{5,2}_{Hs^T+\bar{A}}
\ncline{5,2}{5,3}_{H\mu^T+\bar{A}}
\ncline{5,3}{5,4}_{Hh+\bar{A}}
\ncline{5,4}{3,4}>{[\bar{a},\bar{A}]}
\end{psmatrix}\]\smallskip

\noindent
The two upper right-hand squares commute since $\mu^T$ and~$h$ are
homomorphisms of algebras, the upper left-hand square
is Diagram~\eqref{ffflat} and the lower left-hand part
is obvious.
Thus, since the outside of the diagram clearly commutes we see that
its lower part~\eqref{mul} commutes when extended by the left-hand
injection $\tau_I\otimes TI$, as desired.

The right-hand component of~\eqref{mul} yields\smallskip

\[\begin{psmatrix}[colsep=20mm,rowsep=15mm]
I\otimes TI=TI\\
TI\otimes TI&TTI&TI&\bar{A}\\
\bar{A}&H(TI\otimes TI)+\bar{A}&&H\bar{A}+\bar{A}
\psset{arrows=->,nodesep=3pt}
\everypsbox{\scriptstyle}
\ncarc[arcangle=-55]{1,1}{3,1}<{h}
\ncline{1,1}{2,1}>{\eta^T\otimes TI}
\ncarc[arcangle=-10]{1,1}{2,2}>{\eta^TT}
\ncline[doubleline=true]{-}{1,1}{2,3}
\ncline{2,1}{2,2}_{s^T}
\ncline{2,1}{3,2}>{e}
\ncline{2,2}{2,3}_{\mu^T}
\ncline{2,3}{2,4}^{h}
\ncline{3,1}{3,2}_{\inr\rule{8mm}{0mm}}
\ncline{3,2}{3,4}_{H(h\tec\mu^T\tec s^T)+\bar{A}}
\ncline{3,4}{2,4}>{[\bar{a},\bar{A}]}
\end{psmatrix}\]\smallskip

\noindent
Since all other parts clearly commute we see that the right-hand
lower part~\eqref{mul} commutes when extended by the coproduct
injection $\eta^T_I\otimes TI$ as desired.
For the lower passage of~\eqref{mon} we verify that the square\smallskip

\begin{equation}
\label{low}
\myvc{
\begin{psmatrix}[colsep=20mm, rowsep=10mm]
TI\otimes TI&\bar{A}\otimes\bar{A}&\bar{A}\\
H(TI\otimes TI)+\bar{A}&H(\bar{A}\otimes\bar{A})+\bar{A}&
H\bar{A}+\bar{A}
\psset{arrows=->,nodesep=3pt}
\everypsbox{\scriptstyle}
\ncline{1,1}{1,2}^{h\otimes h}
\ncline{1,2}{1,3}^{\bar{a}}
\ncline{1,1}{2,1}<{e}
\ncline{2,3}{1,3}>{[\bar{a},\bar{A}]}
\ncline{2,1}{2,2}_{H(h\otimes h)+\bar{A}}
\ncline{2,2}{2,3}_{H\bar{m}+\bar{A}}
\end{psmatrix}
}
\end{equation}\smallskip

\noindent
commutes. The left-hand component of $TI\otimes TI=(HTI+I)\otimes TI=
(HTI\otimes TI)+TI$ (with coproduct injections $\tau\otimes TI$ and
$\eta^T\otimes TI)$
yields\smallskip

\[\begin{psmatrix}[colsep=21mm,rowsep=15mm]
HTI\otimes TI&H\bar{A}\otimes\bar{A}&H(\bar{A}\otimes\bar{A})&
H\bar{A}\\
H(TI\otimes TI)&TI\times TI&\bar{A}\otimes\bar{A}&\bar{A}\\
&H(TI\otimes TI)+\bar{A}&H(\bar{A}\otimes\bar{A})+\bar{A}&
H\bar{A}+\bar{A}
\psset{arrows=->,nodesep=3pt}
\everypsbox{\scriptstyle}
\ncline{1,1}{1,2}^{Hh\otimes h}
\ncline{1,2}{1,3}^{s^H}
\ncline{1,3}{1,4}^{H\bar{m}}
\ncline{1,1}{2,1}<{s^H}
\ncline{1,1}{2,2}<{\tau\otimes TI}
\ncline{1,2}{2,3}>{\bar{a}\otimes\bar{A}}
\ncline{1,4}{2,4}>{\bar{a}}
\ncline{2,1}{3,2}<{\inl}
\ncline{2,2}{3,2}>{e}
\ncline{3,4}{2,4}>{[\bar{a},\bar{A}]}
\ncline{3,2}{3,3}_{H(h\otimes h)+\bar{A}}
\ncline{3,3}{3,4}_{H\bar{m}+\bar{A}}
\ncline{2,2}{2,3}^{h\otimes h}
\ncline{2,3}{2,4}^{\bar{m}}
\end{psmatrix}\]\smallskip

\noindent
The upper right-hand square is Diagram~\eqref{abb}
and the middle one
commutes since $h$~is an $H$-algebra homomorphism. The left-hand part
is clear since $e=s^H+h$.
The lowest passage
in the last diagram is
\[\bar{a}\tec H\bar{m}\tec H(h\otimes h)\tec s^H= \bar{a}\tec
H\bar{m}\tec
s^H\tec(Hh\otimes h)\]
due to the naturality of~$s^H$. Thus, the outside of the diagram
commutes, showing that \eqref{low}~commutes when extended by
$\tau_I\otimes TI$.

Finally, the right-hand component
of~\eqref{low} is obvious:

\[\begin{psmatrix}[colsep=21mm,rowsep=15mm]
I\otimes TI=TI\\
\bar{A}&TI\otimes TI&\bar{A}\otimes\bar{A}&\bar{A}\\
&H(TI\otimes TI)+\bar{A}&H(\bar{A}\otimes\bar{A})+\bar{A}&
H\bar{A}+\bar{A}
\psset{arrows=->,nodesep=3pt}
\everypsbox{\scriptstyle}
\ncline{1,1}{2,1}<{h}
\ncline{1,1}{2,2}<{\eta^T\otimes TI}
\ncline{1,1}{2,3}<{\bar{i}\otimes h}
\ncline{1,1}{2,4}^{h}
\ncline{2,2}{2,3}_{h\otimes h}
\ncline{2,3}{2,4}_{\bar{a}}
\ncline{2,1}{3,2}<{\inl}
\ncline{2,2}{3,2}<{e}
\ncline{3,2}{3,3}_{H(h\otimes h)+\bar{A}}
\ncline{3,3}{3,4}_{H\bar{m}+\bar{A}}
\ncline{3,4}{2,4}>{[\bar{a},\bar{A}]}
\end{psmatrix}\]\smallskip

\noindent
The lower passage is~$h$ and so is the upper one:
\[\bar{a}\tec(\bar{i}\otimes h)=a\tec(i\otimes\bar{A})\tec (I\otimes
h)=I\otimes h.\]
This shows that \eqref{low} commutes when extended by
$\eta^T_I\otimes TI$, which completes the proof.
\end{proof}

\begin{cor}
\label{nncjjtc}
For every signature~\Sig\ the presheaf~$T_{\lambda,\Sig}$ is the
initial completely iterative $H_{\lambda,\Sig}$-monoid, and the
presheaf~$R_{\lambda,\Sig}$ is the initial iterative
$H_{\lambda,\Sig}$-monoid.\qed
\end{cor}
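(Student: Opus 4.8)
The plan is to derive the statement as an instantiation of Theorems~\ref{nncjn} and~\ref{nncjjjp} to the monoidally locally finitely presentable category $\WW=\SetF$, with $\otimes$ the substitution tensor of Remark~\ref{hndj}(i), and to the endofunctor $H=H_{\lambda,\Sig}$. First I would check the hypotheses of Assumption~\ref{nncj}: that $\SetF$ is monoidally locally finitely presentable w.r.t.\ $\otimes$ is Example~\ref{ncdjp}, and $H_{\lambda,\Sig}X=X\times X+\delta X+\coprod_{n\in\nn}\Sig_n\bullet X^n$ is finitary (each summand preserves filtered colimits, and so does a coproduct of such) and point-strong by Corollary~\ref{hntt}. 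Since the unit object of $\otimes$ is the presheaf of variables $V$ (Remark~\ref{hndj}(i)), the distinguished object $I$ of the general theory is here exactly $V$.

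Next I would identify the free algebras on $I=V$. By Theorem~\ref{djp}(ii) the free iterative $H_{\lambda,\Sig}$-algebra on $V$ is $R_{\lambda,\Sig}$, so $RI=R_{\lambda,\Sig}$; by Theorem~\ref{djp}(iii) the free completely iterative one is $T_{\lambda,\Sig}$, so $TI=T_{\lambda,\Sig}$. Under this identification the universal map $\eta_I$ is the canonical pointing $i^R$ (both send a variable to the corresponding one-node tree, cf.\ Example~\ref{nncc}) and the algebra structure $\varrho_V$ is the tree-building $H_{\lambda,\Sig}$-algebra structure of $R_{\lambda,\Sig}$; similarly for $T_{\lambda,\Sig}$. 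Theorems~\ref{nncjn} and~\ref{nncjjjp} then state that $RI$ and $TI$, equipped with the monoid structures of Corollaries~\ref{nncss} and~\ref{cojp} (unit $\eta_I$ and multiplications $\mu_I\tec s^R_{I,RI}$ and $\mu^T_I\tec s^T_{I,TI}$), are the initial iterative and the initial completely iterative $H_{\lambda,\Sig}$-monoids.

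It remains to see that these abstract multiplications are exactly simultaneous substitution, so that the initial objects really are $R_{\lambda,\Sig}$ and $T_{\lambda,\Sig}$ carrying their familiar structure. Rather than compute the strength directly, I would argue by uniqueness. Simultaneous substitution makes $R_{\lambda,\Sig}$ and $T_{\lambda,\Sig}$ into $H_{\lambda,\Sig}$-monoids: for $T_{\lambda,\Sig}$ this is Theorem~\ref{ntsjt}, and for $R_{\lambda,\Sig}$ the compatibility square~\eqref{abb} is verified exactly as in Example~\ref{ntptp}, substitution commuting separately with application and with abstraction; moreover their underlying algebras are (completely) iterative by Theorems~\ref{hdjt} and~\ref{ddv}. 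Thus each is an object of the relevant category of (completely) iterative $H_{\lambda,\Sig}$-monoids, and initiality of $RI$ (resp.\ $TI$) yields a unique morphism $h$ into it that is simultaneously a monoid and an $H_{\lambda,\Sig}$-algebra homomorphism. Because the underlying $H_{\lambda,\Sig}$-algebras literally agree and the units coincide ($\eta_I=i^R$, resp.\ $\eta^T_I=i^T$), the universal property of the free algebra forces $h=\id$; and an identity monoid homomorphism is precisely the equation $\mu_I\tec s^R_{I,RI}=m^R$ (resp.\ $\mu^T_I\tec s^T_{I,TI}=m^T$). Hence $R_{\lambda,\Sig}$ and $T_{\lambda,\Sig}$ with substitution are the asserted initial objects.

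Almost every step is bookkeeping of identifications already in place; the only point needing an actual argument is the $H_{\lambda,\Sig}$-monoid law~\eqref{abb} for substitution on $R_{\lambda,\Sig}$, and this is the same level-by-level verification as for $F_\lambda$ in Example~\ref{ntptp}, with no new idea required. All the conceptual difficulty has already been absorbed into the point-strength of the rational monad (Theorem~\ref{nncs}) and the abstract initiality result (Theorem~\ref{nncjn}), which is where I expect the real work --- and, in a from-scratch development, the main obstacle --- to lie.
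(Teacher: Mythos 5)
Your proposal is correct and follows exactly the paper's route: the corollary is obtained by instantiating Theorems~\ref{nncjn} and~\ref{nncjjjp} at $\WW=\SetF$, $I=V$, $H=H_{\lambda,\Sig}$, using Example~\ref{ncdjp} and Corollary~\ref{hntt} for the hypotheses and Theorem~\ref{djp} to identify $RI=R_{\lambda,\Sig}$ and $TI=T_{\lambda,\Sig}$. Your closing uniqueness argument showing that the abstract multiplications $\mu_I\tec s^R_{I,RI}$ and $\mu^T_I\tec s^T_{I,TI}$ coincide with simultaneous substitution is a sound extra verification that the paper leaves implicit, and it is welcome but does not change the approach.
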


\comment{
This follows from Theorem~\ref{nncjjjp} since $T_{\lambda,\Sig}$~is
the free completely iterative $H_{\lambda,\Sig}$-algebra on~$V$ (cf.
Theorem~\ref{ddv}).

\begin{cor}
\label{nncjj}
$R_{\lambda,\Sig}$~is the initial iterative
$H_{\lambda,\Sig}$-monoid.
\end{cor}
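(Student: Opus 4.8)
The plan is to instantiate the abstract Theorems~\ref{nncjn} and~\ref{nncjjjp} at the endofunctor $H=H_{\lambda,\Sig}$ with the monoidal unit $I=V$, and then to identify the resulting abstract $H$-monoids $RV$ and $TV$ with the concrete presheaves $R_{\lambda,\Sig}$ and $T_{\lambda,\Sig}$ equipped with simultaneous substitution. First I would check the hypotheses: $H_{\lambda,\Sig}$ is point-strong by Corollary~\ref{hntt}, and it is finitary since each of its summands---the square $X\times X$, the functor $\delta$, and the polynomial part $\coprod_{n}\Sig_n\bullet X^n$---preserves filtered colimits (finite products and arbitrary coproducts commute with filtered colimits, and $\delta$ preserves all colimits). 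Hence both abstract theorems apply with unit object $V$.

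Next I would identify the underlying objects. By Theorem~\ref{djp}(ii) the free iterative $H_{\lambda,\Sig}$-algebra on~$V$ is $R_{\lambda,\Sig}$, so in the notation of~\ref{nncd} we have $RV=R_{\lambda,\Sig}$; dually, by Theorem~\ref{djp}(iii) we have $TV=T_{\lambda,\Sig}$. In both cases the universal arrow $\eta_V$, which sends a variable to its singleton tree, coincides with $i^R$ (respectively $i^T$) of Remark~\ref{ds}. Thus Theorem~\ref{nncjn} yields that $R_{\lambda,\Sig}$ carrying the multiplication $m=\mu_V\tec s^R_{V,R_{\lambda,\Sig}}$ of Corollary~\ref{nncss} is the initial iterative $H_{\lambda,\Sig}$-monoid, and Theorem~\ref{nncjjjp} gives the analogous statement for $T_{\lambda,\Sig}$ with $m'=\mu^T_V\tec s^T_{V,TV}$.

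The only nonroutine point is to see that these abstract multiplications agree with the simultaneous-substitution morphisms $m^R$ and $m^T$ of Remark~\ref{hndj}(iv). Rather than unwinding the strengths $s^R$, $s^T$ explicitly, I would argue by uniqueness. As in Example~\ref{ntptp} and Theorem~\ref{ntsjt}, substitution already makes $R_{\lambda,\Sig}$ an $H_{\lambda,\Sig}$-monoid with unit $i^R=\eta_V$, and this monoid is iterative since its underlying algebra is the free iterative one. Hence by the initiality just established there is a unique $H_{\lambda,\Sig}$-monoid homomorphism $\varphi\colon(R_{\lambda,\Sig},\eta_V,m)\to(R_{\lambda,\Sig},i^R,m^R)$. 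Being a monoid homomorphism, $\varphi$ preserves the unit, so $\varphi\tec\eta_V=i^R=\eta_V$; being also an $H_{\lambda,\Sig}$-algebra endomorphism of the \emph{free} iterative algebra $R_{\lambda,\Sig}$ on~$V$ that fixes the universal arrow $\eta_V$, it must equal $\id$ by freeness. Since the identity is then a monoid homomorphism from $m$ to $m^R$, we conclude $m=m^R$. The identical argument using Theorem~\ref{nncjjjp} gives $m'=m^T$.

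Therefore $R_{\lambda,\Sig}$ with substitution is the initial iterative $H_{\lambda,\Sig}$-monoid and $T_{\lambda,\Sig}$ with substitution the initial completely iterative one, as claimed. The step I expect to require the most care is the verification that substitution satisfies the $H$-monoid compatibility square~\eqref{abb} for the rational presheaf (the $R$-analogue of Example~\ref{ntptp}), since this is precisely what legitimizes invoking initiality on the substitution structure in the uniqueness argument; everything else is bookkeeping about instantiating the general results.
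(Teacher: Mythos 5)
Your proposal is correct and follows the paper's own route: the paper likewise obtains the corollary by instantiating Theorems~\ref{nncjn} and~\ref{nncjjjp} at the finitary, point-strong endofunctor $H_{\lambda,\Sig}$ with unit object $V$ in the monoidally locally finitely presentable category \SetF, and identifying $RV=R_{\lambda,\Sig}$ and $TV=T_{\lambda,\Sig}$ via Theorem~\ref{djp}. Your additional uniqueness argument showing that the abstract multiplication $\mu_V\tec s^R$ coincides with simultaneous substitution is sound (given that substitution satisfies the compatibility square~\eqref{abb}, as the paper asserts) and fills in a point the paper leaves implicit.
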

}

Indeed, \SetF~is monoidally locally finitely presentable category
and $H_{\lambda,\Sig}$~is finitary and
point-strong by Corollary~\ref{hntt}.
Moreover, $\rr_{\lambda,\Sig}(V)=R_{\lambda,\Sig}$ by
Theorem~\ref{djp}, and $\TT_{\lambda,\Sig}(V)=T_{\lambda,\Sig}$ by
Theorem~\ref{nncjjjp}.

\section{Higher-Order Recursion Schemes}
\setcounter{equation}{0}
\label{kt}

\noindent We can reformulate (and slightly extend) higher-order recursion
schemes~\eqref{hrjj} categorically. Throughout this section we put
$\WW=\SetF$.
\begin{defi}
\label{tj}
A \textit{higher-order recursion scheme}
on a signature~\Sig\ (of ``terminals'') is a
presheaf morphism
\begin{equation}
\label{nott}
e\colon X\to F_{\lambda,\Sig}\otimes(X+V)
\end{equation}
where $X$~is a finitely presentable presheaf.
\end{defi}

\begin{rem}
\label{td}\hfill
\begin{enumerate}[(i)]
\item
The presheaf $F_{\lambda,\Sigma}\otimes (X+V)$
assigns to a context $\Gamma$ the set
$F_{\lambda,\Sigma}(X(\Gamma)+\Gamma)$
of finite $\lambda$-terms in contexts
$\overline{\Gamma}\subseteq X(\Gamma)+\Gamma$.

\item
  Although $F_{\lambda,\Sig}$ is the free
  $H_{\lambda,\Sig}$-algebra on $V$,
see
  Theorem~\ref{dss}, it is not in general true that
  $F_{\lambda,\Sig} \otimes Z$
  is the free $H_{\lambda,\Sig}$-algebra
  on a presheaf $Z$. For example, if
  $Z = V \times V$, then terms in
  $(F_{\lambda,\Sig} \otimes Z)(\Gamma)$ are
  precisely the finite $\lambda$-\Sig-terms whose free variables are
  substituted by pairs in $\Gamma \times \Gamma$. In contrast, the
  free $H_{\lambda,\Sig}$-algebra on $V \times V$ contains in context
  $\Gamma$ also terms such as $\lambda x.(x,y)$ for $y \in\Gamma$,
  that is, in variable pairs one member can be bound and one free.

\item
In the introduction we considered, for a given
context
\[\Gamma_{nt}=\{p_1,\dots,p_n\}\]
of ``nonterminals'', a system
of equations $p_i=f_i$, where $f_i$~is a $\lambda$-\Sig-term in some
context $\Gamma_0=\{x_1,\dots,x_k\}$. Let $X$~be the free presheaf
in $n$~generators $p_1,\dots,p_n$ of context~$\Gamma_0$
(a coproduct of $n$~copies of~$\FF(\Gamma_0, {-})$,
see Example~\ref{dd}(ii)).
Then the system of equations defines the unique
morphism
\[e\colon X\to F_{\lambda,\Sig}\otimes(X+V)\]
assigning to every~$p_i$ the right-hand side $f_i$ lying in
\[F_{\lambda,\Sig}(\Gamma_{nt}+\Gamma_0)\subseteq
F_{\lambda,\Sig}(X(\Gamma_0)+\Gamma_0).\]
Here we consider $F_{\lambda,\Sig}$ as an object
of~$\Fin(\Set,\Set)$.

\item
Conversely, every morphism~\eqref{nott} yields a system of
equations
$p_i=f_i$ as follows: let $\Gamma_0$~fulfill~\eqref{rdt} in
Definition~\ref{ndfdjj},
and define
$\Gamma_{nt}=X(\Gamma_0)$. The element $f_p=e_{\Gamma_0}(p)$ lies, for
every
nonterminal $p\in\Gamma_{nt}$, in
$F_{\lambda,\Sig}(\Gamma_{nt}+\Gamma_0)$.
We obtain a system of equations $p=f_p$
describing the
given morphism~$e$.

\item
We will use the presheaf~$R_{\lambda,\Sig}$
for our uninterpreted solutions
of recursion schemes:

A solution of the system of (formal) equations $p_i=f_i$ are rational
$\lambda$-\Sig-terms
$p_1^\dag,\dots,p_n^\dag$ making those equations identities
in~$R_{\lambda,\Sig}(\Gamma_0)$ when we substitute in~$f_i$ the
$\lambda$-\Sig-terms~$p^\dag_j$
for the nonterminals~$p_j$ ($j=1,\dots,n$). This is
expressed by the Definition~\ref{ntd} below.

\item The general case of ``equation morphisms'' as considered
in~\cite{AMV1} is (for the endofunctor $H_{\lambda,\Sigma}$) a
morphism of type $e:X\to {\mathbb{R}}_{\lambda,\Sigma}(X+V)$.  We
see that every higher-order recursion scheme gives an equation
morphism via the inclusion $F_{\lambda,\Sig} \subto R_{\lambda,\Sig}$
and the strength of the monad $\mathbb{R}_{\lambda,\Sig}$ (but not
necessarily conversely). Our solution theorem below is an application
of the general result of~\cite{AMV1}.
\end{enumerate}
\end{rem}

\begin{defi}
\label{ntd}
A \textit{solution} of a higher-order
recursion scheme $e\colon X\to
F_{\lambda,\Sig}\otimes(X+V)$ is a morphism $e^\dag\colon X\to
R_{\lambda,\Sig}$ such that the square below, where $j\colon
F_{\lambda,\Sig}\to R_{\lambda,\Sig}$ denotes the embedding, commutes:
$$
\xymatrix@C+2pc@R-.5pc{
  X
  \ar[r]^{e^\dagger}
  \ar[d]_e
  &
  R_{\lambda,\Sig}
  \\
  F_{\lambda,\Sig} \otimes (X + V)
  \ar[d]_{j \otimes (X + V)}
  \\
  R_{\lambda,\Sig} \otimes (X + V)
  \ar[r]_{\rule{2mm}{0mm}R_{\lambda,\Sig} \otimes [e^\dagger, i]}
  &
  R_{\lambda,\Sig} \otimes R_{\lambda,\Sig}
  \ar[uu]_{m}
}
$$
\comment{
\vspace{1pt}
\[\begin{psmatrix}[colsep=25mm]
X&R_{\lambda,\Sig}\\
F_{\lambda,\Sig}\otimes(X+V)&\\
R_{\lambda,\Sig}\otimes(X+V)&R_{\lambda,\Sig}\otimes R_{\lambda,\Sig}
\psset{arrows=->,nodesep=3pt}
\everypsbox{\scriptstyle}
\ncline{1,1}{1,2}^{e^\dag}
\ncline{1,1}{2,1}<{e}
\ncline[arrows=H->,bracketlength=-.6,tbarsize=-3mm]{2,1}{3,1}<{j
  \otimes (X+V)}
\ncline{3,2}{1,2}>{m^R}
\ncline{3,1}{3,2}_{\rule{4mm}{0mm}%
R_{\lambda,\Sig}\otimes [e^\dag,i^R]\rule{3mm}{0mm}}
\end{psmatrix}\]}
\end{defi}

\smallskip
\begin{exa}
\label{tt}
The equation for the
fixed-point combinator (see Example~\ref{jj}) with
$\Sig=\emptyset$
defines~$e$ whose domain is
the terminal presheaf~1, that is,
$e\colon1\to F_\lambda\otimes(1+V)$.
The solution
$e^\dag\colon1\to R_\lambda$
assigns to the unique element of~1 the tree~\eqref{rnjj}.
\end{exa}

\begin{rem}
\label{tp}
Recursion schemes such as $p_1=p_1$ make no sense---and they certainly
fail to have a unique solution. In general, we want to avoid
right-hand sides of the form~$p_i$. A recursion scheme is called
\textit{guarded} if no right-hand side lies in~$\Gamma_{nt}$.
(Theorem~\ref{hncss} below shows that no other restrictions are
needed.)
Guardedness can be formalized as follows: since
\[R_{\lambda,\Sig}=H_{\lambda,\Sig}(R_{\lambda,\Sig})+V\qquad
\text{with injections $\varrho_V$ and $i$}\]
by Remark~\ref{copr}, we have (see Remark~\ref{hndj}(i))
\[R_{\lambda,\Sig}\otimes(X+V)\cong H_{\lambda,\Sig}(R_{\lambda,\Sig})
\otimes(X+V)+X+V\]
with coproduct injections $\varrho_V\otimes\id_{X+V}$ and
$i\otimes\id_{X+V}$.
Then $e$~is guarded if its extension $(j \otimes (X +V))\tec e\colon X\to
R_{\lambda,\Sig}\otimes(X+V)$
factorizes through the embedding of
the first and third summand of this coproduct:
\end{rem}

\begin{defi}
\label{hncs}
A higher-order recursion scheme $e\colon
X\to F_{\lambda,\Sig}\otimes(X+V)$
is called \textit{guarded} if
$(j\otimes (X+V))\tec e$~factorizes through
\[\bigl[\varrho\otimes\id,(i\otimes\id)\tec\inr\bigr]\colon
H_\lambda(R_{\lambda,\Sig})\otimes(X+V)+V\to R_{\lambda,\Sig}\otimes
(X+V).\]
\end{defi}

\begin{thm}
\label{hncss}
Every guarded higher-order recursion scheme has a unique solution.\qed
\end{thm}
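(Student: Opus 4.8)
The plan is to reduce the statement to the general existence-and-uniqueness theorem for guarded equation morphisms in the rational monad $\rr_{\lambda,\Sig}$ proved in \cite{AMV1}, exactly along the lines indicated in Remark~\ref{td}(vi). First I would convert the recursion scheme $e\colon X\to F_{\lambda,\Sig}\otimes(X+V)$ into an equation morphism in the sense of \cite{AMV1}, that is, a morphism
$$\bar e\equiv X\nsi{e}F_{\lambda,\Sig}\otimes(X+V)\nsi{j\otimes(X+V)}R_{\lambda,\Sig}\otimes(X+V)\nsi{s^R}\rr_{\lambda,\Sig}(X+V),$$
where $s^R=s^R_{(V,\id),(X+V,\inr)}$ is the component of the strength of the rational monad (which exists by Theorem~\ref{nncs}), and where I use that $V$ is the tensor unit, so that its codomain $\rr_{\lambda,\Sig}\bigl(V\otimes(X+V)\bigr)$ is $\rr_{\lambda,\Sig}(X+V)$. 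Here $(V,\id)$ and $(X+V,\inr)$ are the evident pointed objects.

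Second, I would verify that guardedness of $e$ in the sense of Definition~\ref{hncs} makes $\bar e$ a guarded equation morphism, i.e.\ that $\bar e$ factorizes through
$$[\varrho_{X+V},\,\eta_{X+V}\tec\inr]\colon H_{\lambda,\Sig}\rr_{\lambda,\Sig}(X+V)+V\to\rr_{\lambda,\Sig}(X+V),$$
using the decomposition $\rr_{\lambda,\Sig}(X+V)=H_{\lambda,\Sig}\rr_{\lambda,\Sig}(X+V)+(X+V)$ of Remark~\ref{copr}. This is immediate from the explicit form of the strength recorded there (with monad argument $V$ and pointed argument $X+V$), namely $s^R=H_{\lambda,\Sig}s^R\tec s^H+\bigl(V\otimes(X+V)\bigr)$: on the first summand the guarding factorization of $(j\otimes(X+V))\tec e$ through $\varrho\otimes\id$ is carried by $s^H$ and then $H_{\lambda,\Sig}s^R$ into the ideal part $H_{\lambda,\Sig}\rr_{\lambda,\Sig}(X+V)$, while the $V$-summand (injected via $(i\otimes\id)\tec\inr$) is carried identically into the $V\subseteq X+V$ summand, i.e.\ through $\eta_{X+V}\tec\inr$. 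Since $X$ is finitely presentable, the general result of \cite{AMV1} then yields a unique solution $\bar e^\dag\colon X\to R_{\lambda,\Sig}$ of the guarded equation morphism $\bar e$, characterized abstractly by $\bar e^\dag=\mu_V\tec\rr_{\lambda,\Sig}[\bar e^\dag,i]\tec\bar e$.

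The real work, and the step I expect to be the main obstacle, is to match this abstract notion of solution with the one required by Definition~\ref{ntd}, which is phrased with the monoid multiplication $m\colon R_{\lambda,\Sig}\otimes R_{\lambda,\Sig}\to R_{\lambda,\Sig}$ rather than the monad multiplication $\mu$. The bridge is Corollary~\ref{nncss}, which identifies $m=\mu_V\tec s^R_{(V,\id),(R_{\lambda,\Sig},i)}$, together with naturality of $s^R$ in its pointed second variable. Concretely, for any $d\colon X\to R_{\lambda,\Sig}$ the copairing $[d,i]$ is a morphism $(X+V,\inr)\to(R_{\lambda,\Sig},i)$ in $V/\SetF$, and naturality of the strength gives
$$s^R_{(V,\id),(R_{\lambda,\Sig},i)}\tec\bigl(R_{\lambda,\Sig}\otimes[d,i]\bigr)=\rr_{\lambda,\Sig}[d,i]\tec s^R_{(V,\id),(X+V,\inr)}$$
after identifying $V\otimes(-)$ with the identity functor. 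Substituting the definitions of $\bar e$ and of $m$, this equality shows that the Definition~\ref{ntd} square commutes for $d$ if and only if $d$ satisfies the abstract solution equation $d=\mu_V\tec\rr_{\lambda,\Sig}[d,i]\tec\bar e$, i.e.\ if and only if $d$ is an abstract solution of $\bar e$. As $\bar e$ has exactly one such solution, the scheme $e$ has exactly one solution, which is the desired conclusion.

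Thus the only genuinely delicate points are the translation of guardedness through the coproduct form of $s^R$ in the second step and the strength-naturality computation in the third; both are routine once the explicit descriptions of $s^R$ (Remark~\ref{copr}) and of $m$ (Corollary~\ref{nncss}) are in hand, so no further structural input beyond \cite{AMV1} is needed.
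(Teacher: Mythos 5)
Your proposal is correct and follows essentially the same route as the paper: form $\bar e=s^R_{(V,\id),(X+V,\inr)}\tec(j\otimes(X+V))\tec e$, transfer guardedness via the coproduct form of $s^R$ from Remark~\ref{copr}, invoke Theorem~4.5 of \cite{AMV1} for the unique solution, and identify the two solution notions via Corollary~\ref{nncss} and naturality of the point-strength in the pointed second variable. The paper presents the last step as a single commutative diagram rather than an explicit equation, but the content is identical.
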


\begin{trivlist}
\item[\hspace{\labelsep}\textbf{Remark.}]
In Definition~\ref{tj} we restricted higher-order recursion schemes to
have $F_{\lambda,\Sig}$ in their codomain. This corresponds well to
the classical notion of recursion schemes as explained in
Remark~\ref{td}. Moreover, this leads to a simple presentation of the
interpreted semantics in Section~\ref{hp} below. However,
Theorem~\ref{hncss} remains valid if we
replace~$F_{\lambda,\Sig}$ by~$R_{\lambda,\Sig}$ in
Definition~\ref{tj} and define solution by $e^\dagger = m \cdot
R_{\lambda,\Sig} \otimes [e^\dagger, i] \cdot e$. This extends the
notion of a
higher-order recursion scheme~\eqref{hrjj}
to allow the right-hand sides~$f_i$ to be
rational $\lambda$-\Sig-terms. We
shall prove Theorem~\ref{hncss} working with higher-order
schemes of the form
$e\colon X\to R_{\lambda,\Sig}\otimes(X+V)$, $X$ finitely
presentable. We call $e$ guarded if it factorizes through
$[\varrho\otimes\id,(i\otimes\id)\tec\inr]$.
\end{trivlist}

\begin{proof}
Let us apply
the monad~$\rr_{\lambda,\Sig}$ and its point-strength~$s^R$ from
Theorem~\ref{nncs}. We construct for
every higher-order recursion scheme $e\colon
X\to\rr_{\lambda,\Sig}(V)\otimes(X+V)$ a rational equation
morphism $\bar{e}\colon X\to\rr_{\lambda,\Sig}(X+V)$ in the sense
of~\cite{AMV1} as follows:
\[\bar{e}\equiv X\nsi{e}\rr_{\lambda,\Sig}(V)\otimes(X+V)
\nsi{s^R_{(V,\id)(X+V,\inr)}}\rr_{\lambda,\Sig}(X+V).\]
From the guardedness of~$e$ we conclude that $\bar{e}$~is guarded in
the sense of~\cite{AMV1}, that is, $\bar{e}$~factorizes through the
summand
$H_{\lambda,\Sig}\rr_{\lambda,\Sig}(X+V)+V$ of
$\rr_{\lambda,\Sig}(X+V)$,
see Remark~\ref{copr}.
Indeed, this follows from the
following diagram\smallskip

\[\begin{psmatrix}[colsep=12mm,rowsep=10mm]
&H_{\lambda,\Sig}(R_{\lambda,\Sig})\otimes(X+V)+V&&
H_{\lambda,\Sig}(\rr_{\lambda,\Sig}(X+V))+V\\
X&R_{\lambda,\Sig}\otimes(X+V)&&\rr_{\lambda,\Sig}(X+V)
\psset{arrows=->,nodesep=3pt}
\everypsbox{\scriptstyle}
\ncline{1,2}{1,4}^{H_{\lambda,\Sig}s^R+\id}
\ncline[linestyle=dashed]{2,1}{1,2}
\ncline{2,1}{2,2}_{e}
\ncline{2,2}{2,4}_{s^R}
\ncline[arrows=H->,bracketlength=-.6,tbarsize=-3mm]{1,2}{2,2}
\ncline[arrows=H->,bracketlength=-.6,tbarsize=-3mm]{1,4}{2,4}
\end{psmatrix}\]\smallskip

\noindent
Consequently, by Theorem~4.5 in~\cite{AMV1} there exists a unique
solution~$\bar{e}^\dag$ of~$\bar{e}$ with respect to the
monad~$\rr_{\lambda,\Sig}$---this means that there exists
a unique morphism
$\bar{e}^\dag\colon X\to\rr_{\lambda,\Sig}(V)$
such that the outside of
the diagram below commutes:\smallskip

\[\begin{psmatrix}[colsep=5mm,rowsep=10mm]
X&&&&&&\rr_{\lambda,\Sig}(V)\\
\rr_{\lambda,\Sig}(V)\otimes(X+V)&&&&&
\rr_{\lambda,\Sig}(V)\otimes\rr_{\lambda,\Sig}(V)&\\
\rr_{\lambda,\Sig}(X+V)&&&&&&
\rr_{\lambda,\Sig}(V)(\rr_{\lambda,\Sig}(V))
\psset{arrows=->,nodesep=3pt}
\everypsbox{\scriptstyle}
\ncline{1,1}{1,7}^{\bar{e}^\dag}
\ncline{1,1}{2,1}<{e}
\ncline{2,1}{2,6}_{\rr_{\lambda,\Sig}(\id_V)\otimes[\bar{e}^\dag,i]}
\ncline{2,6}{1,7}<{m}
\ncline{2,1}{3,1}<{s_{(V,\id)(X+V,\inr)}^R}
\ncline{2,6}{3,7}<{s_{(V,\id)(R_{\lambda,\Sig},i)}^R}
\ncline{3,1}{3,7}_{\rr_{\lambda,\Sig}[\bar{e}^\dag,i]}
\ncline{3,7}{1,7}>{\mu_V}
\end{psmatrix}\]\smallskip

\noindent
Since the right-hand triangle is Equation~\eqref{multi}
and the lower
square commutes by the naturality of~$s^R$, we see that
$\bar{e}^\dag$ is a solution of~$e$ in the sense of
Definition~\ref{ntd} iff
$\bar{e}^\dag$~is a solution of~$\bar{e}$ in the sense of~\cite{AMV1}.
This proves that $e$~has a unique solution.
\end{proof}

\begin{rem}
\label{solution}
Analogously we could define a solution of a higher-order recursion
scheme of the form $e\colon X\to T_{\lambda,\Sig}\otimes(X+V)$ in the
initial completely iterative monoid~$T_{\lambda,\Sig}$, see
Theorem~\ref{hntss}. (Here we, moreover, do not need to assume that
$X$~is finitely presentable.)
And guardedness means here that $e$~factorizes through the summand
$H_{\lambda,\Sig}T_{\lambda,\Sig}\otimes(X+V)+V$ of
$T_{\lambda,\Sig}\otimes(X+V)$ (cf. Remarks \ref{R:monad-T}
and~\ref{hndj}(i)).
Every such guarded scheme has a unique solution
in~$T_{\lambda,\Sig}$. The proof is completely analogous to the
previous one for~$R_{\lambda,\Sig}$ but using Corollary~3.8
in~\cite{AAMV} in lieu of Theorem~4.5 from~\cite{AMV1}.
\end{rem}

\begin{rem}
\label{npdv}
Notice that the definitions and results in this section generalize to
the setting as considered in Section~\ref{snc} (see
Assumption~\ref{nncj}). Simply replace~$H_{\lambda,\Sig}$ by the
finitary functor~$H$, the monoid~$F_{\lambda,\Sig}$ by the initial
$H$-monoid~$F$; this exists and is given by the free $H$-algebra
on~$V$, see~\cite{FPT}. Further replace
the monoid~$R_{\lambda,\Sig}$ by the initial
iterative $H$-monoid~$RI$ (cf. Theorem~\ref{nncjn}), and
$T_{\lambda,\Sig}$~by the initial completely iterative
$H$-monoid (cf.
Theorem~\ref{nncjjjp}).
\end{rem}

\section{Interpreted Solutions}
\setcounter{equation}{0}
\label{hp}

\noindent In the present section we prove that every Scott model of
$\lambda$-calculus as a \CPO, $D$, with fold and unfold operations
can be used as a model of higher-order recursion. Following M.~Fiore
et al~\cite{FPT} we work with the presheaf~$\langle D,D\rangle$ which
to a context~$\Gamma$ assigns the set of all continuous functions
from~$D^\Gamma$ to~$D$. We prove that every higher-order recursion
scheme has a least solution in~$\langle D,D\rangle$.

We denote by $\cpob$ the cartesian closed
category of posets with directed joins
and continuous functions. Thus least elements are not assumed; if
they exist we use~$\perp$ for them.

\begin{asm}
\label{nnpj}
We assume that a Scott model $D$ of
$\lambda$-calculus is given, i.e., a CPO with $\bot$ and with an
embedding-projection pair
\begin{equation}
\label{eq:5.1}
\fold: \cpob(D,D)\triangleleft D : \unfold.
\end{equation}
Moreover, for the given signature~\Sig\ of terminals we also assume
that continuous operations
$$
\sigma^D:D^n\to D
\quad
\mbox{for every $n$-ary $\sigma$ in $\Sigma$}
$$
are given.
\end{asm}

\begin{notation}
\label{nnpd}
We define a presheaf~$\langle D,D\rangle$ by
$$
\spitze{D}\Gamma=\cpob(D^\Gamma,D).
$$
\end{notation}

\begin{rem}
\label{nnpt}\hfill
\begin{enumerate}[(a)]
\item
Observe that elements
of $\spitze{D}$ can always be interpreted in $D$: the
above function $\fold\colon\spitze{D}1\to D$ yields obvious
functions $\fold_\Gamma\colon\spitze{D}\Gamma\to D$ for all
contexts~$\Gamma$ via induction:
define~$\fold_{\Gamma+1}$ by\smallskip

\[
\begin{psmatrix}[colsep=20mm,rowsep=10mm]
\makebox[0pt][r]{$\langle D,D\rangle(\Gamma+1)\cong\mbox{}$}\cpob
(D^\Gamma\times D,D)&D\\
\cpob(D^\Gamma,D^D)&\cpob(D^\Gamma,D)
\psset{arrows=->,nodesep=3pt}
\everypsbox{\scriptstyle}
\ncline{1,1}{1,2}^{\fold_{\Gamma+1}}
\ncline{2,2}{1,2}>{\fold_\Gamma}
\ncline{1,1}{2,1}<{\curry}
\ncline{2,1}{2,2}_{\cpob(D^\Gamma,\fold)}
\end{psmatrix}
\]\smallskip

\noindent
where $\curry\colon\cpob(D^\Gamma \times D, D) \to
\cpob(D^\Gamma, D^D)$
is the currification.
\comment{
$$
\spitze{D}(\Gamma+1)\cong\cpob(D^\Gamma\times D,D)
\nsi{\curry}
\cpob(D^\Gamma,[D,D])
\nsi{\cpob(D^\Gamma,\fold)}
\cpob(D^\Gamma,D)
\nsi{\fold_\Gamma}
D
$$
}

\item
The projections $D^\Gamma\to D$ are continuous functions, that is,
elements of~$\langle D,D\rangle\Gamma$. This defines a natural
pointing
\[\iota\colon V\to\langle D,D\rangle\]
of the presheaf~$\langle D,D\rangle$.
\end{enumerate}
\end{rem}

\begin{rem}
\label{nnpc}
The presheaf $\spitze{D}$ is an $H_{\lambda,\Sigma}$-monoid.
Indeed, application and abstraction are naturally obtained
from~(\ref{eq:5.1}), see~\cite{FPT}.
The monoid structure
$$
m:\spitze{D}\otimes\spitze{D}\to\spitze{D}
$$
can be described directly by using the coend
formula~(\ref{eq:3.2}) or indirectly:

\begin{enumerate}[(a)]
\item
For the direct description consider the component
of~$m_\Gamma$ corresponding, for an element
$f\in\Set(\overline{\Gamma},\cpob(D^\Gamma,D))$, to the
injection
$$
\injection_f:
\cpob(D^{\overline{\Gamma}},D)
\to
\int^{\overline{\Gamma}}
\Set(\overline{\Gamma},\cpob(D^\Gamma,D))
\bullet
\cpob(D^{\overline{\Gamma}},D).
$$
Observe that $f$~yields a continuous function $\tilde{f}\colon
D^\Gamma\to D^{\bar{\Gamma}}$ defined by $\tilde{f}(x)=f({-})(x)$ for
all
$x\in\Gamma$. We define the component $m_\Gamma\tec\injection_f$
of~$m_\Gamma$ by\smallskip

\begin{equation}
\label{eq:5.2}
\rule{30mm}{0mm}
\myvc{
\begin{psmatrix}[colsep=10mm,rowsep=10mm]
\makebox[0pt][r]{$\langle D,D\rangle\otimes\langle
D,D\rangle(\Gamma)=\mbox{}$}\displaystyle\int^{\bar{\Gamma}}
\Set(\bar{\Gamma},\cpob(D^\Gamma,D))\bullet \cpob(D^{\bar{\Gamma}},D)&
\cpob(D^\Gamma,D)\makebox[0pt][l]{$\mbox{}=\langle
D,D\rangle\Gamma$}\\
\CPO(D^{\bar{\Gamma}},D)
\psset{arrows=->,nodesep=3pt}
\everypsbox{\scriptstyle}
\ncline{1,1}{1,2}^{\rule{22mm}{0mm}m_\Gamma}
\ncline{2,1}{1,1}<{\injection_f}
\ncline{2,1}{1,2}>{g\mapsto g\tec\tilde{f}}
\end{psmatrix}
}
\end{equation}\smallskip

\item
There is a much more elegant way of obtaining the
monoid structure of $\spitze{D}$. From results
of Steve Lack~\cite{lack} we see that the monoidal
category $(\Set^\FF,\otimes, V)$ has the following
monoidal action $*$ on $\cpob$: given $X$ in $\Set^\FF$
and $C$ in $\cpob$, we put
$X*C=\int^\Gamma X(\Gamma)\bullet C^\Gamma$.
Moreover, extending the above notation to pairs
$C$, $C'$ of CPO's and defining
$\langle C,C'\rangle\Gamma=\cpob(C^\Gamma,C')$
we obtain a presheaf with a natural isomorphism
$$
\Set^\FF (X,\langle C,C'\rangle)
\cong
\cpob(X*C,C').
$$
As observed by George Janelidze and Max Kelly~\cite{janelidze+kelly}
this yields an enriched category whose hom-objects are
$\langle C,C'\rangle$. In particular, $\spitze{D}$ receives
a monoid structure. It is tedious but not difficult to prove
that this monoid structure is given by~(\ref{eq:5.2})
above and it forms an $H_{\lambda,\Sigma}$-monoid
(cf.~Definition~\ref{hntc}).
\end{enumerate}
\end{rem}

\begin{notation}
\label{nnpp}
We denote by
$$
\sem{{-}}:F_{\lambda,\Sigma}\to\spitze{D}
$$
the unique $H_{\lambda,\Sigma}$-monoid homomorphism (see Theorem~\ref{hnts}). For
every finite term $t$ in context $\Gamma$ we thus obtain its
interpretation as a continuous function
$\sem{t}_\Gamma:D^\Gamma\to D$
\end{notation}

\begin{rem}
\label{nnps}
What is our intuition of an interpreted solution of higher order
recursion scheme
$e:X\to F_{\lambda,\Sigma}\otimes (X+V)$ in the presheaf
$\spitze{D}$? This should be an interpretation of $X$-terms
in $\spitze{D}$ via a natural transformation
$$
e^\dag:X\to\spitze{D}
$$
with the following property: Given an $X$-term $x$ in context
$\Gamma$, then $e_\Gamma$ assigns to it an element
$e_\Gamma(x)$ of
$(F_{\lambda,\Sigma}\otimes (X+V))(\Gamma)$, that is, a finite
term $t\in F_{\lambda,\Sigma}(\overline{\Gamma})$ for some
$\overline{\Gamma}\subseteq X(\Gamma)+\Gamma$. We request that
the solution assigns to $x$ the same value
$e^\dag_\Gamma (x):D^\Gamma\to D$ that we obtain from the
interpretation $\sem{t}_1$ of the given term by substituting
the $\overline{\Gamma}$-variables using
$[e^\dag,\iota]:X+V\to\spitze{D}$. This substitution is given by
composing $\sem{{-}}\otimes [e^\dag,\iota]$ with the monoid structure
of $\spitze{D}$. This leads to the following
\end{rem}

\begin{defi}
\label{nnpss}
Given a higher-order recursion scheme
$e:X\to F_{\lambda,\Sigma}\otimes (X+V)$
by an \textit{interpreted solution} is meant a presheaf
morphism $e^\dag:X\to\spitze{D}$ such that the square below commutes:
\begin{equation}
\label{eq:int_solution}
\vcenter{
  \xymatrix@C+2pc@R-.5pc{
    X \ar[r]^{e^\dag}
    \ar[d]_e
    &
    \spitze{D}
    \\
    F_{\lambda,\Sig}\otimes (X + V)
    \ar[r]_{\sem{{-}}\otimes[e^\dag,\iota]}
    &
    \spitze{D}\otimes\spitze{D}
    \ar[u]_m
    }
  }
\comment{
\myvc{
\begin{psmatrix}
X&\spitze{D}\\
F_{\lambda,\Sigma}\otimes (X+V)&\spitze{D}\otimes\spitze{D}
\psset{arrows=->,nodesep=3pt}
\everypsbox{\scriptstyle}
\ncline{1,1}{1,2}^{e^\dag}
\ncline{2,2}{1,2}>{m}
\ncline{1,1}{2,1}<{e}
\ncline{2,1}{2,2}_{\sem{{-}}\otimes [e^\dag,\iota]}
\end{psmatrix}
}}
\end{equation}
\end{defi}

\begin{thm}
\label{nnpo}
Every higher-order recursion scheme has a least interpreted solution in
$\spitze{D}$ in the pointwise ordering of $\SetF(X,\langle
D,D\rangle)$.
\end{thm}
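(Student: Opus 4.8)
The plan is to exhibit interpreted solutions as the fixed points of a Scott-continuous operator on a CPO of natural transformations, and then extract the least one by Kleene's fixed-point theorem. First I would equip the hom-set $\SetF(X,\spitze{D})$ with the pointwise order: $\varphi\le\psi$ iff $\varphi_\Gamma(x)\le\psi_\Gamma(x)$ in $\cpob(D^\Gamma,D)$ for all $\Gamma$ and all $x\in X(\Gamma)$. This poset has a least element, namely the transformation sending every $x$ to the constant function with value~$\perp$, and it has all directed joins, both computed pointwise. The only point to verify is that the pointwise constant-$\perp$ map and pointwise directed joins are again \emph{natural}; both follow because the action of $\spitze{D}$ on a morphism $\gamma\colon\Gamma\to\Gamma'$ is precomposition with the continuous map $D^\gamma\colon D^{\Gamma'}\to D^\Gamma$, and precomposition preserves $\perp$ and directed joins. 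Hence $\SetF(X,\spitze{D})$ is a CPO with least element.

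Next I would define the operator
\[
\Phi\colon\SetF(X,\spitze{D})\to\SetF(X,\spitze{D}),\qquad
\Phi(\varphi)=m\tec(\sem{-}\otimes[\varphi,\iota])\tec e,
\]
using the fixed interpretation $\sem{-}\colon F_{\lambda,\Sig}\to\spitze{D}$ of Notation~\ref{nnpp} and the monoid structure $m$ of the $H_{\lambda,\Sig}$-monoid $\spitze{D}$ from Remark~\ref{nnpc}. By Definition~\ref{nnpss} a morphism $e^\dag$ is an interpreted solution precisely when $e^\dag=\Phi(e^\dag)$, that is, when it is a fixed point of~$\Phi$. Consequently a \emph{least} interpreted solution is exactly a least fixed point of~$\Phi$, and it suffices to prove that $\Phi$ is Scott-continuous: Kleene's theorem then supplies the least fixed point as $\bigsqcup_{n<\omega}\Phi^n(\perp)$.

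The main work is the continuity of~$\Phi$, and here I would compute $\Phi(\varphi)_\Gamma(x)$ explicitly via the coend description~\eqref{eq:3.2} and the formula~\eqref{eq:5.2} for~$m$. Fix $\Gamma$ and $x\in X(\Gamma)$, and represent the element $e_\Gamma(x)\in(F_{\lambda,\Sig}\otimes(X+V))(\Gamma)$ by a pair $(f,t)$ with $t\in F_{\lambda,\Sig}(\bar\Gamma)$ and $f\colon\bar\Gamma\to(X+V)(\Gamma)$. Tracing this through $\sem{-}\otimes[\varphi,\iota]$ and then through~$m$ yields
\[
\Phi(\varphi)_\Gamma(x)=\sem{t}\tec\widetilde{\bigl([\varphi,\iota]_\Gamma\tec f\bigr)}\colon D^\Gamma\to D,
\]
where the tuple $\widetilde{([\varphi,\iota]_\Gamma\tec f)}\colon D^\Gamma\to D^{\bar\Gamma}$ has as its $v$-th component ($v\in\bar\Gamma$) the continuous function $[\varphi,\iota]_\Gamma(f(v))$. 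Since $(X+V)(\Gamma)=X(\Gamma)+\Gamma$, each such component is either a projection $\iota_\Gamma(f(v))$, independent of~$\varphi$, or $\varphi_\Gamma(f(v))$ with $f(v)\in X(\Gamma)$; in either case it depends on~$\varphi$ by a Scott-continuous assignment (an evaluation in the CPO of natural transformations). As tupling into the product CPO $D^{\bar\Gamma}$ and post-composition with the \emph{fixed} continuous map $\sem{t}$ are both Scott-continuous, the coordinate map $\varphi\mapsto\Phi(\varphi)_\Gamma(x)$ is Scott-continuous. Because directed joins in $\SetF(X,\spitze{D})$ are pointwise, $\Phi$ preserves them, and monotonicity is immediate; thus $\Phi$ is continuous.

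The step I expect to be the main obstacle is precisely this explicit unwinding of the substitution tensor $\sem{-}\otimes[\varphi,\iota]$ followed by the coend multiplication~$m$, together with the verification that the displayed dependence on~$\varphi$ is continuous; in particular one must check that the result is independent of the chosen representative $(f,t)$ by the coend relations, so that working with one representative suffices. Once continuity is established, Kleene's fixed-point theorem delivers the least fixed point $\bigsqcup_{n<\omega}\Phi^n(\perp)$, which is by construction the least interpreted solution in the pointwise ordering, completing the proof.
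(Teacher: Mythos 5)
Your proposal is correct and follows essentially the same route as the paper: both reduce the statement to Kleene's fixed-point theorem on the CPO $\SetF(X,\spitze{D})$ and establish Scott-continuity of the operator $\varphi\mapsto m\tec(\sem{-}\otimes[\varphi,\iota])\tec e$ via the coend formula~\eqref{eq:3.2} and the description~\eqref{eq:5.2} of $m$, ultimately resting on the continuity of $g$ and of the tupling map $f\mapsto\tilde f$. The only cosmetic difference is that the paper factors the operator into three maps and checks continuity of the middle factor $z\mapsto m\tec(\id\otimes z)$ on the collectively epimorphic cocone of coend injections, whereas you compute directly on a chosen representative $(f,t)$ and note the needed independence of representative.
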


\begin{proof}
Observe that $\Set^\FF(X,\spitze{D})$ is a CPO with $\bot$.
Therefore it is sufficient to prove that the endomap of
$\Set^\FF(X,\spitze{D})$ given by
\begin{equation}
\label{nnrpc}
s\mapsto m\cdot (\sem{{-}}\otimes [s,\iota])\cdot e
\end{equation}
is continuous, then we can use the
Kleene Fixed-Point Theorem. Observe that the function~\eqref{nnrpc} is
a composite\smallskip

\begin{equation}
\label{nnrpp}
\myvc{
\begin{psmatrix}[colsep=30mm,rowsep=10mm]
\SetF(X,\langle D,D\rangle)&\SetF(X,\langle D,D\rangle)\\
\SetF(X+V,\langle D,D\rangle)&
\SetF(F_{\lambda,\Sig}\otimes(X+V),\langle D,D\rangle)
\psset{arrows=->,nodesep=3pt}
\everypsbox{\scriptstyle}
\ncline{1,1}{1,2}^{s\mapsto m\tec(\sem{{-}}\otimes[s,\iota])\tec e}
\ncline{1,1}{2,1}<{[{-},\iota]}
\ncline{2,2}{1,2}>{{-}\tec e}
\ncline{2,1}{2,2}_{z\mapsto m\tec(\id_{F_{\lambda,\Sig}}\otimes
z)\rule{10mm}{0mm}}
\end{psmatrix}
}
\end{equation}\smallskip

\noindent
where the vertical arrows are obviously continuous. It remains to
prove the continuity of
\[z\mapsto m\tec(\id\otimes z)\qquad\text{for $z\colon Z\to\langle
D,D\rangle$}\]
where $Z=X+V$ (but this structure of~$Z$ plays no role).

We use the coend formula~\eqref{eq:3.2}. It is obvious what the
components of
\[(\id\otimes z)_\Gamma\colon \int^{\bar{\Gamma}}
\Set(\bigl(\bar{\Gamma},Z(\Gamma)\bigr)\bullet\langle D,D\rangle
\bar{\Gamma}\to \int^{\bar{\Gamma}}
\Set\bigl(\bar{\Gamma},\langle D,D\rangle\Gamma\bigr)\bullet \langle
D,D\rangle\bar{\Gamma}\]
are: the $\bar{\Gamma}$-component composed with the coproduct
injection of $u\in\Set(\bar{\Gamma},Z(\Gamma))$ yields the coproduct
injection of
\[f=z_\Gamma\tec u\in\Set\bigl(\bar{\Gamma},\langle
D,D\rangle\Gamma\bigr).\]
Combined with the component of~$m_\Gamma$, see~\eqref{eq:5.2}, this
yields the components of~$(m\tec(\id\otimes z))_\Gamma$ as follows:
for every context~$\bar{\Gamma}$ the component composed with the
coproduct injection~$\injection_u$ of~$u$ is the map
\[g\mapsto g\tec\widetilde{z_\Gamma\tec u}\qquad\text{for
$g\in\langle D,D\rangle\bar{\Gamma}$.}\]\smallskip

\noindent We are ready to prove that the lower horizontal arrow
of~\eqref{nnrpp} is continuous. Suppose
\[z=\bigsqcup_{k\in K} z^k\]
is a directed join in the pointwise ordering of $\SetF(X+V,\langle
D,D\rangle)$. In order to prove the equality
\[m\tec(\id\otimes z)=\bigsqcup_{k\in K} m\tec(\id\otimes z^k)\]
in~\SetF, we choose an arbitrary context~$\Gamma$ and prove that
\[m_\Gamma\tec(\id\otimes z_\Gamma)=\bigsqcup_{k\in K} m_\Gamma\tec
(\id\otimes z^k_\Gamma)\]
holds in~\Set. For that use the fact that the injection maps, for all
contexts~$\bar{\Gamma}$ and all $u\in\Set(\bar{\Gamma},Z(\Gamma))$,
\[\langle D,D\rangle\bar{\Gamma}\nsi{\injection_u}
\Set\bigl(\bar{\Gamma},Z(\Gamma)\bigr)\bullet\langle D,D\rangle
\bar{\Gamma}\nsi{\injection_{\bar{\Gamma}}} \int^{\bar{\Gamma}}
\Set\bigl(\bar{\Gamma},Z(\Gamma)\bigr)\bullet\langle
D,D\rangle\bar{\Gamma}\]
form a collectively epimorphic cocone. Thus, it is sufficient to
prove that for every context~$\bar{\Gamma}$ and every $g\in\langle
D,D\rangle\bar{\Gamma}$ we have
\[g\tec\widetilde{z_\Gamma\tec u}=\bigsqcup_{k\in K}
g\tec\widetilde{z^k_\Gamma\tec u}.\]
Since $g$~is continuous from~$D^{\bar{\Gamma}}$ to~$D$, we just need
to verify
\[\widetilde{z_\Gamma\tec u}=\bigsqcup_{k\in K}
\widetilde{z^k_\Gamma\tec u}.\]
From the pointwise ordering we clearly get $z_\Gamma\tec
u=\bigsqcup_{k\in K} z^k_\Gamma\tec u$, thus, we only need to observe
the continuity of the map $f\mapsto\tilde{f}$, and this follows from
the
coordinate-wise ordering of~$D^\Gamma$.
\end{proof}

\section{Conclusions}
\setcounter{equation}{0}
\label{p}

\noindent We proved that guarded higher-order
recursion schemes have a unique
uninterpreted solution, i.e., a solution as a rational
$\lambda$-\Sig-term. And they also have the least
interpreted solution for interpretations based on
Scott's models of $\lambda$-calculus as CPO's with
continuous operations for all ``terminal'' symbols
of the recursion scheme.

Following M.~Fiore \textit{et al}~\cite{FPT} we worked in the
category~\SetF\ of sets in context, that is, covariant presheaves on
the category~\FF\ of finite sets and functions.
A presheaf is a
set dependent on a context (a finite set of variables). For
every signature~\Sig\ of ``terminal'' operation symbols it was proved
in~\cite{FPT} that the presheaf~$F_{\lambda,\Sig}$ of all finite
$\lambda$-\Sig-terms is the initial
$H_{\lambda,\Sig}$-monoid. This means that $F_{\lambda,\Sig}$~has
(i)~the $\lambda$-operations (of abstraction and application)
together with the operations given by~\Sig\ rendering an
$H_{\lambda,\Sig}$-algebra, (ii)~the operation expressing
simultaneous substitution rendering a monoid in the category of
presheaves, and (iii)~these two structures are canonically related.
And $F_{\lambda,\Sig}$~is the initial presheaf with such
structure. R.~Matthes and T.~Uutalu~\cite{MU} showed
that the
presheaf~$T_{\lambda,\Sig}$ of finite and infinite
$\lambda$-\Sig-terms is also an $H_{\lambda,\Sig}$-monoid. Here we
proved that this is
the initial completely iterative $H_{\lambda,\Sig}$-monoid.
And its subpresheaf~$R_{\lambda,\Sig}$
of all rational $\lambda$-\Sig-terms is the
initial iterative $H_{\lambda,\Sig}$-monoid.
We used that last
presheaf in our uninterpreted semantics of recursion schemes.

Our approach was based on
untyped $\lambda$-calculus. The ideas in the typed
version are quite analogous. If $S$~is the set of all types, then we
form the full subcategory~\FF\ of~$\Set^S$ of finite $S$-sorted sets
and consider presheaves in~$(\Set^S)^{\FF}$---the latter category is
equivalent to that of
finitary endofunctors of the category~$\Set^S$.
The
definition of~$H_{\lambda,\Sig}$ is then completely
analogous to the untyped case, and one can form the
presheaves~$F_{\lambda,\Sig}$ (free algebra on~$V$),
$T_{\lambda,\Sig}$ (free completely iterative algebra)
and~$R_{\lambda,\Sig}$ (free iterative algebra). Each of them is
a monoid, in fact, an $H_{\lambda,\Sig}$-monoid in the sense
of~\cite{FPT}. Moreover, every guarded higher-order recursion scheme
has a unique solution in~$R_{\lambda,\Sig}$.
The interpreted semantics can be built up on a
\CPO-enriched cartesian closed category (as our model of typed
$\lambda$-calculus) with additional continuous morphisms for all
terminals. The details of the typed version are more involved, and
we leave them for future work.

Related results on higher-order substitution can be found e.g. in
\cite{MU} and~\cite{P}.

In future work we will, analogously as in~\cite{MM}, investigate the
relation of uninterpreted and interpreted solutions.

\end{document}